\newenvironment{citemize}
{\begin{compactitem}}
{\end{compactitem}}
\newenvironment{cenumerate}
{\begin{compactenum}}
{\end{compactenum}}
\def\e#1{\emph{#1}}
\newcommand{\eat}[1]{}
\newcommand{\eqdef}{\overset{\mathrm{def}}{=\joinrel=}}
\newenvironment{citedtheorem}[1]
{\begin{theorem}{\it\e{(#1)}}\,\,}
{\end{theorem}}
\def\nodes{\mathsf{V}}
\def\edges{\mathsf{E}}
\def\set#1{\mathord{\{#1\}}}
\def\T{\mathcal{T}}
\def\angs#1{\mathord{\langle#1\rangle}}
\newenvironment{repeatresult}[2]
{\vskip0.5em\par\textsc{#1} #2.\em}
{\vskip1em}
\newenvironment{repproposition}[1]{\begin{repeatresult}{Proposition}{#1}}{\end{repeatresult}}
\newenvironment{replemma}[1]{\begin{repeatresult}{Lemma}{#1}}{\end{repeatresult}}
\newcommand{\algname}[1]{{\sf #1}}
\def\asn{\mathbin{{:}{=}}}
\def\myrulewidth{3.20in}
\def\therule{\rule{\myrulewidth}{0.2pt}}
\newenvironment{algseries}[2]
{\centering\begin{figure}[#1]\begin{center}\def\thecaption{\caption{#2}}
\begin{tabular}{p{\myrulewidth}}\therule\end{tabular}\vskip0.2em}
{\thecaption\end{center}\end{figure}}
\newenvironment{insidealg}[2]
{\normalsize\begin{insidecode}{#1}{#2}{Algorithm}}
{\end{insidecode}}
\newenvironment{insidecode}[3]
{
\begin{tabular}{p{\myrulewidth}}
\multicolumn{1}{c}{\rule{0mm}{3mm}{\bf #3} $\algname{#1}(\mbox{#2})$\vspace{-0.6em}}\\
\therule\vskip-0.8em\therule
\vspace{-1em}
\begin{algorithmic}[1]}
{\end{algorithmic}
\vskip-0.3em\therule
\end{tabular}}
\newcommand{\induced}[2]{#1[#2]}
\newcommand{\HByPMC}[2]{H_{#1}(#2)}
\def\maxcliques{\mathit{MaxClq}}
\def\minseps{\mathit{MinSep}}
\def\bags{\mathrm{bags}}
\def\pmcs{\mathit{PMC}}
\def\fblocks{\mathit{Blck}}
\def\width{\mathrm{width}}
\def\fillin{\mathrm{fill\mbox{-}in}}
\def\K{\mathcal{K}}
\def\R{\mathcal{R}}
\def\P{\mathsf{P}}
\def\G{\mathbf{G}}
\newcommand{\noamdone}[1]{}
\theoremstyle{plain}
\newtheorem{theorem}{Theorem}[section]
\newtheorem{proposition}[theorem]{Proposition}
\newtheorem{lemma}[theorem]{Lemma}
\newtheorem{corollary}[theorem]{Corollary}
\def\partitle#1{\subsubsection*{\noindent\underline{#1}}}
\begin{document}

\title{Ranked Enumeration of Minimal Triangulations}
\subtitle{Tracks: Exploration, Experimental}

\author{Noam Ravid}
\affiliation{
\institution{Technion}
\city{Haifa}
\country{Israel}
\postcode{32000}}
\email{noamrvd@cs.technion.ac.il}

\author{Dori Medini}
\affiliation{
\institution{Technion}
\city{Haifa}
\country{Israel}
\postcode{32000}}
\email{dorimedi@campus.technion.ac.il}

\author{Benny Kimelfeld}
\affiliation{
\institution{Technion}
\city{Haifa}
\country{Israel}
\postcode{32000}}
\email{bennyk@cs.technion.ac.il}


\begin{abstract}
  A tree decomposition of a graph facilitates computations by grouping
  vertices into bags that are interconnected in an acyclic structure;
  hence their importance in a plethora of problems such as query
  evaluation over databases and inference over probabilistic graphical
  models. The relative benefit from different tree decompositions is
  measured by diverse (sometime complex) cost functions that vary from
  one application to another. For generic cost functions like width
  and fill-in, an optimal tree decomposition can be efficiently
  computed in some cases, notably when the number of minimal
  separators is bounded by a polynomial (due to Bouchitte and
  Todinca); we refer to this assumption as ``poly-MS.'' To cover the
  variety of cost functions in need, it has recently been proposed to
  devise algorithms for enumerating many decomposition candidates for
  applications to choose from using specialized, or even
  machine-learned, cost functions.
 
  We explore the ability to produce a large collection of ``high
  quality'' tree decompositions. We present the first algorithm for
  ranked enumeration of the proper (non-redundant) tree
  decompositions, or equivalently minimal triangulations, under a wide
  class of cost functions that substantially generalizes the above
  generic ones.  On the theoretical side, we establish the guarantee
  of polynomial delay if poly-MS is assumed, or if we are interested
  in tree decompositions of a width bounded by a constant. We describe
  an experimental evaluation on graphs of various domains (including
  join queries, Bayesian networks, treewidth benchmarks and random),
  and explore both the applicability of the poly-MS assumption and the
  performance of our algorithm relative to the state of the art.
\end{abstract}

\maketitle

\section{Introduction}
A \e{tree decomposition} of a graph $G$ is a tree $\T$ such that each
vertex of $\T$ is associated with a \e{bag} of vertices of $G$, every
edge of $G$ appears in at least one bag, and every vertex of $G$
occurs in a connected subtree of $\T$. Tree decompositions are useful
in common scenarios where problems are intractable on general
structures, yet tractable on acyclic ones.
A \e{beneficial} tree decomposition allows for efficient computation.
This benefit is typically estimated by a cost function, most popular
being the \e{width}---the cardinality of the largest bag (minus one),
and the \e{fill in}--the number of missing edges among bag neighbors.
The generalization to hypergraphs, \e{generalized hypertree
  decomposition}, is a tree decomposition of the \e{primal} graph
(consisting of an edge between hyperedge neighbors) along with a cover
of each bag by hyperedges, giving rise to specialized
costs~\cite{DBLP:conf/wg/GottlobGMSS05} such as (\e{generalized})
\e{hypertree
  width}~\cite{GOTTLOB2002579,DBLP:journals/jacm/GottlobMS09}, and
\e{fractional hypertree width}~\cite{DBLP:journals/talg/Marx10}.  The
applications of (ordinary and generalized) tree decompositions include
optimization of join queries in
databases~\cite{DBLP:conf/sigmod/TuR15,DBLP:journals/jair/GottlobGS05},
solvers for constraint satisfaction
problems~\cite{DBLP:journals/jcss/KolaitisV00}, RNA analysis in
bioinformatics~\cite{DBLP:conf/wabi/ZhaoMC06}, computation of Nash
equilibria in game theory~\cite{DBLP:journals/jair/GottlobGS05},
inference in probabilistic graphical models~\cite{LauSpi-JRS88}, and
weighted model counting~\cite{DBLP:conf/sum/KenigG15}.

Computing an optimal tree decomposition is NP-hard for the classic
cost measures as the aforementioned ones.  Therefore, heuristic
algorithms are often
used~\cite{Berry:2002:MCS:647683.732496,BERRY200633}.  But even
regardless of the computational hardness, applications often require
specialized costs that are not covered by the classics. For instance,
for weighted model counting there are costs associated with the
``CNF-tree'' of the
formula~\cite{DBLP:conf/sum/KenigG15,DBLP:conf/wg/GottlobGMSS05}.  In
the work of Kalinsky et al.~\cite{DBLP:conf/edbt/KalinskyEK17} on
database join optimization, the execution cost is dominated by the
effectiveness of the \e{adhesions} (intersection of neighboring bags)
for caching, particularly the associated \e{skew}. They show real-life
scenarios where isomorphic tree decompositions (of minimum width)
feature orders-of-magnitude difference in performance.
Mediero~\cite{mediero2017search} aims at minimizing the size of AND/OR
trees by seeking decompositions of a low \e{height}.  Abseher et
al.~\cite{DBLP:journals/jair/AbseherMW17} designed a machine-learning
framework to learn the cost function of a tree decomposition in
various problems, using various features of the tree decomposition.

Motivated by the above need, Carmeli et
al.~\cite{DBLP:conf/pods/CarmeliKK17} embarked on the challenge of
\e{enumerating} tree decompositions; that is, generating tree
decompositions one by one so that an application can stop the
enumeration at any time and select the decomposition that best suits
its needs.  As they point out, it is essential to avoid of
\e{redundancy}.  For example, if a graph is already a tree, then there
is no need to further group its vertices. Hence, following Carmeli et
al.~\cite{DBLP:conf/pods/CarmeliKK17}, we consider the task of
enumerating the \e{proper} tree decompositions, which are intuitively
the ones that cannot be improved by splitting a bag or removing it
altogether. They showed that these tree decompositions are precisely
the \e{clique trees} of the \e{minimal triangulations}.  A
\e{triangulation} of a graph $G$ is a chordal graph $H$ obtained from
$G$ by adding edges, called \e{fill edges}.  A triangulation $H$ is
\e{minimal} if no triangulation $H'$ has a strict subset of the fill
edges. Carmeli et al.~\cite{DBLP:conf/pods/CarmeliKK17} proved that
for enumerating tree decompositions, it suffices to enumerate the
minimal triangulations.  See Mediero~\cite{mediero2017search} for an
application of Carmeli et al.~\cite{DBLP:conf/pods/CarmeliKK17}.

While algorithms for generating pools of tree decompositions have been
proposed in the past for small graphs (representing database
queries)~\cite{DBLP:conf/sigmod/TuR15}, Carmeli et
al.~\cite{DBLP:conf/pods/CarmeliKK17} have presented the first
algorithm that has both completeness and efficiency guarantees; that
is, it can generate \e{all} minimal triangulations (and by implication
all proper tree decompositions), and it does so in \e{incremental
  polynomial time}, which means that the time taken for producing the
$N$th result is polynomial in $N$ and in the size of the
input~\cite{DBLP:journals/ipl/JohnsonP88}.  Nevertheless, there can be
exponentially many minimal triangulations, and an effective
enumeration needs to produce earlier the triangulations that are
likely to be low cost. In other words, we would like the algorithm to
enumerate the minimal triangulations by increasing relevant cost such
as width (of some version) or fill-in. In turn, the application will
evaluate the complex cost function on each generated triangulation,
and at a point of choice it will stop the enumeration and pick the
best result found.  Carmeli et al.~\cite{DBLP:conf/pods/CarmeliKK17}
use heuristics to affect the enumeration order, but provide no
guarantees. Without making assumptions it is impossible to guarantee
efficient ranked enumeration for costs, as it is already NP-hard to
compute the first (best) triangulation.

Yet, for some classes of graphs, there is a polynomial-time algorithm
for computing a tree decomposition of a minimum weight/fill-in. These
include the (weakly) chordal graphs, interval graphs, circular-arc
graphs, and cographs. These examples have the property we refer to
\e{poly-MS}: having a polynomial 
number of \e{minimal
  separators}~\cite{DBLP:journals/siamcomp/FominTV15}.  A minimal
separator of a graph is a vertex set $S$ such that some vertices $u$
and $v$ are separated by $S$, but no proper subset of $S$ separates
$u$ from $v$. Various problems have been studied in the context of the
poly-MS assumption~\cite{DBLP:journals/siamcomp/FominTV15}, including
graph isomorphism~\cite{DBLP:conf/swat/OtachiS14}.  Assuming poly-MS,
a tree decomposition of a minimum weight or fill-in can be computed in
polynomial time~\cite{mintriangalg,DBLP:journals/tcs/BouchitteT02}.

The decomposition algorithm of Bouchitt{\'e} and
Todinca~\cite{mintriangalg,DBLP:journals/tcs/BouchitteT02} consists of
two main steps. First, they construct the set of minimal separators of
the input graph, for example using the algorithm of Berry et
al.~\cite{conf/wg/BerryBC99}, and from these compute the set of  all
\e{potential maximal cliques} (which are essentially the bags of the
proper tree decompositions)~\cite{DBLP:journals/tcs/BouchitteT02}.
Second, they use the potential maximal cliques in order to find an optimal
triangulation. In fact, their algorithm has two variants---one for
minimal width (\e{tree-width}) and one for minimal fill-in. The second step has
been later generalized to allow for positive weights on bags (in the case
of width) and edges (in the case of fill) by Furuse and
Yamazaki~\cite{DBLP:journals/tcs/FuruseY14}, again presenting two
corresponding variants of their algorithm.

Our first contribution is a generalization of the concepts of width
and fill-in to general cost functions over tree decompositions. These
cost functions satisfy two properties. First, they assign the same
cost to tree decompositions with the same bags; hence, these are
essentially costs over the set of bags. Second, and more importantly,
they are \e{monotonic} in the following (informal) sense. Suppose that
we cut a tree decomposition $\T$ along an edge, and replace one of the
sides with an alternative subtree (which is a tree decomposition of a
subgraph of the original graph), resulting in a tree decomposition
$\T'$; if the altenative subtree does not cost more than the one it
replaced, then the cost of $\T'$ is no greater than that of $\T$. We
call such a cost function \e{split monotone}, and refer the reader to
Section~\ref{sec:monotonic} for the precise definition.
Split-monotone cost functions generalize existing costs such as
fill-in, width and generalized/fractional hypertree width, as well as
the weighted width and fill-in of Furuse and
Yamazaki~\cite{DBLP:journals/tcs/FuruseY14}. Moreover, we can come up
with various motivated split-monotone costs that are not among the
classic ones, such as the sum over the (exponents of the) bag
cardinalities and linear combinations of width and fill-in.  We
present a generalization of the algorithm of Bouchitt{\'e} and
Todinca~\cite{mintriangalg} to general split-monotone cost
functions. As we explain later, the importance of supporting general
cost functions is not just for the sake of a richer costs; even if we
are interested just in width or fill-in, we need the flexibility of
the cost function in order to incorporate \e{constraints} that we
later use to devise our algorithm for ranked enumeration.

Our main theoretical contributions are algorithms that enumerate
minimal triangulations by increasing cost, for any split-monotone cost
function that is polynomial-time computable (e.g., the aforementioned
ones).  The first algorithm enumerates \e{all} minimal triangulations,
and does so with polynomial delay if the input is from a poly-MS class
of graphs. The second enumerates all minimal triangulations of a
\e{bounded width}, and it does so with polynomial delay if the bound
on the width is a fixed constant. 
\e{Polynomial delay}~\cite{DBLP:journals/ipl/JohnsonP88} means that
the time between every two consecutive answers is polynomial in the
size of the input (graph), a guarantee that is stronger than
incremental polynomial time.  Due to the previously discussed
connection between proper tree decompositions and minimal
triangulations, we get algorithms with the same guarantees for the
enumeration of proper tree decompositions.  Observe that these
algorithms imply polynomial-time procedures for computing top-$k$
minimal triangulations and/or proper tree decompositions.  To the best
of our knowledge, these are the first enumeration algorithms for
minimal triangulations (and proper tree decompositions) with
completeness, efficiency, \e{and order} guarantees.


Our ranked enumeration algorithm adapts the generic procedure of
Lawler-Murty~\cite{LAWLER,MURTY}.  For this
deployment, we use a result by Parra and
Scheffler~\cite{DBLP:journals/dam/ParraS97} who show that a minimal
triangulation is fully identified by its set of minimal separators.
To adopt Lawler-Murty, we rephrase the task as that of enumerating the
relevant sets of minimal separators. In turn, using the technique of
Lawler-Murty we reduce this enumeration to the task of finding an
optimal minimal triangulation, under the cost function, constrained on
\e{including} a given set of minimal separators and \e{excluding} a
given set of other minimal separators.
Towards that, we show that these constraints can be compiled into any
split-monotone cost function so that the resulting cost remains split
monotone. Furthermore, if the original cost function can be computed
in polynomial time, then so can the new cost function with the
constraints compiled in.

\eat{
This
procedure can be described abstractly as follows. There is a set of
items, and the goal of the procedure is to enumerate itemsets by
increasing cost. To do so, the procedure assumes that one can compute
in polynomial time a lowest-cost itemset subject to \e{constraints},
and there are two types of constraints: \e{inclusion constraint}---a
specific item needs to be present in the itemset, and an \e{exclusion
  constraint}---the item needs to be absent. So, for our deployment,
we need to define what the items and itemsets are, and we need to
solve the corresponding constrained optimization problem.

Here, we use a result by Parra and
Scheffler~\cite{DBLP:journals/dam/ParraS97} who show that a minimal
triangulation is fully identified by its set of minimal separators.
Moreover, due to Rose~\cite{rose1970triangulated} it is known that a
minimal triangulation has fewer minimal separators than vertices. Hence,
to adopt Lawler-Murty we define items as vertex sets, and itemsets as
the collections of minimal separators of the minimal
triangulations. To efficiently solve the constrained optimization
problem, we show that inclusion and exclusion constraints on minimal
separators can be complied into any split-monotone cost function so
that the resulting cost remains split monotone. Furthermore, if the
original cost function can be computed in polynomial time, then so can
the new cost function with the constraints compiled in.
}

Finally, we describe an implementation of our algorithm and an
experimental study. We conduct experiments over the datasets of
Carmeli et al.~\cite{DBLP:conf/pods/CarmeliKK17} that consist of three
types of graphs: probabilistic graphical models (from the 2011
\e{Probabilistic Inference Challenge}), database queries (TPC-H), and
random (Erd\H{o}s-R\'enyi) graphs. We also conduct experiments on
graphs from the PACE 2016 competition on tree-width computation.  We
compare our algorithm to the enumeration of Carmeli et
al.~\cite{DBLP:conf/pods/CarmeliKK17} on both the execution time and
the quality (width/fill) of the generated triangulations. In addition,
we explore the validity of the poly-MS assumption on our datasets;
that is, we provide statistics on the number of minimal separators,
and explore the portion of the instances where this number is
``manageable.''

The remainder of the paper is organized as follows. We first present
preliminary definitions and terminology in
Section~\ref{sec:preliminaries}. In Section~\ref{sec:monotonic}, we
describe the notion of split monotonicity. We give our main
theoretical results in Section~\ref{sec:theory}, and present the
algoritms that realize the results
Sections~\ref{sec:mintriang}--\ref{sec:enum}. Specifically,
Section~\ref{sec:mintriang} presents our algorithm for computing a
minimum-cost minimal triangulation, and Section~\ref{sec:enum}
discusses the adaptation of Lawler-Murty to our enumeration. Finally,
we describe our implementation and experimental study in
Section~\ref{sec:exper}, and conclude in
Section~\ref{sec:conclusions}. Due to space limitation, some proofs
are in the Appendix.

\section{Preliminaries}\label{sec:preliminaries}
We begin by introducing the basic notation, terminology and formal
concepts that we use throughout the paper. For convenience,
Table~\ref{table:notation} contains important notation that we present
here and later.

\partitle{Graphs and Cliques}
All the graphs in this paper are undirected. We denote by $\nodes(G)$
and $\edges(G)$ the set of vertices and edges, respectively, of a
graph $G$. An edge in $\edges(G)$ is a pair $\set{u,v}$ of distinct
vertices in $\nodes(G)$. The \e{union} of two graphs $G_1$ ad $G_2$,
denoted $G_1 \cup G_2$, is the graph $G$ with
$\nodes(G)=\nodes(G_1)\cup\nodes(G_2)$ and
$\edges(G)=\edges(G_1)\cup\edges(G_2)$.

A set $C$ of vertices of a graph $G$ is a \e{clique} (\e{of $G$}) if
every two vertices in $C$ are connected by an edge of $G$. The set $C$
is a \e{maximal clique} (\e{of $G$}) if $C$ is not strictly contained
in any other clique of $G$. We note by $\maxcliques(G)$ the set of all 
maximal cliques of $G$. 

Let $G$ be a graph, and $U$ a set of vertices of $G$. We denote by
$\K_U$ is the \e{complete} graph over a vertex set $U$; that is,
$\K_U$ is the graph with $\nodes(K_U)=U$ and $\edges(\K_U) =
\set{\set{u,v}\subseteq U\mid u\neq v}$ (hence, $U$ itself is a clique
of $\K_U$). By \e{saturating $U$ (in $G$)} we refer to the operation
connecting every non-adjacent vertices in $U$ by a new edge, thereby
making $U$ a clique of $G$. In other words, saturating $U$ refers to
the operation of replacing $G$ with $G\cup\K_U$.


A \e{subgraph} of a graph $G$ is a graph $G'$ with
$\nodes(G')\subseteq\nodes(G)$ and $\edges(G')\subseteq\edges(G)$.
Let $U \subseteq \nodes(G)$ be a set of vertices of $G$. We denote by
$\induced{G}{U}$ the subgraph of $G$ that is \e{induced} by $U$; that
is, $\induced{G}{U}$ is the graph $G'$ with $\nodes(G')=U$ and
$\edges(G')=\set{e\in\edges(G)\mid e\subseteq U}$.  Let $G$ be a
graph, and $U$ a set of vertices of $G$.  We denote by $G \setminus U$
the graph obtained from $G$ by removing all vertices in $U$ (along
with their incident edges); that is, $G\setminus U$ is the graph
$\induced{G}{\nodes(G)\setminus U}$.

\partitle{Tree decompositions}
A \e{tree decomposition} $\T$ of a graph $G$ is a pair $(T,\beta)$,
where $T$ is a tree and $\beta:\nodes(T)\rightarrow2^{\nodes(G)}$ is a
function that maps every node of $T$ to a set of vertices of $G$, so
that all of the following hold.
\begin{citemize}
\item Vertices are covered: for every vertex $u$ of $G$ there is a
  vertex $v$ of $T$ such that $u\in\beta(v)$.
\item Edges are covered: for every edge $e$ of $G$ there is a vertex
  $v$ of $T$ such that $e\subseteq\beta(v)$.
\item The \e{junction-tree} property: for all vertices $u$ and $v$ of
  $T$, the intersection $\beta(u)\cap\beta(v)$ is contained in every
  vertex along the path between $u$ and $v$.
\end{citemize}

Let $G$ be a graph, and let $\T=(T,\beta)$ be a tree decomposition of
$G$. A set $\beta(v)$, for $v\in\nodes(T)$, is called a \e{bag} of
$\T$. We denote by $\bags(\T)$ the set $\set{\beta(v)\mid
  v\in\nodes(\T)}$. 

Let $\T_1=(T_1,\beta_1)$ and $\T_2=(T_2,\beta_2)$ be two tree
decompositions of a graph $G$. We say that $\T_2$ \e{bag-contains}
$\T_1$ if there is an injection
$\varphi:\nodes(T_1)\rightarrow\nodes(T_2)$ such that
$\beta_1(v)=\beta_2(\varphi(v))$ for all $v\in\nodes(T_1)$.  We say
that $\T_1$ and $\T_2$ are \e{bag equivalent} if $\T_2$ bag-contains
$\T_1$ and vice versa. 
\noamdone{Switch the rest with a shorter definition of a proper tree
  decomposition?}  We say that $\T_1$ \e{strictly subsumes} $\T_2$ if
$\T_1$ is obtained from $\T_2$ by splitting a bag or removing it
altogether. More formally, $\T_1$ strictly subsumes $\T_2$ if there is
a mapping $\varphi:\nodes(T_1)\rightarrow\nodes(T_2)$ such that
$\beta_1(x)\subseteq\beta_2(\varphi(x))$ for all $x\in\nodes(T_1)$,
and for at least one $y\in\nodes(T_2)$ it is the case that
$\beta_1(x)\subsetneq \beta_2(y)$ whenever $\varphi(x)=y$ (hence, either no
vertex is mapped to $y$ or a vertex that is mapped to $y$ is a strict
subset of $y$). A tree decomposition is \e{proper} if it is not
strictly subsumed by any other tree
decomposition~\cite{DBLP:conf/pods/CarmeliKK17}.\footnote{This is a
  simplified, yet equivalent definition to that of Carmeli et
  al.~\cite{DBLP:conf/pods/CarmeliKK17}.}

\begin{example}\label{example:tds}
  Figure~\ref{fig:pre-tds} depicts five tree decompositions of the
  graph $G$ of Figure~\ref{fig:pre-example}. Each rectangle
  corresponds to a vertex $x$ of the tree, and the bag $\beta(x)$ is
  depicted inside the rectangle. For example, if we denote
  $\T_1=(T_1,\beta_1)$, then $T_1$ is a path of three vertices
  (corresponding to the three rectangles), and for the top vertex $x$
  we have $\beta_1(x)=\set{u,w_1,w_2,w_3}$.  Note that $\T_2$ and
  $\T_2''$ are bag equivalent, since they have the exact same bags
  (though connected differently). The tree decomposition $\T_1$
  strictly subsumes $\T_1'$, as the latter is obtained from the former
  by adding $w_1$ to the bottom bag. Moreover, $\T_2$ strictly
  subsumes $\T_2'$, as the former is obtained from the latter by
  splitting the bottom bag into two---the middle and bottom vertices
  fo $\T_2$. Thus, $\T_1'$ and $\T_2'$ are not proper. We will later
  show that $\T_1$ and $\T_2$ (and, hence, $\T''_2$) are proper.\qed
\end{example}

\begin{figure}
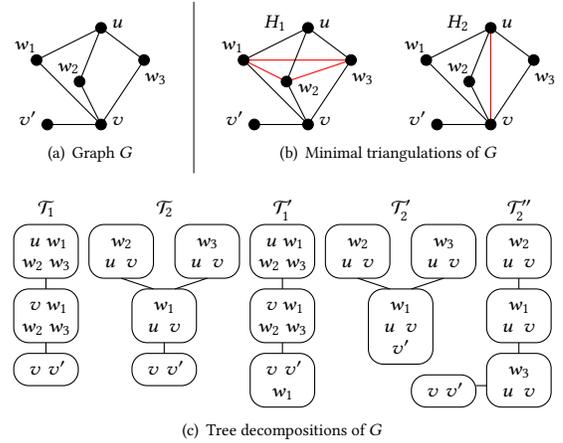

\scalebox{0.9}{
\centering
\subfigure[\label{fig:pre-example}Graph $G$]{\input{pre-example.pspdftex}}\quad\quad\vline\quad\quad
\subfigure[\label{fig:pre-mintriangs}Minimal triangulations of $G$]{\input{pre-triangs-small.pspdftex}}}
\vskip0.5em
\scalebox{0.9}{\subfigure[\label{fig:pre-tds}Tree decompositions of $G$]{\input{pre-tds.pspdftex}}}
\caption{A graph $G$, tree decompositions of $G$ and minimal
triangulations of $G$.}
\end{figure}

\partitle{Minimal triangulations}
Let $G$ be a graph. A \e{cycle} in $G$ is a path that starts and ends
with the same vertex. A \e{chord} of a cycle $C$ is an edge
$e\in\edges(G)$ that connects two vertices that are non-adjacent in
$C$. We say that $G$ is \e{chordal} if every cycle of length greater
than three has a chord. Whether a given graph is chordal can be
decided in linear time~\cite{Tarjan:1984:SLA:1169.1179}.

A \e{triangulation} of a graph $G$ is a chordal graph $H$ that is
obtained from $G$ by adding edges. The \e{fill set} of a triangulation
$H$ of $G$ is the set of edges added to $H$, that is,
$\edges(H)\setminus\edges(G)$. A \e{minimal triangulation} of $G$ is a
triangulation $H$ of $G$ such that the fill set of $H$ is not strictly
contained in the fill set of any other triangulation; that is, there
is no chordal graph $G'$ with $\nodes(G)=\nodes(G')$ and
$\edges(G)\subseteq\edges(G')\subsetneq\edges(H)$.  In particular, if
$G$ is already chordal then $G$ is the only minimal triangulation of
itself. 

{\begin{table} \renewcommand{\arraystretch}{1.2}
\caption{\label{table:notation}Symbol table}
\small
\begin{tabular}{c|l}\hline
\textbf{Notation} & \multicolumn{1}{c}{\textbf{Meaning}}\\\hline
$\nodes(G)$ / $\edges(G)$ & Vertex/edge set of $G$\\
$\maxcliques(G)$ & Maximal cliques of $G$\\
$\K_U$ & Full graph (clique) over node set $U$\\
\hline
$\minseps(G)$ & Minimal separators of $G$\\
$(S,C)$ & Block of $G$\\
$\R(S,C)$ & Realization of $(S,C)$, that is, $G[S\cup C]\cup \K_S$\\\hline
$\pmcs(G)$ & Potential maximal cliques of $G$\\
$\minseps_G(\Omega)$ & Minimal separators associated to $\Omega$ of $G$\\
$\fblocks_G(\Omega)$ & Blocks associated to $\Omega$ in $G$\\
\hline
\end{tabular}
\end{table}}

\partitle{Clique trees}
Let $G$ be a graph. A \e{clique tree} of $G$ is a tree decomposition
$\T=(T,\beta)$ of $G$ such that $\beta$ is bijection between
$\nodes(T)$ and $\maxcliques(G)$. In other words, $\T$ is a clique
tree of $G$ if $\bags(\T)=\maxcliques(G)$ and no two bags are the
same. The following is known, and recorded for later use.

\begin{citedtheorem}{\cite{ChordalIntroduction,rose1970triangulated,DBLP:conf/pods/CarmeliKK17}}\label{thm:clique-tree-essentials}
  Let $G$ be a graph.
\begin{cenumerate}
\item $G$ is chordal if and only if $G$ has a clique
  tree~\e{\cite{ChordalIntroduction}}.
\item If $G$ is chordal, then
  $|\maxcliques(G)|<|\nodes(G)|$~\e{\cite{rose1970triangulated}}.
\item A tree decomposition $\T$ of $G$ is proper if and only if it is
  a clique tree of a minimal triangulation of
  $G$~\e{\cite{DBLP:conf/pods/CarmeliKK17}}.
\end{cenumerate}
\end{citedtheorem}

\begin{example}\label{example:triangs}
  Continuing Example~\ref{example:tds}, observe that the graph $G$ of
  Figure~\ref{fig:pre-example} is not chordal. As one evidence, it has
  the chordless cycle
  $u\mbox{---}w_1\mbox{---}v\mbox{---}w_2\mbox{---}u$.
  Figure~\ref{fig:pre-mintriangs} depicts two minimal triangulations,
  $H_1$ and $H_2$, of the graph $G$ of Figure~\ref{fig:pre-example}.
  The reader can verify that $\T_1$ and $\T_2$ of
  Figure~\ref{fig:pre-tds} are clique trees of $H_1$ and $H_2$,
  respectively. In particular, we conclude that $\T_1$ and $\T_2$ are
  proper tree decompositions.  \qed\end{example}

\partitle{Minimal separators}
\noamdone{Should this be here or earlier in the preliminaries?}  Let
$u$ and $v$ be vertices of a graph $G$. A \e{$(u,v)$-separator
  (w.r.t.~$G$)} is a set $S\subseteq\nodes(G)$ such that $u$ and $v$
belong to different connected components in $G\setminus S$; that is,
$G\setminus S$ does not contain any path between $u$ and $v$ (or
equivalently, every path between $u$ and $v$ visits one or more
vertices of $S$). We say that $S$ is a \e{minimal $(u,v)$-separator}
if no proper subset of $S$ is a $(u,v$)-separator. We say that $S$ is
a \e{minimal separator} of $G$ if there are vertices $u$ and $v$ such
that $S$ is a minimal $(u,v)$-separator. We denote by $\minseps(G)$
the set of all minimal separators of $G$. A graph may have exponentially 
many minimal separators.

Let $G$ be a graph, and let $S$ and $T$ be two minimal separators of
$G$. We say that $S$ \e{crosses} $T$ if there are vertices $u$ and $v$
in $T$ such that $S$ is a $(u,v)$-separator.  Crossing is known to be
a symmetric relation: if $S$ crosses $T$ then $T$ crosses
$S$~\cite{DBLP:journals/dam/ParraS97,DBLP:journals/tcs/KloksKS97}. Hence,
if $S$ crosses $T$ then we may also say that $S$ and $T$ are
\e{crossing}. When $S$ and $T$ are non-crossing, then we also say that
$S$ and $T$ are \e{parallel}. 

\begin{example}\label{example:min-seps}
  We continue with our running example. The top-left graph of
  Figure~\ref{fig:seps-real} depicts three minimal separators $S_1$,
  $S_2$ and $S_3$ of the graph $G$ of
  Figure~\ref{fig:pre-example}. For instance $S_1=\set{w_1,w_2,w_3}$
  is a minimal $(u,v)$-separator, $S_2=\set{u,v}$ is a minimal
  $(w_1,w_2)$-separator, and $S_3=\set{v}$ is a minimal
  $(u,v')$-separator. Note that $S_2$ is a $(w_1,v')$-separator but not
  a minimal $(w_1,v')$-separator, since a strict subset of $S_2$, namely
  $S_3$, is a $(w_1,v')$-separator. Also note that $S_1$ and $S_2$ are
  crossing, since $S_1$ is a $(u,v)$-separator (and also $S_2$ is a
  $(w_1,w_2)$-separator).  It can be verified that $S_1$, $S_2$ and
  $S_3$ are the only minimal separators of $G$. Hence
  $\minseps(G)=\set{S_1,S_2,S_3}$. This example illustrates that,
  albeit being ``minimal,'' a minimal separator can be a strict subset
  of another; for instance $S_3\subsetneq S_2$.
\qed\end{example}

A set of \e{pairwise-parallel} minimal
separators is a set $M$ of minimal separators such that every two
distinct members of $M$ are parallel. Moreover, such $M$ is said to be
\e{maximal} if every minimal separator not in $M$ is crossing at least
one member of $M$. Parra and
Scheffler~\cite{DBLP:journals/dam/ParraS97} established the following
connection between minimal triangulations and maximal sets of
pairwise-parallel minimal separators.

\noamdone{I added this Theorem here. It is the only one I chose to keep
from section 5, maybe you could shorten the wording.}

\begin{citedtheorem}{Parra and
    Scheffler~\cite{DBLP:journals/dam/ParraS97}}\label{thm:ParraS97}
  \!\!Let $G$ be a graph.
	\begin{cenumerate}
        \item \label{item:ParraS1st} Let $M$ be a maximal set of
          pairwise-parallel minimal separators of $G$, and let $H$ be
          obtained from $G$ by saturating each member of $M$.  Then
          $H$ is a minimal triangulation of $G$ having
          $\minseps(H)=M$.
        \item \label{item:ParraS2nd} Conversely, if $H$ is a minimal
          triangulation of $G$, then $M=\minseps(H)$ is a maximal set
          of pairwise-parallel minimal separators in $G$, and $H$ is
          obtained from $G$ by saturating each member of $M$. 
	\end{cenumerate}
\end{citedtheorem}

\partitle{Ranked enumeration}
An \e{enumeration problem} $\P$ is a collection of pairs $(x,Y)$ where
$x$ is an \e{input} and $Y$ is a finite set of \e{answers} for $x$,
denoted by $\P(x)$. A \e{solver} for an enumeration problem $\P$ is an
algorithm that, when given an input $x$, produces (or \e{prints}) a
sequence of answers such that every answer in $\P(x)$ is printed
precisely once.  A solver for an enumeration problem is also called an
\e{enumeration algorithm}. We recall several standard yardsticks of
efficiency for enumeration
algorithms~\cite{DBLP:journals/ipl/JohnsonP88}. Let $\P$ be an
enumeration problem, and let $A$ be solver for $\P$. We say that $A$
runs in
\e{(a)}
\e{polynomial total time} if the total execution time of $A$ is
  polynomial in $(|x|+|\P(x)|)$;
\e{(b)}
\e{polynomial delay} if the time between printing every two
  consecutive answers is polynomial in $|x|$;
\e{and (c)}
\e{incremental polynomial time} if, after printing a sequence
  $Y$ of answers, the time to print the next answer is polynomial in
  $(|x|+|Y|)$ where $Y$ is the size of the representation of $Y$.
Observe that a solver that enumerates with polynomial delay also
enumerates with incremental polynomial time, which, in turn, implies
polynomial total time. 

Let $\P$ be an enumeration problem. A \e{cost function} for $\P$ is a
function $c$ that associates a numerical cost $c(x,y)$ to each input
$x$ and answer $y$ for $x$. A solver $A$ for $\P$ is said to
\e{enumerate by increasing $c$}, where $c$ is cost function for $\P$,
if for every two answers $a_1$ and $a_2$ produced by $A$, if $a_1$ is
produced before $a_2$ then $c(a_1)\leq c(a_2)$.

\section{Monotone Cost Functions}\label{sec:monotonic}

By a \e{cost function over tree decompositions} we refer to a function
$\kappa$ that maps a graph $G$ and a tree decomposition $\T$ for $G$ to
a numerical (positive, negative or zero) value $\kappa(G,\T)$. In this
section we define a class of such cost functions that includes many of
the common costs such as width and fill. This class is defined by
means of monotonicity, as we formally defined next.

Let $G$ be a graph, and let $\T=(T,\beta)$ be a tree decomposition of
$G$. Every edge $e=\set{v_1,v_2}$ of $T$ connects two unique subtrees
of $T$---one connected to $v_1$ and one connected to $v_2$. Let $e$ be
an edge of $T$, let $T_1$ and $T_2$ be the two subtrees connected by
$e$, and let $\beta_1$ and $\beta_2$ be the restrictions of $\beta$ to
$\nodes(T_1)$ and $\nodes(T_2)$, respectively. Let
$\T_1=(T_1,\beta_1)$ and $\T_2=(T_2,\beta_2)$, and let $G_1$ and $G_2$
be the subgraphs of $G$ induced by the nodes in the bags of $\T_1$
and $\T_2$, respectively. Then we say that $\T$ \e{splits} (\e{by
  $e$}) as $\angs{G_1,\T_1,G_2,\T_2}$. The following proposition is
straightforward.

\begin{proposition}\label{prop:split}
  Let $G$ be a graph and $\T=(T,\beta)$ a tree decomposition of $G$.
  If $\T$ splits as $\angs{G_1,\T_1,G_2,\T_2}$, then $\T_1$ and $\T_2$
  are tree decompositions of $G_1$ and $G_2$, respectively.
\end{proposition}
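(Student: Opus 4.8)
The plan is to verify, for $\T_1=(T_1,\beta_1)$, the three defining properties of a tree decomposition of $G_1$; the argument for $\T_2$ is symmetric. Write $U_1=\bigcup_{x\in\nodes(T_1)}\beta_1(x)$ for the set of vertices appearing in the bags of $\T_1$, so that $G_1=\induced{G}{U_1}$ by definition. The vertex-coverage property is then immediate: every vertex of $G_1$ lies in $U_1$ and hence, by construction of $U_1$, in some bag $\beta_1(x)$ with $x\in\nodes(T_1)$.

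The key preliminary fact I would establish first is the standard coherence consequence of the junction-tree property: for each vertex $a$ of $G$, the set $V_a=\set{y\in\nodes(T) : a\in\beta(y)}$ induces a connected subtree of $T$. Indeed, if $a\in\beta(u)\cap\beta(w)$ then the junction-tree property places $a$ in every bag along the $u$--$w$ path, so $V_a$ is closed under taking paths. I would also record how the edge $e=\set{v_1,v_2}$, with $v_1\in\nodes(T_1)$ and $v_2\in\nodes(T_2)$, interacts with the two sides: since $T$ is a tree and $e$ is its unique bridge between them, any path in $T$ between two nodes of $T_1$ stays entirely within $T_1$ (it cannot cross $e$ twice), and any connected vertex set of $T$ that meets both $\nodes(T_1)$ and $\nodes(T_2)$ must contain both $v_1$ and $v_2$.

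The edge-coverage property is the main obstacle, and here the coherence fact does the work. Let $\set{a,b}$ be an edge of $G_1$; then $a,b\in U_1$ and $\set{a,b}\in\edges(G)$. Since $\T$ covers the edges of $G$, some bag $\beta(x)$ contains $\set{a,b}$. If $x\in\nodes(T_1)$ we are done, as then $\set{a,b}\subseteq\beta_1(x)$. Otherwise $x\in\nodes(T_2)$. Because $a\in U_1$, the set $V_a$ meets $\nodes(T_1)$; and $a\in\beta(x)$ shows $V_a$ meets $\nodes(T_2)$. As $V_a$ is connected, it contains $v_1$, so $a\in\beta(v_1)$; the identical argument gives $b\in\beta(v_1)$. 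Hence $\set{a,b}\subseteq\beta(v_1)=\beta_1(v_1)$ with $v_1\in\nodes(T_1)$, as required.

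Finally, the junction-tree property transfers directly. For nodes $u,v\in\nodes(T_1)$, the unique $u$--$v$ path in $T_1$ coincides with the $u$--$v$ path in $T$ (it cannot use the bridge $e$), and $\beta_1$ agrees with $\beta$ on every node of $T_1$; thus the containment of $\beta_1(u)\cap\beta_1(v)$ in all bags along this path is inherited verbatim from the corresponding property of $\T$. This completes the verification for $\T_1$, and the same four steps applied with the roles of $T_1$ and $T_2$ exchanged yield the claim for $\T_2$.
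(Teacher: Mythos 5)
Your proof is correct, and it is essentially the canonical verification: the paper states this proposition without proof, dismissing it as straightforward, so there is no authorial argument to diverge from. Your three steps --- vertex coverage immediate from $G_1=\induced{G}{U_1}$, edge coverage via the connectedness of $V_a=\set{y\in\nodes(T):a\in\beta(y)}$ forcing $a,b\in\beta(v_1)$ whenever the covering bag lies across the bridge $e$, and inheritance of the junction-tree property because $u$--$v$ paths in $T_1$ coincide with those in $T$ --- supply exactly the standard argument the authors left implicit, with no gaps.
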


Proposition~\ref{prop:split} implies that if $\kappa$ is a cost
function and $\T$ splits as $\angs{G_1,\T_1,G_2,\T_2}$, then both
$\kappa(G_1,\T_1)$ and $\kappa(G_2,\T_2)$ are defined. We can now
define properties of cost functions.

\begin{definition}\label{def:cost-props}
  Let $\kappa$ be a cost function over tree decompositions. We say
  that $\kappa$ is:
  \begin{cenumerate}
  \item \e{invariant under bag equivalence}, if for all graphs $G$ and
    tree decompositions $\T_1$ and $\T_2$ of $G$, if $\T_1$ and $\T_2$
    are bag equivalent then $\kappa(\T_1)=\kappa(\T_2)$.
  \item \e{split monotone} if for all graphs $G$ and tree
    decompositions $\T$ and $\T'$ of $G$, if $\T$ and $\T'$ split as
    $\angs{G_1,\T_1,G_2,\T_2}$ and $\angs{G_1,\T_1',G_2,\T_2'}$,
    respectively, and $\kappa(G_i,\T_i)\leq\kappa(G_i,\T_i')$ for
    $i=1,2$, then $\kappa(G,\T)\leq\kappa(G,\T')$.
  \end{cenumerate}
\end{definition}

If $\kappa$ is invariant under bag equivalence, then it is essentially
a scoring function over the collection of bags, and in that case we
say that $\kappa$ is a \e{bag cost}.

For illustration, the following most popular cost functions
$\kappa(G,\T)$ are both split-monotone bag costs.
\begin{itemize}
\item $\width(G,\T)$: the maximal bag cardinality minus one.
\item $\fillin(G,\T)$: the number of edges required to be added for
  saturating all bags.
\end{itemize}
Other such cost functions are the generalizations of width and fill-in
introduced by Furuse and Yamazaki~\cite{DBLP:journals/tcs/FuruseY14},
where it is assumed that each bag $b$ has a cost $c(b)$, and each edge
$e$ has a cost $c(e)$. Then, they define $\width_c(G,\T)$ to be
the maximal score of a bag, and $\fillin_c(G,\T)$ to be the sum
of costs of the edges required to saturate all bags. As a special
case, if the graph $G$ is the primal graph of a hypergraph, then
$c(b)$ can be the minimal number of hyperedges needed to cover $b$, or
the minimal weight of a \e{fractional} edge cover of $b$, thereby
establishing the popular cost functions of \e{hypertree
  width}~\cite{DBLP:conf/pods/GottlobLS99} and \e{fractional}
hypertree width~\cite{DBLP:journals/talg/GroheM14}.

Finally, another intuitive split-monotone bag costs is
\[\kappa(G,\T)=|\edges(G)|^{\width(G,\T)}+\fillin(G,\T)\]
that effectively establishes the lexicographic ordering of the width
followed by the fill-in of $G$. 

We can then use a bag cost $\kappa$ as a cost function over
triangulations $H$, by defining the cost as $\kappa(\T)$ where $\T$ is
\e{any} clique tree of $H$. Since $\kappa$ is invariant under bag
equivalence (being a bag cost), then the choice of $\T$ does not
matter. By a slight abuse of notation, we use $\kappa(G,H)$ to denote
the resulting cost over triangulations $H$ of $G$.

\section{Main Theoretical Results}\label{sec:theory}

In this section we present our main theoretical results. These results
are upper bounds (existence of algorithms) on problems of ranked
enumeration of tree decompositions and minimal triangulations. In the
next two sections we will describe the algorithms.

\partitle{The poly-MS assumption}
Recall that a graph may have an exponential number of minimal
separators. Our main result holds for the case where this number is
reasonable, a case that was deeply investigated in past
research~\cite{DBLP:journals/tcs/FuruseY14,DBLP:journals/tcs/BouchitteT02,DBLP:conf/wg/MontealegreT16,
  DBLP:conf/wg/LiedloffMT15}. Formally, we consider classes $\G$ of
graph such that for some polynomial $p$ it is the case that
$|\minseps(G)|\leq p(|\nodes(G)|)$ for all $G\in\G$.  We then say
shortly that $\G$ is a \e{poly-MS} class of graphs (where ``MS''
stands for \e{Minimal Separators}).  Later in this paper we
empirically study the applicability of this assumption on real and
synthetic datasets.
\noamdone{Should this paragraph move to the preliminaries?}

\partitle{State of the art}
Before presenting our results, we recall some relevant results from
the literature.  Carmeli et al.~\cite{DBLP:conf/pods/CarmeliKK17}
showed that, without making any assumption, one can enumerate in
incremental polynomial time the set of all proper tree decompositions
and the set of all minimal triangulations. Note, however, that no
guarantee is made on the order of enumeration.

\begin{citedtheorem}{\cite{DBLP:conf/pods/CarmeliKK17}\label{thm:enum-incp}}
Given a graph $G$, one can enumerate in incremental polynomial time all
proper tree decompositions, and all minimal triangulations.
\end{citedtheorem}

Parra and Scheffler~\cite{DBLP:journals/dam/ParraS97} showed that
minimal triangulations are in one-to-one correspondence with the
maximal independent sets of the graph that has the minimal separators
as vertices, and an edge between every two crossing
separators. Combining that with results on the enumeration of maximal
independent
sets~\cite{DBLP:journals/ipl/JohnsonP88,DBLP:journals/jcss/CohenKS08},
we get that that for poly-MS classes of graphs, the
minimal triangulations can be enumerated with polynomial delay (again with
no guarantees on the order). Moreover, Carmeli et 
al.~\cite{DBLP:conf/pods/CarmeliKK17} show that such enumeration automatically
translates into an algorithm for enumerating the proper tree decompositions with
polynomial delay. Hence, we get the following.

\begin{citedtheorem}{see~\cite{DBLP:conf/pods/CarmeliKK17}}\label{thm:p-delay}
  \!\!If $\G$ is a poly-MS class of graphs, then one can enumerate with
  polynomial delay all proper tree decompositions, and all minimal
  triangulations.
\end{citedtheorem}

Bouchitt{\'{e}} and Todinca showed that on poly-MS classes of graphs,
a tree decomposition (or triangulation) of a minimal width or fill-in
can be found in polynomial time.

\begin{citedtheorem}{\cite{DBLP:journals/tcs/BouchitteT02}} \label{thm:fill-width-min}
  Let $\G$ be a poly-MS class of graphs. One can find in polynomial
  time a minimal-cost tree decomposition (or triangulation) when the
  cost is either the width or the fill in.
\end{citedtheorem}
Furuse and Yamazaki~\cite{DBLP:journals/tcs/FuruseY14} generalized the
above result to the cost functions $\width_c$ and $\fillin_c$ as
defined in Section~\ref{sec:monotonic}.

\partitle{Main Results} We now turn to our results. The main result
generalizes Theorems~\ref{thm:p-delay} and~\ref{thm:fill-width-min} in
two directions. First, the enumeration is ranked. Second, the cost
function is not just width of fill-in, but in fact every bag cost that
is split monotone and computable in polynomial time.

\begin{theorem}\label{thm:enum-ranked}
  Let $\G$ be a poly-MS class of graphs, and let $\kappa$ be a bag cost
  that is split monotone and computable in polynomial time. On graphs
  of $\G$ one can enumerate with polynomial delay
  all: \begin{cenumerate}
  \item proper tree decompositions $\T$
  by increasing $\kappa(G,\T)$;
\item minimal triangulations $H$ by
  increasing $\kappa(G,H)$.
\end{cenumerate} \end{theorem}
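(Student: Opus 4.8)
The plan is to reduce the ranked enumeration problem to the Lawler--Murty framework, where the items are minimal separators of $G$ and the itemsets are the maximal pairwise-parallel sets of minimal separators. By Theorem~\ref{thm:ParraS97} (Parra and Scheffler), such sets are in bijection with minimal triangulations, so enumerating minimal triangulations by increasing $\kappa(G,H)$ is the same as enumerating these maximal parallel sets $M$ by increasing $\kappa(G,H_M)$, where $H_M$ is obtained from $G$ by saturating each member of $M$. Since $\G$ is poly-MS, the set $\minseps(G)$ has polynomially many elements and can be computed in polynomial time (e.g.\ via Berry et al.), so the ``universe'' of items has polynomial size---this is what will ultimately buy us polynomial \emph{delay} rather than merely polynomial total time.

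The Lawler--Murty procedure maintains a priority queue of subproblems, each specified by a set $I$ of separators required to be \emph{included} and a set $X$ of separators required to be \emph{excluded}; for each subproblem one must compute, in polynomial time, the minimum-cost minimal triangulation consistent with these constraints, i.e.\ one whose separator set contains $I$ and is disjoint from $X$. So the first key step is to phrase the constrained optimization problem: find a maximal parallel set $M$ with $I\subseteq M$ and $M\cap X=\emptyset$ minimizing $\kappa(G,H_M)$. The second key step, which the excerpt has already promised, is that these inclusion/exclusion constraints can be compiled into the cost function: we replace $\kappa$ by a modified split-monotone, polynomial-time cost $\kappa_{I,X}$ that assigns $+\infty$ (or a prohibitively large penalty) to any triangulation violating the constraints and agrees with $\kappa$ otherwise, so that minimizing $\kappa_{I,X}$ over \emph{all} minimal triangulations solves the constrained problem. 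The third step then invokes the Bouchitt\'e--Todinca machinery, generalized to split-monotone costs (the algorithm developed in Section~\ref{sec:mintriang}), to compute a minimum-$\kappa_{I,X}$ minimal triangulation in polynomial time using the potential maximal cliques of $G$, of which there are polynomially many under poly-MS.

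Once the constrained oracle runs in polynomial time and the universe of items is polynomially bounded, the standard Lawler--Murty analysis gives the enumeration: to produce the next answer we branch on the best current candidate, creating at most $|\minseps(G)|$ new subproblems (each fixing one more separator's membership), solving each with one oracle call, and extracting the global minimum from the queue. Each step between two consecutive outputs thus costs a polynomial number of polynomial-time oracle calls, yielding polynomial delay and correct increasing-$\kappa$ order; this proves part~(2). Part~(1) follows because, by Theorem~\ref{thm:clique-tree-essentials}, the proper tree decompositions are exactly the clique trees of minimal triangulations and $\kappa$ is a bag cost, so a clique tree can be produced from each enumerated $H$ with polynomial overhead, preserving both the order and the delay guarantee.

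I expect the main obstacle to be the second step: showing that the inclusion/exclusion constraints can be encoded as a \emph{split-monotone} cost function, since split monotonicity is a delicate property governed by how triangulations decompose along separators, and a naive penalty term need not respect it. The careful point is that inclusion of a separator $S$ must be enforced in a way compatible with the recursive block structure that the Bouchitt\'e--Todinca dynamic program exploits---so the penalty has to be additive over the relevant blocks and realizations rather than a global indicator, which is precisely what makes it split monotone and keeps it polynomial-time computable.
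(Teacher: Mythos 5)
Your argument for part~(2) follows essentially the same route as the paper: Lawler--Murty over the maximal pairwise-parallel sets of minimal separators (Theorem~\ref{thm:ParraS97}), constraints compiled into the cost as an $\infty$-penalty (the paper's $\kappa[I,X]$, Lemma~\ref{lem:kappa const split monotone}), and the generalized Bouchitt\'e--Todinca dynamic program as the constrained oracle (Theorems~\ref{thm:alg is correct} and~\ref{thm:monotone-constraints-ptime}). However, your derivation of part~(1) has a genuine gap. You say ``a clique tree can be produced from each enumerated $H$,'' i.e., one tree decomposition per triangulation. But the theorem requires enumerating \emph{all} proper tree decompositions, and by Theorem~\ref{thm:clique-tree-essentials} these are \emph{all} clique trees of all minimal triangulations; a single minimal triangulation can have several distinct clique trees (e.g., $\T_2$ and $\T_2''$ in Figure~\ref{fig:pre-tds} are distinct clique trees of $H_2$, and both are proper). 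Emitting one clique tree per $H$ therefore violates completeness. The paper's Proposition~\ref{prop:mintraings-propertds} closes exactly this hole: for each enumerated $H$, it enumerates all clique trees of $H$ with polynomial delay via Jordan's characterization (clique trees are the maximal spanning trees of the clique-intersection graph, which has fewer than $|\nodes(G)|$ nodes) together with a polynomial-delay maximal-spanning-tree enumerator; since $\kappa$ is a bag cost, all clique trees of a fixed $H$ carry the same cost, so the interleaving preserves the ranking, and distinct triangulations have disjoint sets of clique trees, so no duplicates arise.

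A second, smaller point: your closing speculation misidentifies how split monotonicity of the constrained cost is achieved. You suggest the penalty ``has to be additive over the relevant blocks and realizations rather than a global indicator,'' but the paper's $\kappa[I,X]$ \emph{is} the global indicator of Equation~\eqref{eq:kappa I,X}. What makes it split monotone is not additivity but the reformulated satisfaction relation: $H\models[I,X]$ quantifies only over constraints $S\in I\cup X$ with $S\subseteq\nodes(H)$ and tests whether $S$ is a \emph{clique} of $H$ (rather than a minimal separator of $H$). This is precisely what behaves correctly when the cost is evaluated on induced subtriangulations of block realizations inside the dynamic program, where a constraint separator may fail to be a minimal separator of the realization, or may not be contained in it at all; the proof of Lemma~\ref{lem:kappa const split monotone} then shows that under the split hypotheses a violation in $H$ forces a violation in $H'$. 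Also a minor quantitative nuance: the branching factor in the Lawler--Murty step is bounded by $|\minseps(H)|$, which is smaller than $|\nodes(G)|$ because $H$ is chordal (Theorem~\ref{thm:clique-tree-essentials}); the poly-MS assumption is needed to make the oracle polynomial, not to bound the branching, contrary to what your first paragraph suggests.
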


Finally, the next result applies to general graphs, and assumes that
we are interested only in tree decompositions of a bounded width. In
this case, we get a ranked enumeration with polynomial delay
\e{without} assuming an upper bound on the number of minimal separators.

\begin{theorem}\label{thm:enum-ranked-btw}
  Let $b$ be a fixed natural number, and let $\kappa$ be a bag cost
  that is split monotone and computable in
  polynomial time. Given a graph, one can enumerate with polynomial
  delay all:
\begin{cenumerate}
\item proper tree decompositions $\T$ of width at most $b$ by
  increasing $\kappa(G,\T)$;
\item minimal triangulations $H$ of width at most $b$ by increasing
   $\kappa(G,H)$.
\end{cenumerate}
\end{theorem}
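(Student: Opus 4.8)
The plan is to reduce Theorem~\ref{thm:enum-ranked-btw} to the same Lawler--Murty machinery that proves Theorem~\ref{thm:enum-ranked}, but without the poly-MS assumption. The key observation is that a bound $b$ on the width translates into a bound on the cardinality of the relevant minimal separators. By Theorem~\ref{thm:ParraS97}, a minimal triangulation $H$ is obtained by saturating a maximal set $M=\minseps(H)$ of pairwise-parallel minimal separators, and every $S\in M$ becomes a clique in $H$; hence $|S|\le\width(G,H)+1$. Therefore, if we only care about triangulations of width at most $b$, then the only minimal separators that can ever participate are those of size at most $b+1$. The first step is thus to enumerate (or simply list) the set $\minseps_{\le b+1}(G)$ of all minimal separators of $G$ of cardinality at most $b+1$. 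Since $b$ is a fixed constant, there are at most $O(|\nodes(G)|^{b+1})$ vertex subsets of that size, and testing whether a given subset is a minimal separator can be done in polynomial time; hence this set has polynomial size and is computable in polynomial time.

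With a polynomially bounded universe of candidate separators in hand, I would run the exact same Lawler--Murty deployment used for Theorem~\ref{thm:enum-ranked}, restricted to this universe. Concretely, the items are the small minimal separators, the itemsets are the sets $\minseps(H)$ for minimal triangulations $H$ of width $\le b$, and the enumeration is driven by solving the constrained optimization problem: find a minimum-$\kappa$ minimal triangulation subject to \emph{including} a given set of separators and \emph{excluding} another. The second step is to show this constrained optimum can be found in polynomial time even for general (not poly-MS) graphs, \emph{provided} we restrict attention to width $\le b$. This follows from the Bouchitt\'e--Todinca dynamic program (as generalized to split-monotone costs in Section~\ref{sec:mintriang}): its running time is polynomial in the number of minimal separators and potential maximal cliques \emph{of the sizes that matter}. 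Since every bag of a width-$\le b$ decomposition, hence every relevant potential maximal clique, has at most $b+1$ vertices, and every relevant minimal separator lies in $\minseps_{\le b+1}(G)$, the number of objects the dynamic program must consider is $O(|\nodes(G)|^{b+1})$, which is polynomial for fixed $b$.

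To make the width constraint interact correctly with ranked enumeration, I would fold it into the cost function rather than treating it separately. Using the construction that compiles inclusion/exclusion constraints into a split-monotone cost while preserving split monotonicity and polynomial-time computability, I would additionally compile in a hard width bound---for instance by adding a prohibitively large penalty (or $+\infty$) to any bag of cardinality exceeding $b+1$, analogously to the lexicographic cost $|\edges(G)|^{\width(G,\T)}+\fillin(G,\T)$ exhibited in Section~\ref{sec:monotonic}. This keeps the modified cost split monotone, so the constrained-optimization subroutine still applies, and it guarantees that every triangulation the subroutine returns genuinely has width at most $b$. The correctness and polynomial-delay analysis of Lawler--Murty then carries over verbatim: each call to the optimization oracle runs in polynomial time (by the argument above), and Lawler--Murty makes a polynomial number of oracle calls between consecutive outputs.

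The main obstacle is the second step---establishing that the constrained minimum-cost triangulation of bounded width can be computed in polynomial time on \emph{arbitrary} graphs. The poly-MS proof gets this for free because the whole separator/PMC structure is polynomial; here I must instead argue that bounding the width a priori bounds the sizes, and hence the number, of the separators and potential maximal cliques that the dynamic program ever needs to enumerate and combine. This requires verifying that the Bouchitt\'e--Todinca dynamic program can be safely pruned to objects of size $\le b+1$ without missing any width-$\le b$ triangulation---i.e., that restricting to small blocks and small potential maximal cliques still yields a correct recurrence---and that enumerating these small objects (rather than all minimal separators) can itself be done in polynomial time. Once that localized, size-restricted version of the dynamic program is in place, the transfer of the full Theorem~\ref{thm:enum-ranked} argument, together with the translation from minimal triangulations to proper tree decompositions via Theorem~\ref{thm:clique-tree-essentials}, is routine.
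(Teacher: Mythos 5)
Your proposal is correct and follows essentially the same route as the paper: the paper proves Theorem~\ref{thm:enum-ranked-btw} by exactly this combination---pruning the Bouchitt\'e--Todinca dynamic program to minimal separators and potential maximal cliques of bounded size (the algorithm $\algname{MinTriangB}$, whose first line naively generates only the small objects), attaching cost $\infty$ to triangulations whose bags exceed the width bound, and plugging this bounded-width oracle into the Lawler--Murty enumeration $\algname{RankedTriang}$ with the inclusion/exclusion constraints compiled into the split-monotone cost via Lemma~\ref{lem:kappa const split monotone}, then passing to proper tree decompositions via Proposition~\ref{prop:mintraings-propertds}. The only cosmetic difference is your separator-size threshold of $b+1$ where the paper uses $b$; this is harmless, since the infinite penalty on oversized bags prunes any triangulation that would actually use a size-$(b+1)$ separator.
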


As said previously, in the next sections we prove
Theorems~\ref{thm:enum-ranked} and~\ref{thm:enum-ranked-btw} by
presenting the corresponding algorithms and proving their correctness.

\section{Computing an Optimal Minimal
  Triangulation}\label{sec:mintriang}

We now present an algorithm for computing a minimum-cost
minimal triangulation, assuming that the cost function is a
split-monotone bag cost. Our algorithm terminates in polynomial time
if the cost function can be evaluated in polynomial time, and
moreover, the input graphs belong to a poly-MS class of graphs. Our
algorithm generalizes an algorithm by Bouchitt\'e and
Todinca~\cite{mintriangalg} for computing the minimal width (\e{tree-width}) and
minimum fill-in over a poly-MS class of graphs. Furthermore, we will consider 
restriction to triangulations of a bounded width.

\subsection{Definitions and Notation}
We first recall some concepts from the literature that our
triangulation algorithm builds upon. For convenience, some of the
notation is shown in Table~\ref{table:notation}.



\begin{figure}
\vskip-2em
\centering
\scalebox{0.90}{\input{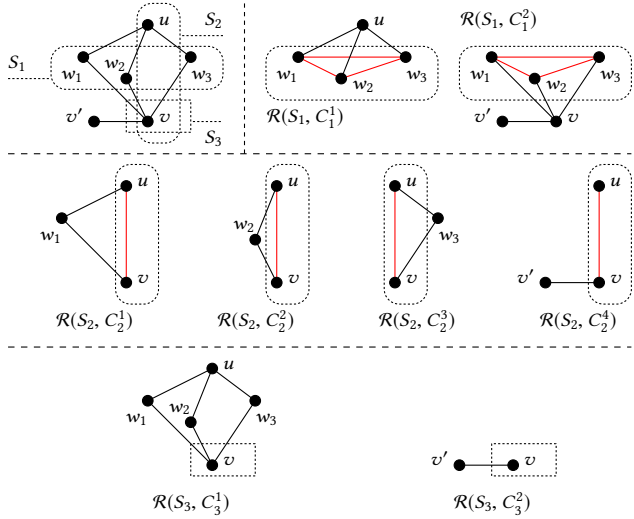}}
\caption{\label{fig:seps-real}Minimal separators and realizations of blocks of
the graph $G$ of Figure~\ref{fig:pre-example}.}
\end{figure}

\partitle{Components and Blocks}
Let $G$ be a graph, and $U$ a set of vertices of $G$. A \e{$U$-component} (\e{of
  $G$}) is a connected component of the graph $G\setminus U$.\eat{We
denote the set of all $U$-components of $G$ by $\comps_G(U)$, or only
$\comps(U)$ if $G$ is clear from the context. } Recall that a
\e{connected component} is a subset $W$ of $\nodes(G)$ such that $G$
contains a path from each vertex of $W$ to every other vertex of $W$, and
to none of the vertices outside $W$.

Let $S$ be a minimal separator of $G$. An $S$-component $C$ is
\e{full} if every vertex in $S$ is connected to one or more vertices
in $C$.\eat{We denote by $\compsstar_G(S)$ the set of full
  $S$-components.} A \e{block} (\e{of $G$}) is a pair $(S,C)$ where
$S$ is a minimal separator and $C$ is an $S$-component. By a slight
abuse of notation, we often identify the block $(S,C)$ with the vertex
set $S\cup C$. A block $(S,C)$ is \e{full} if $C$ is a full
component. The \e{realization} of the block $(S,C)$, denoted
$\R_G(S,C)$, is the induced graph of $(S,C)$ after saturating $S$;
that is:
\[\R_G(S,C)\eqdef G[S\cup C]\cup \K_S\]
When $G$ is clear from the context, we may remove it from the
subscripts and write simply $\R(S,C)$.

\begin{example}\label{example:blocks}
  Recall from Example~\ref{example:min-seps} that for the graph $G$ of
  our running example (Figure~\ref{fig:pre-example}) we have
  $\minseps(G)=\set{S_1,S_2,S_3}$, where the $S_i$ are depicted in
  Figure~\ref{fig:seps-real} on the top left. The rest of
  Figure~\ref{fig:seps-real} shows the different realizations $\R(S_i,C_i^j)$ of
  the blocks of $G$. A rectangle marks the vertices of the minimal separator of
  the block, and the edges that have been added in the saturation are colored red.
  Note that all of the blocks are full, except for $(S_2,C_2^4)$ where no
  vertex of $C_2^4$ is connected to $u$.
\qed\end{example}

\partitle{Potential Maximal Cliques}
Let $G$ be a graph. A vertex set $\Omega \subseteq \nodes(G)$ is a
\textit{Potential Maximal Clique} (\e{PMC} for short) if there is a
minimal triangulation $H$ of $G$ such that $\Omega$ is a maximal
clique of $H$. Due to Theorem~\ref{thm:clique-tree-essentials} we
conclude that a vertex set $\Omega$ is a PMC if and only if it is a
bag of some proper tree decomposition of $G$. We denote by $\pmcs(G)$
the set of PMCs of $G$.

For $\Omega\in\pmcs(G)$, let $C$ be any $\Omega$-component. Let $S$
be the set of all vertices in $\Omega$ that are neighbors of vertices in
$C$. It is known that $S$ is a minimal separator of $G$ and $(S,C)$ is a full
block of $G$ \cite{mintriangalg}, and they are said to be \e{associated to
$\Omega$} (\e{in $G$}). No other minimal separator of $G$ is contained in
$\Omega$. We denote by $\minseps_G(\Omega)$ and $\fblocks_G(\Omega)$ the sets of
minimal separators and blocks, respectively, associated to $\Omega$. When $G$ is clear from the context, we may omit it and write simply $\minseps(\Omega)$ and
$\fblocks(\Omega)$.

\begin{example}\label{example:pmc}
  The PMCs of the graph $G$ of Figure~\ref{fig:pre-example} are the
  vertex sets in all of the bags of the proper tree decompositions in
  Figure~\ref{fig:pre-tds}.  For example, $\pmcs(G)$ contains the sets
  $\set{u_1,w_1,w_2,w_3}$ and $\set{w_1,u,v}$. Let $\Omega$ be
  $\set{w_1,u,v}$. The minimal separators in $\minseps(\Omega)$ are
  $S_2$ and $S_3$ of Figure~\ref{fig:seps-real}, and the blocks in
  $\fblocks(\Omega)$ are $(S_2,C_2^2)$, $(S_2,C_2^3)$, 
  $(S_3,C_3^2)$, all depicted in Figure~\ref{fig:seps-real}.
\qed\end{example}

\def\B{\mathcal{B}}
\def\H{\mathcal{H}}
\begin{algseries}{t}{Computing an optimal minimal triangulation
    for a split-monotone bag cost.\label{alg:triang}}
\begin{insidealg}{MinTriang\langle\kappa\rangle}{$G$}
\STATE Compute $\minseps(G)$ and $\pmcs(G)$
\STATE $\B \asn$ the set of full blocks of $G$ 
\FOR{$(S,C)\in\B$ by increasing cardinality} \label{line:iterate blocks}
\STATE $\Omega(S,C) \asn \underset{\Omega \in \pmcs(S,C)}{\algname{argmin}}
{\kappa(\induced{G}{(S,C)},\HByPMC{\R(S,C)}{\Omega})}$ \label{line:saturated pmc block}
\STATE $\H(S,C) \asn \HByPMC{\R(S,C)}{\Omega(S,C)}$ \label{line:triang
block}
\ENDFOR
\STATE $\Omega(G) \asn \algname{argmin}_{\Omega \in \pmcs(G)}
\kappa(G,\HByPMC{G}{\Omega})$ \label{line:saturated pmc graph}
\STATE $\mathbf{return}(\HByPMC{G}{\Omega(G)})$ \label{line:triang graph}
\end{insidealg}
\end{algseries}

\subsection{Algorithm Description}

To describe the algorithm, we first give some background.  The
Bouchitt\'e-Todinca algorithm~\cite{mintriangalg} is based on the
observation that a minimal triangulation of a graph is composed of
minimal triangulations over realizations of its blocks. This is
formalized in the following theorem:

\begin{citedtheorem}{Bouchitt\'e and Todinca \cite{mintriangalg}}
  \label{theorem:pmc-block-triangulation}
   The following hold for a graph $G$.
\begin{cenumerate}
\item If $H$ is a minimal triangulation, $\Omega \in \maxcliques(H)$,
  and $(S_i,C_i) \in \fblocks_G(\Omega)$, then $H_i=
  \induced{H}{S_i \cup C_i}$ is a minimal triangulation of the
  realization $\R_G(S_i,C_i)$.
\item Conversely, let $\Omega \in \pmcs(G)$ and
  $k=|\fblocks_G(\Omega)|$. For $(S_i, C_i) \in \fblocks_G(\Omega)$
  let $H_i$ be a minimal triangulation of $\R_G(S_i, C_i)$. Then $H =
  \bigcup_{i=1}^k H_i \cup \K_\Omega$ is a minimal triangulation of
  $G$.
\end{cenumerate}
\end{citedtheorem}

Observe that $\Omega \in \maxcliques(H)$ implies that $\Omega$ is a PMC.
Theorem~\ref{theorem:pmc-block-triangulation} provides a
characterization of the minimal triangulations in terms of the minimal
triangulations of the block realizations. Then, how do we proceed to
computing the minimal triangulations of the block realizations? This
is shown in the following result.

\begin{citedtheorem}{Bouchitt\'e and Todinca~\cite{mintriangalg}}
  \label{theorem:realization triangulation}
  Let $G$ be a graph, $S \in \minseps(G)$, and $(S,C)$ a full block of
  $G$. Let $G'=\R_G(S,C)$.  Let $H$ be a graph with
  $\nodes(H)=\nodes(G')$ and $\edges(H)\supseteq\edges(G')$.  The
  following are equivalent.
\begin{cenumerate}
\item $H$ is a minimal triangulation of $G'$.
\item There exists $\Omega \in \pmcs(G)$ such that $S \subset \Omega
  \subseteq \nodes(G')$ and $H=\bigcup_{i=1}^k H_i \cup \K_\Omega$,
  where $|\fblocks_{G'}(\Omega)|=k$ and $H_i$ is a minimal
  triangulation of $\R_{G'}(S_i,C_i)$ for all $(S_i,C_i) \in
  \fblocks_{G'}(\Omega)$.
\end{cenumerate}
\end{citedtheorem}

For Part~2 of Theorem~\ref{theorem:realization triangulation}, it is
important to note that the blocks of $\Omega$ in $G'$ are also full
blocks of $G$~\cite{mintriangalg}.


Now, consider an input $G$ for our algorithm and $\Omega \in
\pmcs(G)$.  We will assume that for each $(S_i,C_i) \in
\fblocks_G(\Omega)$ our algorithm has computed a minimal triangulation
$H_i$ for the realization $\R_G(S_i,C_i)$. We then define the
following.
\begin{gather} 
\HByPMC{G}{\Omega} \eqdef \bigcup_{(S_i,C_i) \in \fblocks_G(\Omega)}H_i\cup
\K_\Omega \label{eq:H by U}
\end{gather}

Our algorithm, $\algname{MinTriang}$, is depicted in
Figure~\ref{alg:triang}. It applies dynamic programming based on
Equation~\eqref{eq:H by U}. The algorithm is parameterized by a
split-monotone bag cost $\kappa$, takes as input a graph $G$, and
computes a minimum-$\kappa$ minimal triangulation of $G$.

The first step is to compute $\minseps(G)$ and $\pmcs(G)$.  Assuming
$G$ belongs to a poly-MS class of graphs, this step can be done
efficiently by combining results from Berry et
al.~\cite{conf/wg/BerryBC99} and Bouchitt\'e and
Todinca~\cite{DBLP:journals/tcs/BouchitteT02}.
Then, the set $\B$ of full blocks of $G$ is computed and traversed in
order of ascending cardinality (beginning with $(S,C)$ such that
$|S\cup C|$ is minimal) in the loop of line~\ref{line:iterate
  blocks}. In the iteration of $(S,C)$, the optimal triangulation of
$\R(S,C)$ is computed.

When processing a block $(S,C)$, the algorithm selects a PMC $\Omega
\in \pmcs(G)$ where $S \subset \Omega \subseteq S \cup C$, to be
saturated according to Equation~\eqref{eq:H by U}, such that the cost
of the resulting triangulation of $\R(S,C)$ is minimized.  The
saturated vertex set is stored as $\Omega(S,C)$
(line~\ref{line:saturated pmc block}). By a slight abuse of notation,
we denote by $\pmcs(S,C)$ the set $\set{\Omega \in \pmcs(G) \mid S
  \subset \Omega \subseteq (S,C)}$.  The chosen optimal triangulation
of $\R(S,C)$ is then stored as $\H(S,C)$ for later use
(line~\ref{line:triang block}). The processing order of the blocks
allows larger blocks to evaluate each member of $\pmcs(S,C)$ based on
previously computed optimal triangulation for each of the realizations
of its smaller blocks. That is, for each block $(S_i,C_i) \in
\fblocks_{\R(S,C)}(\Omega)$, the term $H_i$ in Equation~\eqref{eq:H by
  U} (used in lines \ref{line:saturated pmc block} and
\ref{line:triang block}) will refer to previously computed
$\H(S_i,C_i)$.

Finally, the optimal result is selected by saturating the minimal-cost
PMC in the whole graph (lines~\ref{line:saturated pmc
  graph}--\ref{line:triang graph}).
The following theorem states the correctness and efficiency of the
algorithm. 

\def\ThmAlgIsCorrect{ Let $\kappa$ be a split-monotone bag cost
  computable in polynomial time. $\algname{MinTriang}\angs{\kappa}(G)$
  returns an minimal triangulation of minimal cost $\kappa$ in time
  polynomial in the number of minimal separators of the graph. Hence,
  if $G$ belongs to a poly-MS class of graphs then
  \algname{MinTriang} terminates in polynomial time.}

\begin{theorem}
\label{thm:alg is correct}
\ThmAlgIsCorrect
\end{theorem}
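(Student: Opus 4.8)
The plan is to establish correctness and efficiency separately. Correctness rests on an induction over the processing order of the blocks, with split monotonicity doing the work at each step; efficiency reduces to the known polynomial bounds on the number of minimal separators, blocks, and PMCs.

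For correctness I would prove, by induction on $|S\cup C|$ (the quantity used to order the loop of line~\ref{line:iterate blocks}), the invariant that after block $(S,C)$ is processed the stored graph $\H(S,C)$ is a minimum-$\kappa$ minimal triangulation of $\R(S,C)$. The induction is well founded because, for any $\Omega\in\pmcs(S,C)$ and any $(S_i,C_i)\in\fblocks_{\R(S,C)}(\Omega)$, the block $(S_i,C_i)$ is a full block of $G$ (by the remark following Theorem~\ref{theorem:realization triangulation}) with $|S_i\cup C_i|<|S\cup C|$, so $\H(S_i,C_i)$ is already available. By Theorem~\ref{theorem:realization triangulation}, every minimal triangulation of $\R(S,C)$ has the form $\bigcup_i H_i\cup\K_\Omega$ for some $\Omega\in\pmcs(S,C)$ and minimal triangulations $H_i$ of $\R(S_i,C_i)$, and conversely every such combination is a minimal triangulation. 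Hence it suffices to show (a) that for a \emph{fixed} $\Omega$, taking each $H_i$ to be of minimum cost yields a minimum-cost triangulation among those built from $\Omega$, and (b) that minimizing over $\Omega$, as in line~\ref{line:saturated pmc block}, then yields the global optimum. Claim (b) is immediate from the characterization, so the crux is (a).

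To prove (a) I would apply split monotonicity one block at a time. Fix $\Omega$ with blocks indexed $1,\dots,k$, and take two families $(H_i)_i$ and $(H_i')_i$ that agree except at index $j$, with $\kappa(\R(S_j,C_j),H_j)\le\kappa(\R(S_j,C_j),H_j')$. Build clique trees of $H=\bigcup_i H_i\cup\K_\Omega$ and $H'=\bigcup_i H_i'\cup\K_\Omega$ by attaching a clique tree of each $H_i$ (resp.\ $H_i'$) to a central node whose bag is $\Omega$, attaching through a maximal clique containing $S_i\subseteq\Omega$. Cutting both clique trees along the edge leading to the $j$-th subtree exhibits them as splits $\angs{G_1,\T_1,G_2,\T_2}$ and $\angs{G_1,\T_1,G_2,\T_2'}$ that literally share the same first side (the $\Omega$-node together with the subtrees for $i\neq j$), so the partitions of vertices, and hence $G_1$ and $G_2$, coincide. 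Since $\kappa$ is a bag cost, its value on the second side depends only on the bags of the subtree, i.e.\ on the maximal cliques of $H_j$ (resp.\ $H_j'$); thus $\kappa(G_2,\T_2)=\kappa(\R(S_j,C_j),H_j)\le\kappa(\R(S_j,C_j),H_j')=\kappa(G_2,\T_2')$, while $\kappa(G_1,\T_1)=\kappa(G_1,\T_1)$ trivially. Split monotonicity then gives $\kappa(\R(S,C),H)\le\kappa(\R(S,C),H')$. Replacing the $H_i$ by the precomputed optima one index at a time and chaining these inequalities proves (a). The final selection in lines~\ref{line:saturated pmc graph}--\ref{line:triang graph} is the same argument with $G$ in the role of $\R(S,C)$ and Theorem~\ref{theorem:pmc-block-triangulation} in place of Theorem~\ref{theorem:realization triangulation}, yielding a minimum-$\kappa$ minimal triangulation of $G$.

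For efficiency I would bound every quantity by a polynomial in $m=|\minseps(G)|$ and $n=|\nodes(G)|$. Computing $\minseps(G)$ and $\pmcs(G)$ takes time polynomial in $m$ and $n$ by combining Berry et al.~\cite{conf/wg/BerryBC99} with Bouchitt\'e and Todinca~\cite{DBLP:journals/tcs/BouchitteT02}; the number of full blocks is $O(nm)$ and $|\pmcs(G)|$ is polynomial in $m$ and $n$. The loop runs over $O(nm)$ blocks, each iteration ranges over at most $|\pmcs(G)|$ relevant PMCs, and for each it forms $\HByPMC{\R(S,C)}{\Omega}$ from previously stored triangulations, builds a clique tree in polynomial time, and evaluates $\kappa$ in polynomial time by hypothesis. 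Hence the total running time is polynomial in $m$ and $n$, and in particular polynomial whenever $G$ comes from a poly-MS class. The main obstacle is the bookkeeping in step~(a): aligning the split decomposition of the composite clique tree with the block decomposition so that split monotonicity applies verbatim, while using the fact that $\kappa$ is a bag cost to identify the cost of a hanging subtree with the cost of the corresponding block triangulation independently of the ambient graph.
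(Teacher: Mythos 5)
Your proposal takes essentially the same route as the paper's proof: the paper likewise builds a clique tree of $H=\bigcup_i H_i\cup\K_\Omega$ by attaching clique trees of the block triangulations to a central node with bag $\Omega$ (its construction $\T_H$, verified via a characterization of $\maxcliques(H)$), proves a one-block exchange lemma from split monotonicity (Lemma~\ref{lem:split monotone cost implication}), and then runs your induction in the equivalent form of a smallest-counterexample argument over blocks ordered by cardinality, finishing the top level with Theorem~\ref{theorem:pmc-block-triangulation} exactly as you do. One technical slip to repair: when you cut off the $j$-th subtree, the definition of a split forces the second side to be $G_2=\induced{G}{S_j\cup C_j}$, an \emph{induced subgraph} of the ambient graph, never the saturated realization, and a bag cost may genuinely depend on its graph argument (fill-in does); so your claim that the subtree's cost ``depends only on the bags of the subtree'' is false in general, and the induction invariant and exchange hypothesis must be stated as $\kappa(\induced{G}{(S_j,C_j)},\cdot)$ rather than $\kappa(\R(S_j,C_j),\cdot)$ --- which is precisely how line~\ref{line:saturated pmc block} of $\algname{MinTriang}$ and the paper's lemma are phrased. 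With that substitution the comparison is sound without the offending claim, since $\T_2$ and $\T_2'$ are evaluated against the same $G_2$, and the rest of your argument, including the efficiency analysis (which the paper leaves implicit), goes through.
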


\subsection{Bounded Width}
Another application of the algorithm is when we are interested only in
tree decompositions of a bounded (constant) width $b$, without making
the poly-MS assumption. Bounding the width can be accomplished by
attaching a high cost ($\infty$) to triangulations with maximal
cliques of a larger size than $b$ (lines \ref{line:saturated pmc
  block} and \ref{line:saturated pmc graph}). Furthermore, any minimal
separator larger than $b$ can not be saturated in our output, implying
full blocks of these separators can be completely disregarded in the
main loop (line~\ref{line:iterate blocks}). The limit on the width
bounds the number of minimal separators and PMCs our algorithm should
consider.  Hence, if this limit is considered constant then we get a
polynomial bound on the execution time of our algorithm, without
assuming poly-MS, if we revise line~1 to compute only the minimal
separators and PMCs of size at most $b$. We refer to the revised
algorithm as $\algname{MinTriangB}\angs{b,\kappa}(G)$.  We have the
following.

\begin{theorem}\label{thm:monotone-ptime-b}
  Let $b$ be a fixed natural number, and $\kappa$ a split monotone bag
  cost computable in polynomial time.
  $\algname{MinTriangB}\angs{b,\kappa}(G)$ returns, in polynomial
  time, a minimal triangulation $H$ of width at most $b$ (if one
  exists) with a minimal $\kappa(G,H)$.
\end{theorem}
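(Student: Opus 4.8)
The plan is to derive Theorem~\ref{thm:monotone-ptime-b} from the correctness of $\algname{MinTriang}$ (Theorem~\ref{thm:alg is correct}) by \emph{compiling the width bound into the cost function} and then showing that the size restrictions imposed by $\algname{MinTriangB}$ prune only triangulations that violate this bound. Concretely, I would define the \emph{width-capped} cost $\kappa_b(G,\T)=\kappa(G,\T)$ if $\width(G,\T)\le b$ and $\kappa_b(G,\T)=\infty$ otherwise. First I would verify that $\kappa_b$ is again a split-monotone bag cost computable in polynomial time, so that $\algname{MinTriang}\angs{\kappa_b}$ is guaranteed by Theorem~\ref{thm:alg is correct} to return a minimum-$\kappa_b$ minimal triangulation. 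If any minimal triangulation of width at most $b$ exists, its $\kappa_b$-cost is finite and equals its $\kappa$-cost, whereas every triangulation of larger width has cost $\infty$; hence a minimum-$\kappa_b$ triangulation is exactly a width-$\le b$ minimal triangulation of minimum $\kappa$, as required.

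For split monotonicity of $\kappa_b$: since $\width$ and $\kappa$ are bag costs, so is $\kappa_b$, and the only real content is the monotonicity inequality. Suppose $\T$ and $\T'$ split over the same $\angs{G_1,\cdot,G_2,\cdot}$ as $\angs{G_1,\T_1,G_2,\T_2}$ and $\angs{G_1,\T_1',G_2,\T_2'}$, with $\kappa_b(G_i,\T_i)\le\kappa_b(G_i,\T_i')$ for $i=1,2$. I would use that splitting partitions the bags, so $\width(G,\T)=\max\bigl(\width(G_1,\T_1),\width(G_2,\T_2)\bigr)$, and likewise for $\T'$. If $\width(G,\T')>b$ then $\kappa_b(G,\T')=\infty$ and the inequality is trivial. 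Otherwise both $\T_i'$ have width $\le b$, so their costs are finite; the hypothesis then forces each $\T_i$ to have width $\le b$ as well (an infinite left-hand side cannot be $\le$ a finite right-hand side), giving $\width(G,\T)\le b$ and $\kappa_b(G_i,\T_i)=\kappa(G_i,\T_i)\le\kappa(G_i,\T_i')$. Split monotonicity of $\kappa$ then yields $\kappa(G,\T)\le\kappa(G,\T')$, i.e.\ $\kappa_b(G,\T)\le\kappa_b(G,\T')$. Polynomial-time computability is immediate, as one computes $\width$ and $\kappa$ and compares.

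Next I would show that $\algname{MinTriangB}$, which discards minimal separators of size $>b$ and PMCs of size $>b+1$, is a faithful but faster realization of $\algname{MinTriang}\angs{\kappa_b}$. The key fact is that every minimal triangulation $H$ of width $\le b$ is assembled (via Theorems~\ref{theorem:pmc-block-triangulation} and~\ref{theorem:realization triangulation}) only from such small objects: its bags are maximal cliques of size $\le b+1$, and, because the minimal separators of a chordal graph are exactly the intersections of adjacent maximal cliques in a clique tree and hence proper subsets of a maximal clique, each minimal separator of $H$---which by Theorem~\ref{thm:ParraS97} is a minimal separator of $G$---has size $\le b$. Consequently, dropping the larger separators and PMCs cannot remove any width-$\le b$ candidate from the dynamic program, while conversely every triangulation the pruned program builds has all its maximal cliques of size $\le b+1$ and thus width $\le b$; the $\algname{argmin}$ steps therefore return a minimum-$\kappa$ width-$\le b$ triangulation whenever one exists. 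For efficiency, the number of vertex subsets of size $\le b+1$ is $O(|\nodes(G)|^{b+1})$, polynomial for fixed $b$; each can be tested for being a minimal separator or a PMC in polynomial time, so line~1 produces polynomially many objects, and by Theorem~\ref{thm:alg is correct} the remaining computation runs in time polynomial in this (now polynomial) count, yielding the claimed bound without poly-MS.

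The hard part will be the two correctness lemmas that make the reduction valid: checking that the width cap preserves split monotonicity (the case analysis above, hinging on $\width$ being the maximum over the two split sides), and the size-bound argument establishing that the pruning is lossless (which rests on the chordal-graph characterization of minimal separators as proper subcliques of maximal cliques). Once these are in hand, the running-time estimate is routine counting.
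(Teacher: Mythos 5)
Your proposal is correct and follows essentially the same route as the paper, which establishes Theorem~\ref{thm:monotone-ptime-b} only via the sketch preceding it in Section~\ref{sec:mintriang}: attach cost $\infty$ to triangulations exceeding the width bound, discard minimal separators and PMCs that are too large, and observe that only $O(|\nodes(G)|^{b+1})$ candidate vertex sets remain for fixed $b$, so Theorem~\ref{thm:alg is correct} yields polynomial time. The two verifications you single out as the hard part---split monotonicity of the width-capped cost (which goes through since the paper's cost functions are finite-valued, so an $\infty$ on one side of the split forces the width bound on that side) and losslessness of the pruning via the fact that minimal separators of a chordal graph are proper subsets of its maximal cliques---are precisely the details the paper leaves implicit, and your size bounds ($b$ for separators, $b+1$ for PMCs) tighten the paper's slightly loose ``size at most $b$'' phrasing.
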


\section{Enumeration Algorithm}~\label{sec:enum} 
We now present our algorithm for ranked enumeration of minimal
triangulations. A strong connection between minimal triangulations and
proper tree decompositions is stated in
Theorem~\ref{thm:clique-tree-essentials}. Carmeli et
al.~\cite{DBLP:conf/pods/CarmeliKK17} show how enumeration of proper
tree decompositions reduces to that of minimal
triangulations. Formally, we have the following.

\def\propmintraingspropertds{ Let
  $\G$ be class of graphs, and $\kappa$ a bag cost.  If, on graphs of
  $\G$, the minimal triangulations can be enumerated with polynomial
  delay by increasing $\kappa$, then so can the proper tree
  decompositions.  }
\begin{proposition}\label{prop:mintraings-propertds}
\propmintraingspropertds
\end{proposition}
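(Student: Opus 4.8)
**The plan is to reduce the enumeration of proper tree decompositions to that of minimal triangulations using the bijective correspondence from Theorem~\ref{thm:clique-tree-essentials}(3).** By that theorem, a tree decomposition is proper if and only if it is a clique tree of some minimal triangulation of $G$. The key structural fact I would invoke is that every minimal triangulation $H$ (being chordal) has at least one clique tree, and $\kappa$ being a bag cost assigns the same value $\kappa(G,H)$ to \emph{all} clique trees of $H$ (since any two clique trees of $H$ have the same bags, namely $\maxcliques(H)$, and are therefore bag equivalent). Thus the cost $\kappa(G,\T)$ of a proper tree decomposition $\T$ depends only on the minimal triangulation $H$ whose clique tree it is, and equals $\kappa(G,H)$.

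\smallskip

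\textbf{The main step is a two-level enumeration that outputs, for each minimal triangulation $H$ in increasing $\kappa$-order, all of its clique trees.} Given a polynomial-delay solver that enumerates the minimal triangulations $H_1, H_2, \ldots$ by increasing $\kappa(G,H_i)$, I would process them one at a time; upon receiving $H_i$, I would compute its maximal cliques (in linear time, since $H_i$ is chordal) and then enumerate all clique trees of $H_i$, printing each as a proper tree decomposition $\T$ with $\kappa(G,\T)=\kappa(G,H_i)$. Since all clique trees of $H_i$ share the cost $\kappa(G,H_i)$, and since $\kappa(G,H_i)\le\kappa(G,H_{i+1})$, the concatenated output is sorted by increasing $\kappa$, which is exactly the required order guarantee. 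By Theorem~\ref{thm:clique-tree-essentials}(3) every proper tree decomposition arises as a clique tree of exactly one minimal triangulation (its bags determine $\maxcliques(H)$ and hence $H$), so the enumeration is complete and non-redundant.

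\smallskip

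\textbf{The delicate point is enumerating all clique trees of a single chordal graph with polynomial delay, and guaranteeing the overall delay bound.} A chordal graph can have exponentially many clique trees, so I cannot afford to ``pause'' between consecutive minimal triangulations for longer than a polynomial in $|\nodes(G)|$. The clique trees of a chordal graph $H$ are precisely the maximum-weight spanning trees of its clique-intersection (weighted clique) graph, where edge weights are the sizes of adhesions; this is a classical characterization. Enumerating spanning trees of a given weight class with polynomial delay is a standard task (e.g.\ via the reverse-search / Gabow--Myers style techniques), and the number of maximal cliques is less than $|\nodes(G)|$ by Theorem~\ref{thm:clique-tree-essentials}(2), so each such tree has polynomial size and can be produced with delay polynomial in $|\nodes(H)|=|\nodes(G)|$. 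The one subtlety to verify is that the transition between finishing $H_i$ and emitting the first clique tree of $H_{i+1}$ also respects polynomial delay: this holds because obtaining $H_{i+1}$ costs polynomial delay by hypothesis, and computing $\maxcliques(H_{i+1})$ plus the first spanning tree is polynomial in $|\nodes(G)|$. Since both the outer triangulation solver and the inner clique-tree enumerator operate with delay polynomial in $|\nodes(G)|$, their composition does as well, completing the argument. I would note that this reduction is essentially that of Carmeli et al.~\cite{DBLP:conf/pods/CarmeliKK17}, here merely observed to preserve the $\kappa$-increasing order because $\kappa$ is a bag cost.
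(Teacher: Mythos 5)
Your proposal is correct and takes essentially the same approach as the paper's own proof: both reduce the task to enumerating minimal triangulations by increasing cost (using Theorem~\ref{thm:clique-tree-essentials} and the fact that a bag cost assigns the same value to all clique trees of a given triangulation, which form disjoint families across distinct triangulations), and both then enumerate the clique trees of each chordal $H$ with polynomial delay via the classical characterization of clique trees as maximum-weight spanning trees of the weighted clique graph, with the bound $|\maxcliques(H)|<|\nodes(G)|$ controlling the delay. The only difference is cosmetic: you cite generic reverse-search/Gabow--Myers style spanning-tree enumeration where the paper cites Jordan and Yamada et al.~\cite{DBLP:journals/ijcm/YamadaKW10}, which does not change the argument.
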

So, in the remainder of this section we consider only minimal
triangulations.  Our algorithm is an application of Lawler-Murty's
procedure~\cite{LAWLER,MURTY}, which reduces ranked enumeration to
optimization under \e{inclusion} and \e{exclusion
  constraints}. The goal of this procedure is to
enumerate sets $A$ of items $a$ by an increasing cost function
$c(A)$. Here, an item $a$ is a minimal separator of $G$ and each set
$A$ is a maximal set of pairwise-parallel minimal separators. Recall
from Theorem~\ref{thm:ParraS97} that each such set $A$ identifies a
minimal triangulation $H$. In particular, the score $c(A)$ is
$\kappa(G,H)$. A constraint of both types is represented as a minimal
separator $S$.  A minimal triangulation $H$ satisfies sets $I$ and $X$
of inclusion and exclusion constraints, respectively, if
$I\subseteq\minseps(H)$ and $X\cap\minseps(H)=\emptyset$. We denote
such a pair as $[I,X]$, and say that $H$ satisfies $[I,X]$ if it
satisfies both $I$ and $X$.

In the rest of this section we will show how $\algname{MinTriang}$
can be adapted to return a minimal triangulation satisfying input constraints,
and present our algorithm for ranked enumeration. Later, we adapt the
algorithm to enumerate the minimal triangulations of a bounded width.

\subsection {Incorporating Constraints}
We would like to restrict $\algname{MinTriang}$
(Figure~\ref{alg:triang}) to return a minimal triangulation that
satisfies a set $[I,X]$ of inclusion and exclusion constraints. Our
approach is to alter our cost function $\kappa$ to assign an infinite
cost to triangulations that violate the constraints. Yet, while doing
so we need to verify that all needed assumptions hold.

To check constraints during the $\algname{MinTriang}$ agorithm we need
to take into consideration two problems that may occure while
triangulating a realization $\R(S,C)$ of $G$. First, $I$ may include
vertices that are not in $\R(S,C)$, so $I$ will be violated for the
wrong reasons. Second, it might be the case that a minimal separator
$S$ of $G$ is not a minimal separator of $\R(S,C)$, but it will be a
minimal separator in a triangulation that contains this
triangulation. Therefore, we use the following equivalent definition
(see Theorem \ref{thm:ParraS97}).  We say that $H$ \e{satisfies}
$[I,X]$, in notation $H \models [I,X]$, if for all $S\in I\cup X$ with
$S\subseteq\nodes(H)$ it holds that $S$ is a clique of $H$ if $S\in I$
and $S$ is \e{not} a clique of $H$ if $S\in X$.



Given a constraint $[I,X]$ and a split-monotone bag cost $\kappa$
(over tree decompositions), we define the cost function $\kappa[I,X]$
as follows.
\begin{equation}
\label{eq:kappa I,X}
\kappa[I,X](G,\T) \eqdef \begin{cases}
\kappa(G,\T) & \mbox{if $H_{\T}\models [I,X]$};\\
\infty & \mbox{otherwise.}
\end{cases}
\end{equation}
where, $H_\T$ is the graph obtained from $G$ by saturating every bag
of $\T$. A crucial lemma is the following, stating that if $\kappa$ is
a split-monotone bag cost, then so is $\kappa[I,X]$.

\def\KappaConstraintsIsOK { Let $\kappa$ be a cost function, $G$ a
  graph, and $[I,X]$ a set of constraints on minimal
  triangulations of $G$.
\begin{cenumerate}
\item If $\kappa$ is a split-monotone bag cost, then so is
  $\kappa[I,X]$.
\item If $\kappa$ can be computed in polynomial time in the size of
  $G$, then $\kappa[I,X]$ can be computed in polynomial time in the
  size of $G$, $I$ and $X$.
\end{cenumerate}
}

\begin{lemma}\label{lemma:monotonic-constraints}
\label{lem:kappa const split monotone}
\KappaConstraintsIsOK
\end{lemma}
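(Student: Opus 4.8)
The plan is to prove each part separately. For Part~1, I need to show that $\kappa[I,X]$ inherits both being a bag cost (invariant under bag equivalence) and being split monotone. The key observation I would exploit is that satisfaction of $[I,X]$, in the refined sense $H_\T \models [I,X]$, depends only on the saturated graph $H_\T$, which in turn depends only on the set of bags $\bags(\T)$. Hence the predicate ``$H_\T \models [I,X]$'' is itself invariant under bag equivalence, so if $\T_1$ and $\T_2$ are bag equivalent then $H_{\T_1} = H_{\T_2}$ and therefore $\kappa[I,X](G,\T_1) = \kappa[I,X](G,\T_2)$; invariance under bag equivalence follows immediately from that of $\kappa$ together with this observation.

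The substantive part is split monotonicity. Suppose $\T$ and $\T'$ split as $\angs{G_1,\T_1,G_2,\T_2}$ and $\angs{G_1,\T_1',G_2,\T_2'}$ with $\kappa[I,X](G_i,\T_i) \le \kappa[I,X](G_i,\T_i')$ for $i=1,2$; I must show $\kappa[I,X](G,\T) \le \kappa[I,X](G,\T')$. The case to worry about is when the right-hand side is finite, i.e.\ $H_{\T'} \models [I,X]$, since otherwise the inequality is trivial. In that finite case, each $\kappa[I,X](G_i,\T_i')$ is finite, so each $\kappa[I,X](G_i,\T_i)$ is finite (being $\le$ a finite value), meaning $H_{\T_i} \models [I,X]$ for $i=1,2$ and the values coincide with the unconstrained $\kappa$. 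The crucial step I expect to be the main obstacle is the \emph{propagation of constraint satisfaction across the split}: I need to argue that if both sides $\T_1,\T_2$ of the decomposition individually satisfy $[I,X]$, then so does the recombined $\T$, i.e.\ $H_\T \models [I,X]$. This is where I must use the refined definition of satisfaction and the structure of minimal separators. Concretely, for each $S \in I \cup X$ with $S \subseteq \nodes(H_\T)$, I would use the junction-tree property to show that $S$ lives (up to the shared adhesion along the split edge $e$) within one of the two sides $G_1$ or $G_2$, or within the bag straddling the cut; then being-a-clique of $S$ in $H_\T$ reduces to being-a-clique in the relevant side's saturated graph, letting me invoke the side's satisfaction. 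Once $H_\T \models [I,X]$ is established, the finite values again coincide with $\kappa$, and the desired inequality follows from split monotonicity of $\kappa$ itself.

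For Part~2, the polynomial-time claim reduces to checking that the predicate $H_\T \models [I,X]$ is decidable in polynomial time, after which $\kappa[I,X]$ costs the evaluation of $\kappa$ (polynomial by hypothesis) plus this check. Given $\T$, one computes $H_\T$ by saturating all bags, which is polynomial in $|G|$ and the size of $\T$; then for each $S \in I \cup X$ with $S \subseteq \nodes(H_\T)$ one tests in polynomial time whether $S$ is a clique of $H_\T$, and verifies the required clique/non-clique status dictated by membership in $I$ or $X$. This is clearly polynomial in the sizes of $G$, $I$, and $X$.

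I expect the only real subtlety to be Part~1's split-monotonicity argument, specifically the claim that side-wise satisfaction implies satisfaction of the recombined decomposition; the rest is bookkeeping built on the observations that satisfaction is a bag-level property and that infinite costs dominate the inequality trivially.
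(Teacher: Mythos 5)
Your overall route is the paper's own argument run in the contrapositive direction: the paper fixes the case where $H_{\T}$ violates a constraint and shows $H_{\T'}$ must then violate one too (so both costs are $\infty$), whereas you condition on $\kappa[I,X](G,\T')$ being finite and push satisfaction down to the two sides and back up. Your restriction step ($H_{\T'}\models[I,X]$ implies each side satisfies $[I,X]$, via $H_{\T'_i}=\induced{H_{\T'}}{\nodes(G_i)}$, which holds because the adhesion is contained in a bag on each side), the trivial infinite-cost cases, and your explicit treatments of bag-equivalence invariance and of Part~2 are all fine --- the latter two the paper leaves implicit.

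The one genuine flaw is in your crux recombination step. The dichotomy you assert --- that every $S\in I\cup X$ lies within one of the two sides ``or within the bag straddling the cut'' --- is false: in a split no bag straddles the cut (each bag belongs to one subtree), and a constraint set $S$ may contain a vertex of $\nodes(G_1)\setminus\nodes(G_2)$ together with a vertex of $\nodes(G_2)\setminus\nodes(G_1)$. For such a straddling $S$, ``being a clique in $H_{\T}$'' does not reduce to either side, so side-wise satisfaction alone cannot close the argument. What rescues the claim --- and is exactly the paper's step ``$U$ is not contained in any node of $\T$ or $\T'$ and can not be a clique in $H$ or $H'$'' --- is that every edge of a saturation lies inside some bag and every bag lies in one side, so a straddling $S$ is a clique in \emph{neither} $H_{\T}$ nor $H_{\T'}$. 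Hence a straddling $S\in X$ is vacuously satisfied by $H_{\T}$, while a straddling $S\in I$ would already be violated by $H_{\T'}$, contradicting the finiteness hypothesis you are working under, so it cannot occur. You have that hypothesis in scope but never invoke it for this purpose; as written, the step would fail precisely for straddling inclusion constraints. With this patch your proof is complete and coincides with the paper's.
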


Combining Lemma~\ref{lem:kappa const split monotone} with
Theorem~\ref{thm:alg is correct}, we conclude the following theorem
that we will use in the next section.

\begin{theorem}
\label{thm:monotone-constraints-ptime}
Let $\G$ be a poly-MS class of graphs and $\kappa$ a split-monotone
bag cost computable in polynomial time over $\G$.  For all $G\in\G$
and constraints $[I,X]$ over $G$, 
$\algname{MinTriang}\angs{\kappa[I,X]}(G)$ returns a minimum
$\kappa[I,X]$ minimal triangulation in polynomial time.
\end{theorem}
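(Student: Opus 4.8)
The plan is to recognize that this theorem is an immediate composition of Lemma~\ref{lem:kappa const split monotone} with Theorem~\ref{thm:alg is correct}; essentially all of the genuine work has already been carried out in establishing the lemma. First I would verify that the constrained cost $\kappa[I,X]$ meets the two hypotheses required by Theorem~\ref{thm:alg is correct}. By part~1 of Lemma~\ref{lem:kappa const split monotone}, since $\kappa$ is a split-monotone bag cost, so is $\kappa[I,X]$. By part~2, since $\kappa$ is computable in polynomial time on graphs of $\G$, the function $\kappa[I,X]$ is computable in polynomial time in the combined size of $G$, $I$, and $X$. Thus $\kappa[I,X]$ is a split-monotone bag cost that is polynomial-time computable, which is exactly the assumption under which $\algname{MinTriang}$ is analyzed.

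Next I would instantiate Theorem~\ref{thm:alg is correct} with the cost function $\kappa[I,X]$ in place of $\kappa$. This yields directly that $\algname{MinTriang}\angs{\kappa[I,X]}(G)$ returns a minimal triangulation of minimal $\kappa[I,X]$-cost, in time polynomial in $|\minseps(G)|$ and in the per-call cost of evaluating $\kappa[I,X]$. To turn this into a polynomial-time guarantee I would invoke the poly-MS assumption: for $G\in\G$ we have $|\minseps(G)|\leq p(|\nodes(G)|)$ for a fixed polynomial $p$. Moreover, the inclusion and exclusion constraints are themselves minimal separators of $G$, so $I,X\subseteq\minseps(G)$ and hence $|I|,|X|$ are likewise polynomially bounded; this keeps the per-call evaluation cost polynomial in $|\nodes(G)|$. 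Combining these bounds gives a total running time polynomial in $|\nodes(G)|$.

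Since a minimal triangulation that violates $[I,X]$ receives cost $\infty$ under $\kappa[I,X]$ while any satisfying one retains its finite $\kappa$-cost, the minimizer returned by the algorithm is a minimum-$\kappa$ minimal triangulation among those satisfying the constraints whenever one exists (and otherwise merely certifies, via an infinite optimum, that the constraints are unsatisfiable). I do not anticipate a real obstacle here: the subtle issues---namely that constraint checking must behave correctly on the block realizations and intermediate subgraphs processed by the dynamic program, and that wrapping constraints into the cost preserves split-monotonicity---are precisely what Lemma~\ref{lem:kappa const split monotone} already handles through the refined $\models$ relation. The only point demanding a line of care is confirming that the extra evaluation cost introduced by the constraints stays polynomial, which follows from $I,X\subseteq\minseps(G)$ under poly-MS.
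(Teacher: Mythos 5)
Your proposal is correct and matches the paper exactly: the paper derives Theorem~\ref{thm:monotone-constraints-ptime} precisely by combining Lemma~\ref{lem:kappa const split monotone} (which supplies split-monotonicity and polynomial-time computability of $\kappa[I,X]$) with Theorem~\ref{thm:alg is correct}, using poly-MS to bound $|\minseps(G)|$ and hence the running time. Your additional observations---that $I,X\subseteq\minseps(G)$ keeps the evaluation cost polynomial, and that an infinite optimum signals an empty constraint class---are faithful elaborations of the same argument rather than a departure from it.
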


\def\Q{\mathcal{Q}}

\begin{algseries}{b}{Ranked enumeration of the minimal triangulations
    by a split-monotone bag cost $\kappa$.\label{alg:enum}}
\begin{insidealg}{RankedTriang\angs{\kappa}}{$G$}
\STATE $\Q\asn$ empty priority queue by $\kappa$ (lowest first) \label{line:first}
\STATE $H\asn\algname{MinTriang}\angs{\kappa}(G)$\label{line:firstmintriang}
\STATE $\Q.\algname{push}(\angs{H,\emptyset,\emptyset})$ \label{line:first-insert}
\WHILE{$\Q\neq\emptyset$} \label{line:while}
\STATE $\angs{H,I,X}\asn\Q.\algname{pop()}$ \label{line:pop}
\STATE $\mathbf{print}(H)$ \label{line:print}
\STATE let $\minseps(H)\setminus I$ be $\set{S_1,\dots,S_k}$ \label{line:setminus}
\FOR{$i=1,\dots,k-1$}
\STATE $I_i\asn I\cup\set{S_1,\dots,S_{i-1}}$
\STATE $X_i\asn X\cup\set{S_i}$
\STATE $H_i\asn\algname{MinTriang}\angs{\kappa[I_i,X_i]}(G)$ \label{line:secondmintiang}
\IF{$H_i$ satisfies $[I_i,X_i]$} \label{line:if}
\STATE $\Q.\algname{push}(\angs{H_i,I_i,X_i})$ \label{line:secondpush}
\ENDIF
\ENDFOR \label{line:endfor}
\ENDWHILE
\end{insidealg}
\end{algseries}

\subsection{The Enumeration Algorithm}

Our enumeration algorithm, $\algname{RankedTriang}$, is depicted in
Figure~\ref{alg:enum}. It is parameterized by a cost function
$\kappa$, takes as input a graph $G$, and enumerates (via the
$\mathbf{print}$ command of line~\ref{line:print}) the minimal
triangulations of $G$ by increasing cost.

The algorithm maintains a priority queue $\Q$. Each element in $\Q$
represents a partition of the space of minimal triangulations. Here, a
\e{partition} is associated with an inclusion-exclusion constraint
$[I,X]$, and it consists of all minimal triangulations that satisfy
$[I,X]$. In $\Q$, the partition $[I,X]$ is represented as a triple
$\angs{H,I,X}$, where $H$ is a minimum-cost member (minimal
triangulation) in the partition.  Whenever we reach the loop of
line~\ref{line:while}, $\Q$ forms a partition of the entire space of
minimal triangulations that have not been printed yet. In particular,
the first element inserted into $\Q$ (in line~\ref{line:first-insert})
is the triple $\angs{H,\emptyset,\emptyset}$, representing the entire
space of minimal triangulations, where $H$ is a minimal triangulation
of a minimum $\kappa$.

Priority in $\Q$ is determined by the cost of the minimal
triangulation.  In particular, the element removed in
line~\ref{line:pop} is triple $\angs{H,I,X}$ such that
$\kappa(H)\leq\kappa(H')$ for all triple $\angs{H',I',X'}$ in $\Q$. In
each iteration of the while loop, such $\angs{H,I,X}$ is removed and
printed. Then, in the rest of the iteration
(lines~\ref{line:setminus}--\ref{line:endfor}), the remainder of the
partition $[I,X]$ (that is, every minimal triangulation there except
fot $H$) is split into new partitions, all inserted to $Q$
(line~\ref{line:secondpush}). The new partitions are constructed as
follows. Recall that $H$ contains all of the minimal separators in
$I$. Let $S_1,\dots,S_k$ be the set of minimal separators of $H$ that
are \e{not} in $I$. The first partition, $[I_1,X_1]$, is obtained from
$[I,X]$ by copying $I$ and inserting $S_1$ into $X_1$. In the second
partition we add $S_1$ to $I$, but now $X_2$ is $X\cup\set{S_2}$. We
likewise continue for $i=2,\dots i-1$, where each $I_i$ consists of
$I$ and $S_1,\dots,S_{i-1}$, and each $X_i$ is $X\cup\set{S_i}$. It is
an easy observation that the $[I_i,X_i]$ form a proper partition. In
particular, observe that no other minimal triangulation can contain all
of the minimal separators of $H$, since no two triangulations $H$ and $H'$
satisfy $\minseps(H)\subsetneq\minseps(H')$, as implied by
Theorem~\ref{thm:ParraS97}. For each partition $[I_i,X_i]$ a
minimum-cost minimal triangulation $H_i$ w.r.t.~$\kappa[I_i,X_i]$ is
constructed (line~\ref{line:secondmintiang}). Observe that $H_i$
satisfies $[I_i,X_i]$ if and only if this partition is nonempty;
hence, the test of line~\ref{line:if} tests whether the partition is
nonempty, and if so inserts $\angs{H_i,I_i,X_i}$ to $\Q$.

The correctness and efficiency of the algorithm are stated in the
following theorem.

\noamdone{Need some kind of connecting sentence, and maybe a reference
  to the extended version for proofs.}

\begin{theorem}\label{thm:rankedtriang}
  $\algname{RankedTriang}\angs{\kappa}(G)$ enumerates the minimal
  triangulations $H$ of $G$ by increasing $\kappa(G,H)$. Moreover, the
  algorithm enumerates with polynomial delay if:
\begin{cenumerate}
\item $\kappa$ can be computed in polynomial time, \e{and}
\item each $\algname{MinTriang}\angs{\kappa[I_i,X_i]}(G)$
  takes polynomial time.
\end{cenumerate}
\end{theorem}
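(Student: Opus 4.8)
The plan is to prove correctness (the algorithm prints every minimal triangulation exactly once, in nondecreasing $\kappa$ order) and then the polynomial-delay bound, exploiting the Lawler-Murty structure together with Theorem~\ref{thm:ParraS97}. For correctness, I would first establish the loop invariant that, at every arrival at line~\ref{line:while}, the partitions $[I,X]$ stored in $\Q$ cover exactly the set of minimal triangulations not yet printed, and that these partitions are pairwise disjoint. The base case is immediate since $\Q$ holds only $\angs{H,\emptyset,\emptyset}$, whose partition is all minimal triangulations. For the inductive step I would verify that the split performed in lines~\ref{line:setminus}--\ref{line:endfor} is a genuine partition of $\{H' \models [I,X] : H' \neq H\}$: writing $\minseps(H)\setminus I = \{S_1,\dots,S_k\}$, a minimal triangulation $H'$ satisfying $[I,X]$ and differing from $H$ must exclude at least one $S_i$ (otherwise $\minseps(H')\supseteq\minseps(H)$, which by Theorem~\ref{thm:ParraS97} forces $\minseps(H')=\minseps(H)$ and hence $H'=H$). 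Letting $i$ be the least index with $S_i\notin\minseps(H')$, we get $H'\models[I_i,X_i]$ and $H'\not\models[I_j,X_j]$ for $j\neq i$; this shows the $[I_i,X_i]$ are disjoint and exhaust the remainder. Note the loop runs only to $i=k-1$, which is correct because the ``partition'' $[I\cup\{S_1,\dots,S_k\},X]$ would contain only $H$ itself, already printed.

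Next I would argue the ordering guarantee. Each queue element $\angs{H,I,X}$ carries a minimum-cost member $H$ of its partition, computed by $\algname{MinTriang}\angs{\kappa[I,X]}$, which by Theorem~\ref{thm:monotone-constraints-ptime} (via Lemma~\ref{lem:kappa const split monotone} and Theorem~\ref{thm:alg is correct}) indeed returns a $\kappa[I,X]$-minimum triangulation. The key monotonicity fact is that when we pop $\angs{H,I,X}$ and split it, every newly created $H_i$ lies in a sub-partition of $[I,X]$ and is distinct from $H$, so $\kappa(H_i)\geq\kappa(H)$ because $H$ was a minimum over the whole partition $[I,X]\supseteq[I_i,X_i]$. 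Combining this with the priority-queue discipline---we always pop the globally cheapest stored representative---a standard Lawler-Murty induction shows that the sequence of printed costs is nondecreasing: any not-yet-printed triangulation sits in some current partition and therefore costs at least the representative of that partition, which costs at least the element just popped. Together with the covering invariant, this yields that each minimal triangulation is printed exactly once and in increasing $\kappa$ order.

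For the polynomial-delay claim, I would bound the work between two consecutive $\mathbf{print}$ statements. One pop triggers at most $k-1 < |\nodes(G)|$ iterations, since by Theorem~\ref{thm:clique-tree-essentials}(2) a chordal graph has fewer maximal cliques than vertices and by Theorem~\ref{thm:ParraS97} $|\minseps(H)|$ is likewise bounded (a minimal triangulation has fewer minimal separators than vertices). Each iteration performs one call to $\algname{MinTriang}\angs{\kappa[I_i,X_i]}$, a satisfaction test, and a queue push; under hypotheses (1) and (2) each such call is polynomial, and the test $H_i\models[I_i,X_i]$ and the priority operations are polynomial as well. Hence the total time from one printed answer to the next is bounded by $|\nodes(G)|$ times a polynomial, i.e.\ polynomial in $|G|$---giving polynomial delay. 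I would also note that $\Q$ is nonempty whenever unprinted triangulations remain (by the covering invariant), so the algorithm never idles before delivering the next answer, and that the delay before the \emph{first} answer is just the cost of lines~\ref{line:firstmintriang}--\ref{line:first-insert}, again polynomial under the hypotheses.

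The main obstacle I anticipate is the disjoint-exhaustive partition argument, specifically the reliance on the no-containment property $\minseps(H)\subsetneq\minseps(H')$ being impossible. This is exactly where Theorem~\ref{thm:ParraS97} is essential: because a minimal triangulation is \emph{identified} by its maximal pairwise-parallel set of minimal separators, equality of separator sets forces equality of triangulations, and maximality rules out strict containment. I would make sure to invoke this cleanly, since without it one could not guarantee that the $[I_i,X_i]$ exhaust the remainder of the partition (the $S_k$-including branch could spuriously contain a second triangulation), which would break both the no-duplicates and the completeness parts of correctness.
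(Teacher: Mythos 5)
Your overall route---the Lawler--Murty covering invariant, disjointness and exhaustiveness of the cells $[I_i,X_i]$ via Theorem~\ref{thm:ParraS97}, nondecreasing printed costs from the priority-queue discipline, and a delay bound of fewer than $|\nodes(G)|$ constrained calls to $\algname{MinTriang}$---is exactly the argument the paper sketches around Theorem~\ref{thm:rankedtriang}, and your ordering and delay parts are sound. However, your exhaustiveness step contains a genuine error. Your own least-index construction assigns to each $H'\neq H$ with $H'\models[I,X]$ the least $i\in\set{1,\dots,k}$ such that $S_i\notin\minseps(H')$; the index $i=k$ can occur, and the cell $[I_k,X_k]=[I\cup\set{S_1,\dots,S_{k-1}},\,X\cup\set{S_k}]$ is in general nonempty. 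Concretely, let $G$ be the $4$-cycle on $a,b,c,d$: it has exactly two (crossing) minimal separators $\set{a,c}$ and $\set{b,d}$, hence two minimal triangulations. After the first pop we have $k=1$, the needed cell is $[I,\,X\cup\set{S_1}]$ (which contains the second triangulation), and a loop running only to $k-1=0$ creates no cells at all, so coverage fails and the second triangulation is never printed. Your parenthetical justification for stopping at $k-1$ misidentifies the omitted cell: the set that ``contains only $H$ itself'' is the include-everything constraint $[I\cup\set{S_1,\dots,S_k},X]$, which is not one of the $[I_i,X_i]$ and is already absent when the loop runs to $k$; what stopping at $k-1$ actually drops is the legitimate cell $[I_k,X_k]$.

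In other words, the bound $k-1$ in the pseudocode of Figure~\ref{alg:enum} is an off-by-one slip---the paper's inline remark that no other minimal triangulation can contain all of the minimal separators of $H$ (via the maximality statement in Theorem~\ref{thm:ParraS97}) is precisely the fact that makes the cells $i=1,\dots,k$ exhaustive---and a correct proof must run the loop to $k$, or flag and repair the discrepancy, rather than endorse $k-1$ with an argument that does not apply. As written, your inductive step establishes coverage only for those $H'$ whose least excluded index is at most $k-1$, so completeness (every minimal triangulation is printed), and with it the theorem's first claim, is left unproved. Everything else---the monotonicity $\kappa(G,H_i)\geq\kappa(G,H)$ because $H$ is minimum over the superset cell $[I,X]\supseteq[I_i,X_i]$, the appeal to Theorem~\ref{thm:monotone-constraints-ptime}, the nonemptiness test via $H_i\models[I_i,X_i]$, and the bound $k<|\nodes(G)|$ on the number of per-pop calls---matches the intended argument and needs no change once the loop bound is corrected to $k$.
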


Combined with Theorem~\ref{thm:monotone-constraints-ptime}, we
conclude the following.

\begin{corollary}\label{thm:rankedtriang-polyms}
  Let $\G$ be a poly-MS class of graphs, and let $\kappa$ be a bag cost
  that is split monotone and computable in polynomial time. On graphs
  $G$ of $\G$, $\algname{RankedTriang}\angs{\kappa}(G)$ can enumerate the
  minimal triangulations of $G$ by increasing $\kappa$, with
  polynomial delay.
\end{corollary}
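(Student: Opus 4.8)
The plan is to obtain the corollary purely by composition, verifying that under its hypotheses the two sufficient conditions of Theorem~\ref{thm:rankedtriang} are met. The unconditional part of Theorem~\ref{thm:rankedtriang} already guarantees that $\algname{RankedTriang}\angs{\kappa}(G)$ enumerates the minimal triangulations of $G$ by increasing $\kappa(G,H)$; hence the only thing left to establish is the polynomial-delay guarantee, and for that it suffices to check conditions (1) and (2) of that theorem.

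Condition (1), that $\kappa$ is computable in polynomial time, holds by the hypothesis of the corollary. For condition (2), I would observe that every invocation of $\algname{MinTriang}$ inside $\algname{RankedTriang}$ has the form $\algname{MinTriang}\angs{\kappa[I_i,X_i]}(G)$, where $[I_i,X_i]$ is an inclusion/exclusion constraint over $G$ assembled from minimal separators of $G$ (lines~\ref{line:setminus}--\ref{line:endfor}). Since $\G$ is poly-MS and $\kappa$ is a split-monotone bag cost computable in polynomial time, Theorem~\ref{thm:monotone-constraints-ptime} applies to each such constraint and certifies that every $\algname{MinTriang}\angs{\kappa[I_i,X_i]}(G)$ runs in polynomial time (internally this relies on Lemma~\ref{lem:kappa const split monotone}, which ensures $\kappa[I_i,X_i]$ is again a polynomial-time split-monotone bag cost, so that Theorem~\ref{thm:alg is correct} is applicable). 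This discharges condition (2), and Theorem~\ref{thm:rankedtriang} then yields the polynomial-delay bound directly.

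For concreteness I would also make the delay structure explicit: each iteration of the main loop prints exactly one triangulation $H$ (line~\ref{line:print}) and then issues at most $|\minseps(H)| < |\nodes(G)|$ subproblem calls --- the bound holding because $H$ is chordal, so by Theorem~\ref{thm:clique-tree-essentials}(2) its clique tree, and hence its minimal-separator set, has size below $|\nodes(G)|$ --- so the work between two consecutive outputs amounts to a polynomial number of polynomial-time subroutine calls plus priority-queue operations. Since the whole argument is a pure composition of the already-established Theorems~\ref{thm:rankedtriang} and~\ref{thm:monotone-constraints-ptime}, I do not anticipate a genuine obstacle; the single point deserving care is to confirm that the constraints produced by $\algname{RankedTriang}$ are exactly the admissible constraints over $G$ covered by Theorem~\ref{thm:monotone-constraints-ptime}, which holds by construction of the $[I_i,X_i]$.
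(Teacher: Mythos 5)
Your proposal is correct and takes essentially the same route as the paper, which derives the corollary by exactly this composition: Theorem~\ref{thm:rankedtriang} supplies the order guarantee and reduces polynomial delay to its conditions (1) and (2), and Theorem~\ref{thm:monotone-constraints-ptime} (resting on Lemma~\ref{lem:kappa const split monotone} and Theorem~\ref{thm:alg is correct}) discharges condition (2) for every constrained call $\algname{MinTriang}\angs{\kappa[I_i,X_i]}(G)$ under the poly-MS hypothesis. Your added accounting that each printed answer triggers at most $|\minseps(H)|<|\nodes(G)|$ subproblem calls is a harmless refinement of what the paper leaves implicit.
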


Note that Theorem~\ref{thm:enum-ranked} (from
Section~\ref{sec:theory}) is a direct consequence of Corollary
\ref{thm:rankedtriang-polyms}.

\partitle{Bounded width}
We now consider the enumeration of the minimal triangulations of a
width bounded by a fixed bound $b$, without masking the assumption of
poly-MS, as stated in Theorem~\ref{thm:enum-ranked-btw}. To adjust
$\algname{RankedTriang}\angs{\kappa}(G)$ to this case, it suffices to
replace $\algname{MinTriang}\angs{\kappa}(G)$ and
$\algname{MinTriang}\angs{\kappa[I_i,X_i]}(G)$, in
lines~\ref{line:firstmintriang} and~\ref{line:secondmintiang}, with
$\algname{MinTriangB}\angs{b,\kappa}(G)$ and
$\algname{MinTriangB}\angs{b,\kappa[I_i,X_i]}(G)$, respectively,
building on Theorem~\ref{thm:monotone-ptime-b} instead of
Theorem~\ref{thm:alg is correct}.  (The algorithm \algname{MinTriangB}
is described right before Theorem~\ref{thm:monotone-ptime-b}.)  We can
show that all guarantees of the enumeration algorithm (as stated in
Theorem~\ref{thm:rankedtriang}) remain valid.
Combined with Lemma~\ref{lemma:monotonic-constraints} and
Proposition~\ref{prop:mintraings-propertds}, we then establish
Theorem~\ref{thm:enum-ranked-btw} (from Section~\ref{sec:theory}).

\def\ckk{\algname{CKK}\xspace}
\def\rankt{\algname{RankedTriang}\xspace}

\section{Experimental Evaluation}~\label{sec:exper} In this section,
we describe our experimental study, where the goal is twofold. First
and foremost, we explore the performance of our enumeration algorithm
(Figures~\ref{alg:triang} and~\ref{alg:enum}), namely
$\algname{RankedTriang}$.  The second goal is to explore the
applicability of the poly-MS assumption in reality, and particularly
to get an insight on how often realistic graphs have a manageable
number of minimal separators.

\subsection{Experimental Setup}

We first describe the general setup for the experiments.

\partitle{Implementation}
The algorithms have been implemeted in C++, with STL data
structures. We used some of the code of Carmeli et
al.~\cite{DBLP:conf/pods/CarmeliKK17}.\footnote{\url{https://github.com/NofarCarmeli/MinTriangulationsEnumeration}}
Particularly, we used their implementation of the algorithm for
enumerating the minimal separators by Berry et
al.~\cite{conf/wg/BerryBC99}. To compute the PMCs of a graph, we
implemented the algorithm by Bouchitt{\'{e}} and
Todinca~\cite{DBLP:journals/tcs/BouchitteT02}. It is important to note
that the implementation of these two algorithms is direct, with no
special optimization. While these algorithms might take a significant
portion of the time, improving their implementation is beyond the
scope of this paper, and remains an important future
direction. \noamdone{Should the next line go here?}We consider the
computation of the minimal separators, PMCs and blocks of each graph
(lines $1$-$2$ of \algname{MinTriang}) the \e{initialization} step, as
they are computed once at the beginning of \algname{RankedTriang},
instead of rerunning for each invocation of \algname{MinTriang} as our
algorithms state.

\partitle{Hardware}
We ran all experiments on a 2.5Ghz 48-core server with 512 GB of RAM
running Ubuntu 14.04.5 LTS. The experiments ran single threaded.\footnote{In
future work we will explore how $\algname{RankedTriang}$ can be
parallelized for delay reduction by parallelizing the main loop or
using more advanced ideas such as those of Golenberg et
al.~\cite{DBLP:journals/pvldb/GolenbergKS11}.}

\partitle{Compared algorithms}
We compared our algorithm to that of Carmeli, Kenig and
Kimelfeld~\cite{DBLP:conf/pods/CarmeliKK17}, referred to as
$\algname{CKK}$, which enumerates with incremental polynomial time and
has no guarantees on the order. To the best of our knowledge, no other
published algorithms for enumerating minimal triangulations or tree
decompositions with completeness guarantees exist, with the exception
of DunceCap~\cite{DBLP:conf/sigmod/TuR15} that is designed for small
query graphs; for more details about its performance we refer the
reader to Carmeli et al.~\cite{DBLP:conf/pods/CarmeliKK17}.
$\algname{CKK}$ requires a black-box minimal triangulator. In our
experiments, we used the algorithm
$\algname{LB\_TRIANG}$~\cite{berry2006wide} for this matter, as it was
found to allow for enumeration of triangulations of smaller width and
fill~\cite{DBLP:conf/pods/CarmeliKK17}.  In principle, we could also
have used our $\algname{MinTriang}$, but we chose not to do so since
$\algname{MinTriang}$ requires a long initialization step, and
\algname{CKK} applies its triangulator to many graphs that change
between execution calls (hence, the initialization cannot be shared
across invocations).

\begin{figure}[t]
\centering
\includegraphics[width=0.45\textwidth]
{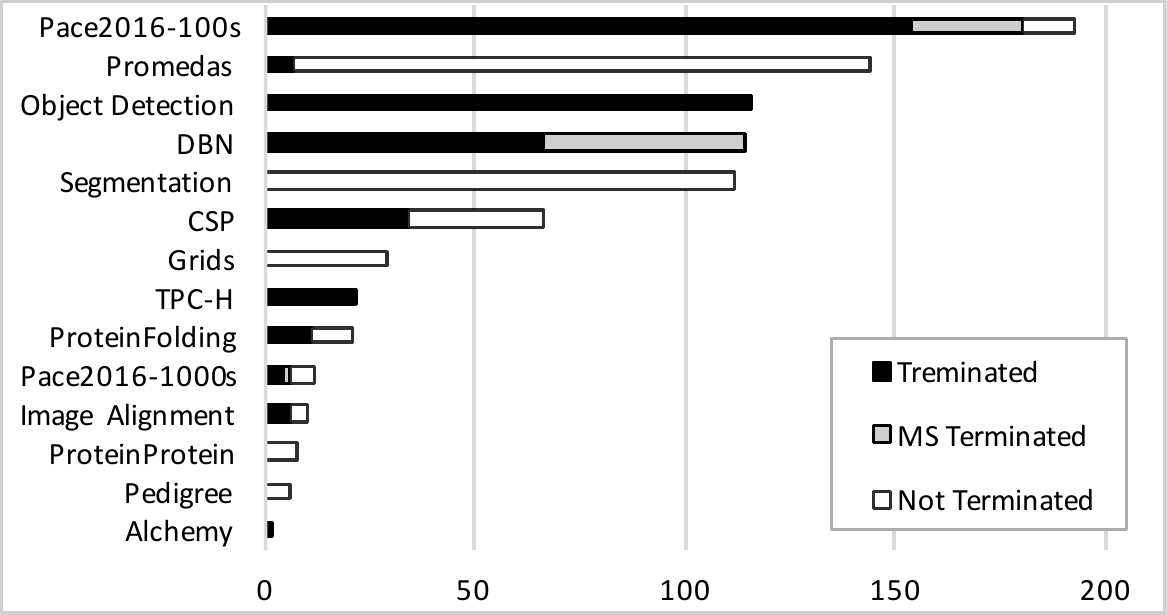}
\caption{\label{fig:distribution}Tractability of computing the minimal
  separators and the PMCs over PIC2011 and PACE2016 graphs.}
\end{figure}

\partitle{Datasets}
Our datasets contain those of Carmeli et
al.~\cite{DBLP:conf/pods/CarmeliKK17}. These include graphs of three
types: probabilistic graphical models from the PIC2011
challenge,\footnote{\url{http://www.cs.huji.ac.il/project/PASCAL/showNet.php}}
Gaifman graphs of conjunctive queries translated from the TPC-H
benchmark (see~\cite{DBLP:conf/pods/CarmeliKK17}), and random graphs.
Random graphs were generated by the $G(n,p)$ Erd\"{o}s-R\'enyi model,
where the number of vertices is $n$ and every pair of vertices is
(independently) connected by an edge with probability $p$.

Additionally, we used the dataset from the PACE2016
competition~\cite{pace2016}, where participants competed on the
computation of tree decompositions (equivalently, minimal
triangulations) of minimal width. These graphs\footnote{The PACE2016
  competition graphs can be found at
  \url{http://github.com/holgerdell/PACE-treewidth-testbed/tree/master/instances/pace16}}
are samples from \e{named
  graphs},\footnote{\url{https://github.com/freetdi/ named-graphs}}
\e{control-flow
  graphs},\footnote{\url{https://github.com/freetdi/CFGs.git}} and the
\e{DIMACS graph-coloring problems}.\footnote{\url{URL:
    http://mat.gsia.cmu.edu/COLOR/instances.  html}} We used graphs
from the tracks where time was limited to 100 and 1000 seconds.

\subsection{The Poly-MS Assumption}
We begin with our investigation of the poly-MS assumption, since this
study is needed as context for the later evaluation of our enumeration
algorithms. In this study, we attempted to generate all minimal
separators, and then all PMCs, on our datasets. We describe the
success rate, and for each successful case the corresponding number of
results.

\partitle{Real-life graphs}
In Figure~\ref{fig:distribution} we report, for each dataset, the
number of graphs where the computation terminated in predefined
time periods. The columns are as follows.
\begin{citemize}
\item \e{Terminated}: Graphs $G$ where the time required to compute
  $\minseps(G)$ is under a minute, and the time required to compute
  $\pmcs(G)$ is under $30$ minutes. These graphs will be used to test
  our algorithm.
\item \e{MS terminated}: Graphs $G$ where the time required to compute
  $\minseps(G)$ is under a minute, but the time to compute $\pmcs(G)$ is
  over $30$ minutes.
\item \e{Not terminated}: Graphs where the time to compute
  $\minseps(G)$ is over a minute.
\end{citemize}

As expected, graphs often violate the poly-MS assumption (otherwise
the NP-hard problem of computing the tree-width and fill-in would
actually be tractable in all of these graphs). In some of the
datasets, all of graphs were found infeasible. Nevertheless, the
positive finding is that the portion of graphs with a manageable
number of minimal separators is quite substantial (around 50\%). The
reader can also observe that in most cases, when we were able to
compute the minimal separators, we were also able to compute the PMCs
(which is consistent with known theory about the relationship between
the two~\cite{DBLP:journals/tcs/BouchitteT02}).
Figure~\ref{fig:tractable} shows the distribution of the number of
minimal separators (in log scale) over the \e{terminated} and \e{MS terminated}
cases of the PIC2011 and PACE2016. One can observe that these numbers are quite
often comparable to the number of edges, and sometimes even smaller.

\begin{figure}[t]
\centering
\includegraphics[width=0.48\textwidth]{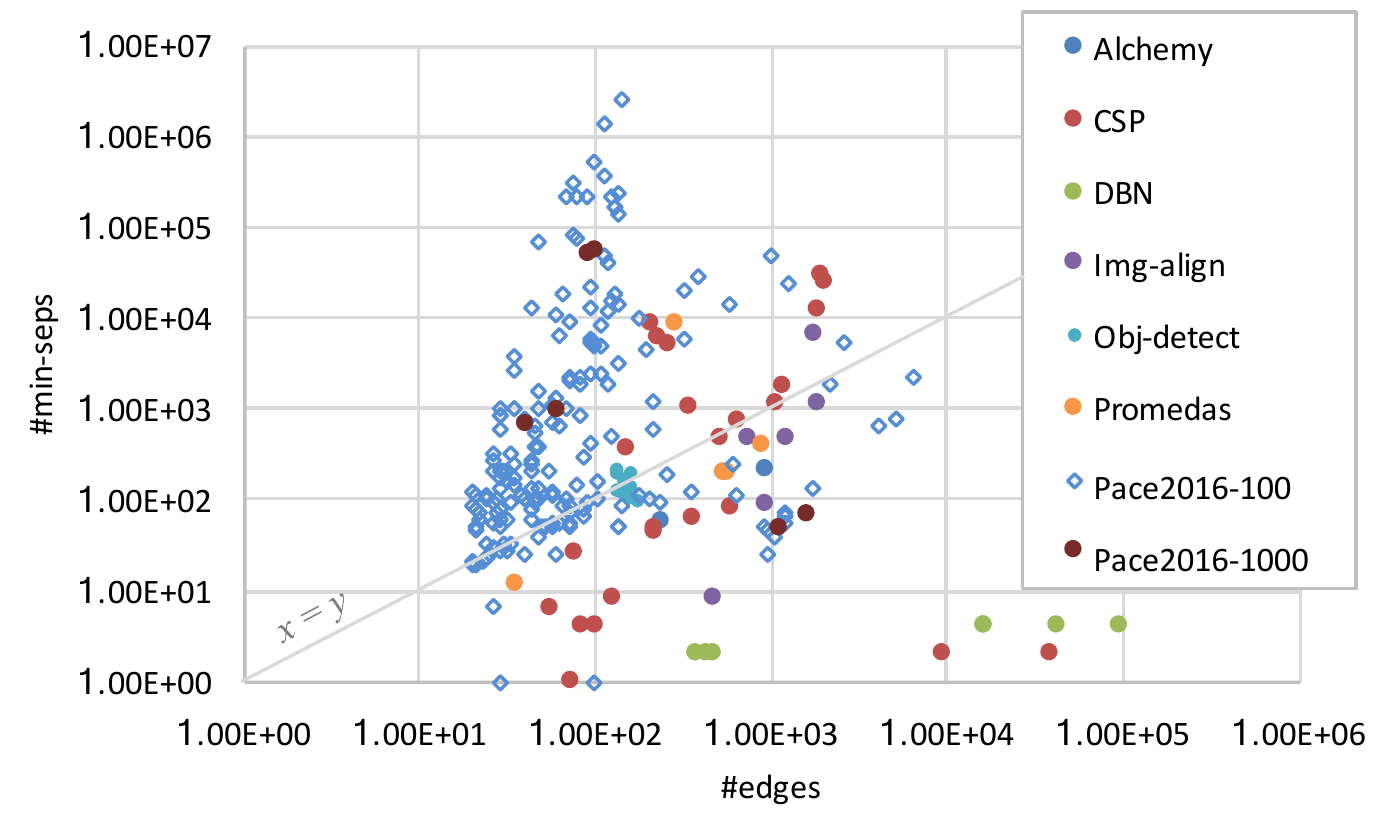}
\caption{\label{fig:tractable}Distribution of the number of minimal
  separators on the MS tractable PIC2011 and PACE2016 graphs.}
\vskip-1em
\end{figure}

\begin{figure}[t]
\centering
\subfigure[$n=20$]{\includegraphics[width=0.235\textwidth]{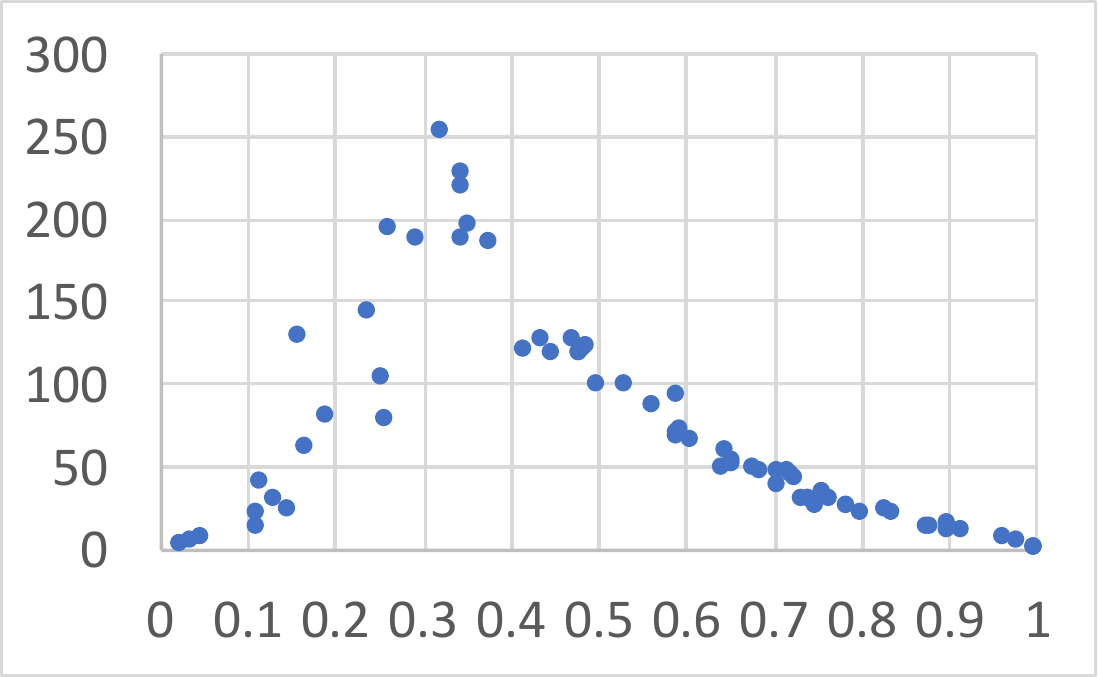}}
\subfigure[$n=30$]{\includegraphics[width=0.235\textwidth]{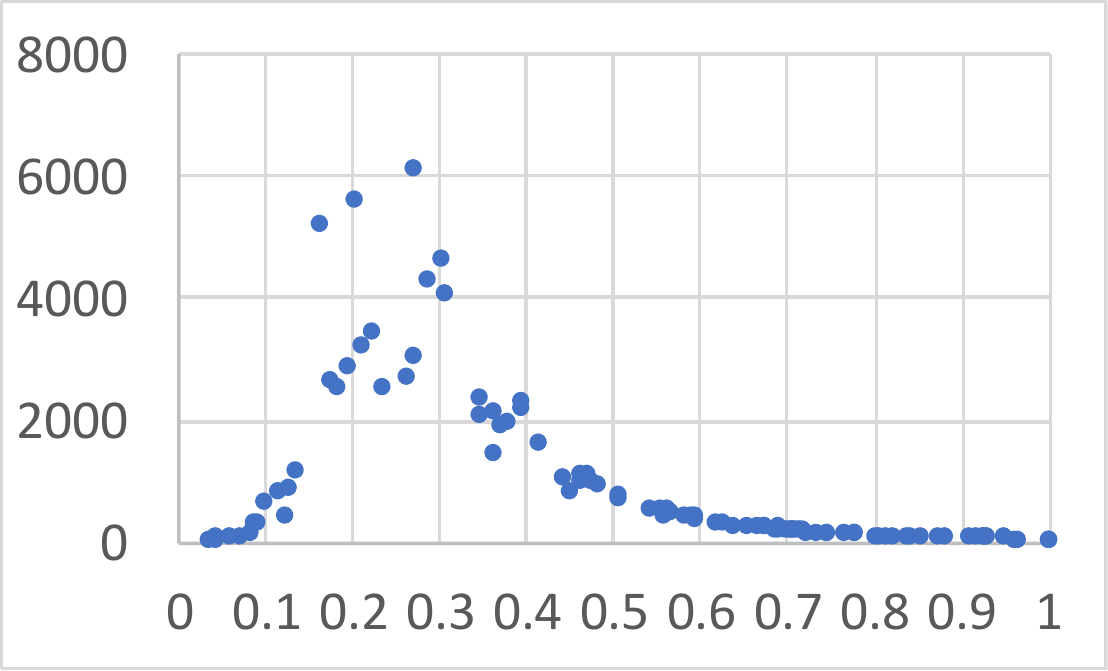}}\\
\subfigure[$n=50$]{\includegraphics[width=0.235\textwidth]{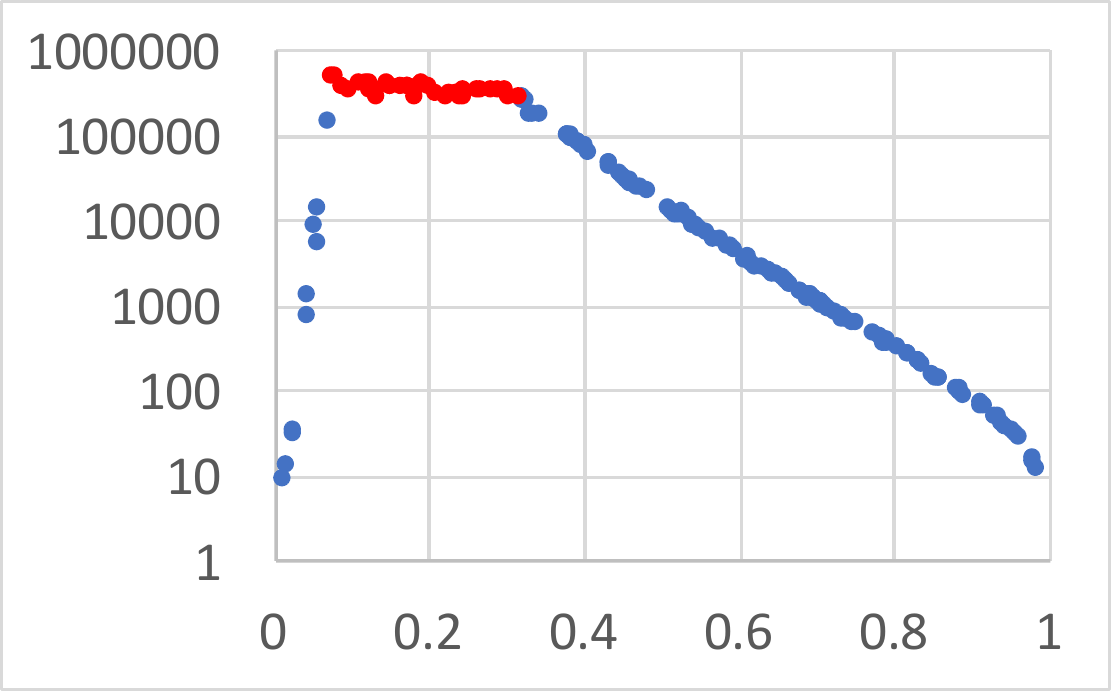}}
\subfigure[$n=70$]{\includegraphics[width=0.235\textwidth]{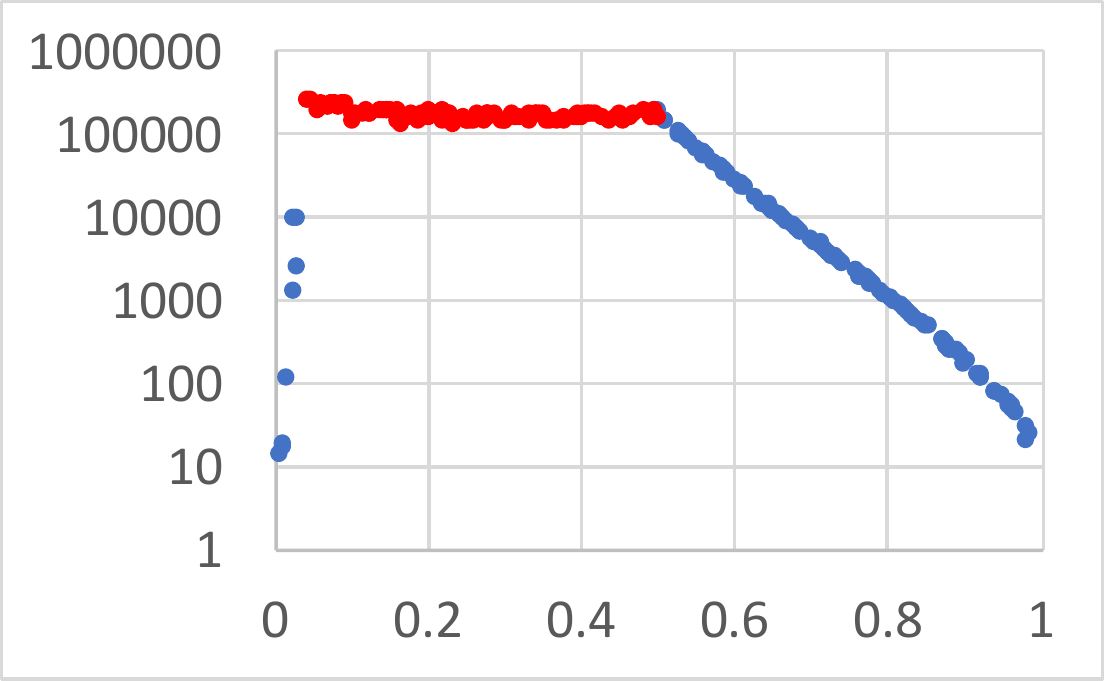}}\\
\caption{\label{fig:dist-random} Number of minimal separators on
  random graph $G(n,p)$. The bottom charts use logarithmic scale.  Red
  marks stand for cases where the computation took over $10$ minutes and
  then stopped. }
\end{figure}
%
\partitle{Random graphs}
\label{sec:exper-random-polyms}
We ran a similar experiment on random graphs.  As said earlier, our
random graphs are $G(n,p)$ from assorted $n$ and $p$. We drew graphs
with $n \in \set{20, 30, 50, 70}$ vertices, drawing three graphs for
each $p \in \set{1/n,\dots,n/n}$. This allowed us
to observe the correlation between the fraction of edges in the graph
and the number of minimal separators. Figure~\ref{fig:dist-random}
reports the result of these tests. When the computation time exceeded
ten minutes, we stopped the execution and, as observed, it happened in
the case of $n=50$ and $n=70$ (shown by the red marks).  The reader
can observe an interesting phenomenon---the number of minimal
separators is small for either sparse or dense graphs. In between
(around $p=0.25$) this number blows up.

{
	\begin{table*}\small
          \caption{Results on 30-minute
            executions, optimizing width and fill. \label{tab:stats} For each dataset, the
            top row relates to
            \algname{RankedTriang} and the bottom to \algname{CKK}. Numbers larger than $1000$ are rounded.}
\vskip-1em
		\label{tab:MCS-M}
		\renewcommand{\arraystretch}{1.1}
		\centering
                  \scalebox{1}{\begin{tabular}
{
|>{\centering}m{2.5cm} 
|c  
|>{\centering}m{0.55cm}  
|c  
|>{\centering}m{0.85cm}  
|>{\centering}m{0.8cm}  
|c 
|c 
|>{\centering}m{0.7cm}  
|>{\centering}m{1.5cm}  
|c  
|}
                      \hline \textbf{dataset (\#graphs)} &
                      \textbf{$\#$trng} & \textbf{init} &
                      \textbf{delay} & \textbf{delay no~init} &
                      \textbf{min-w} & \textbf{$\#$min-w} &
                      \textbf{$\#{\leq}1.1$min-w} &
                      \textbf{min-f} & \textbf{$\#$min-f} &
                      \textbf{$\#{\leq}1.1$min-f}
				\\
				\hline\hline
                                %
				\multirow{2}{*}{CSP $(12)$} &  
				$36493$ & $274.8$ & $2.4$ & $1.7$ & 
				$16.5$ & $34143$ & $34143$ & 
				$146.3$ & $1909$ & $6747$ \\ 
				\cline{2-11}
				&  
				$18625$ & - & $0.23$ & - &  
				$18.5$ & $2117 (12.2\%)$ & $2151 (14\%)$ & 
				$208.2$ & $1 (0.35\%)$ & $336.8 (8.2\%)$ \\
				\hline \hline			
				\multirow{2}{*}{Image Align.~$(4)$} &  
				$29674$ & $36.3$ & $0.36$ & $0.34$ &  
				$20$ & $37067$ & $38951$ &  
				$135$ & $305.5$ & $2640$ \\  
				\cline{2-11}
				&  
				$4191$ & - & $0.57$ & - &  
				$22.3$ & $1034 (1.3\%)$ & $1169 (1.4\%)$ &  
				$213.8$ & $1 (25\%)$ & $10.8 (10.4\%)$ \\
				\hline\hline
				\multirow{2}{*}{\centering Object Detect.~$(79)$} &  
				$538927$ & $0.2$ & $0.004$ & $0.004$ &  
				$5.8$ & $579156$ & $579156$ &   
				$24.9$ & $287589$ & $486465$ \\  
				\cline{2-11}
				&  
				$67639$ & - & $0.03$ & - &   
				$7.3$ & $15946 (4.7\%)$ & $15946 (4.7\%)$ &  
				$39.3$ & $11.9 (0.15\%)$ & $3918 (1.3\%)$ \\  
				\hline \hline
                
                                \multirow{2}{*}{\centering PACE2016-100s~$(87)$} &  
				$137537$ & $108.3$ & $13.4$ & $7.4$ &   
				$5.8$ & $113516$ & $114250$ &    
				$54.4$ & $95139$ & $100612$ \\ 
				\cline{2-11}
				&  
				$164069$ & - & $0.15$ & - &  
				$9.9$ & $87002 (30.3\%)$ & $87214 (30.3\%)$ & 
				$413.4$ & $54781 (767\%)$ & $56994 (904\%)$ \\  
				\hline \hline
				
				\multirow{2}{*}{\centering PACE2016-1000s~$(3)$} &  
				$63029$ & $3.3$ & $0.03$ & $0.03$ & 
				$15.7$ & $36232$ & $39418$ &  
				$56.7$ & $5989$ & $27404$ \\  
				\cline{2-11}
				&  
				$79591$ & - & $0.13$ & - &  
				$16$ & $440.7 (0.4\%)$ & $1363 (8.3\%)$ &  
				$60$ & $13 (3.6\%)$ & $1405 (3.6\%)$ \\  
				\hline \hline
				\multirow{2}{*}{Promedas $(2)$} & 
				$1095$ & $387.6$ & $44.6$ & $28.9$ &  
				$3.5$ & $89.5$ & $89.5$ &  
				$34.5$ & $43$ & $91$ \\
				\cline{2-11}
				%
				%
                                &  
				$10296$ & - & $0.45$ & - & 
				$3.5$ & $9166 (48242\%)$ & $9166 (48242\%)$ &   
				$41.5$ & $0.5 (0.78\%)$ & $0.5 (0.31\%)$ \\
				\hline									
			\end{tabular}}
	\end{table*}
}

\subsection{Enumeration Evaluation}\label{subsec:eval-enumer}
We now describe our evaluation of the algorithm \algname{RankedTriang},
and compare it to \algname{CKK}. 

\partitle{Real-life graphs}
Table~\ref{tab:stats} compares the enumerations of the algorithms on
datasets of PIC2011 and PACE2016, where we were able to compute all
PMCs (i.e., ``Terminated'' graphs from the previous section).  Each
algorithm was executed twice on each graph for $30$ minutes, once for
minimization of width and again for fill-in.  In the case of TPC-H
graphs, computing all minimal triangulations is a matter of a few
seconds, so we did not include those in the table. Moreover, as we
look only at the set of results following a fixed execution time,
\algname{RankedTriang} has no apparent advantage if \algname{CKK}
actually terminates (computing all minimal triangulations); hence,
executions where \algname{CKK} terminates are excluded from the table.
The table columns are as follows.
\begin{itemize}[nosep,leftmargin=1em,labelwidth=*,align=left]
\item \textbf{$\#$trng}: Number of returned minimal
  triangulations.
\item \textbf{init}: For \algname{RankedTriang}, the initialization
  time. Importantly, this time is counted into the $30$
  minutes of the other columns (unless stated otherwise).
\item \textbf{delay}: The average delay between returned results.
\item \textbf{delay no init}: The average delay between returned results, after
initialization.
\item \textbf{min-w}: Minimal width of a minimal triangulation
  returned by the algorithm.
\item \textbf{$\#$min-w}: When the cost is width, number of
  triangulations returned with a minimal width.
\item \textbf{$\#{\leq}1.1\cdot$min-w}: Number of near-optimal (within
  10\%) triangulations returned by the algorithm, optimizing
  width.
\item \textbf{min-f}: Minimal fill-in of a minimal triangulation
  returned by the algorithm.
\item \textbf{$\#$min-f}: When optimizing fill-in, the number of
  triangulations returned with minimal fill-in.
\item \textbf{$\#{\leq}1.1\cdot$min-f}: Number of near optimal triangulations
returned by the algorithm, optimizing fill-in.
\end{itemize} 
Next to the number of optimal (or near optimal) results returned by
\algname{CKK} for each cost, we also report the average percent of optimal 
results returned, relative to \algname{RankedTriang}.

\begin{figure}[t]
\centering
\subfigure[$n=20$\label{fig:20d}]{\includegraphics[width=0.235\textwidth]{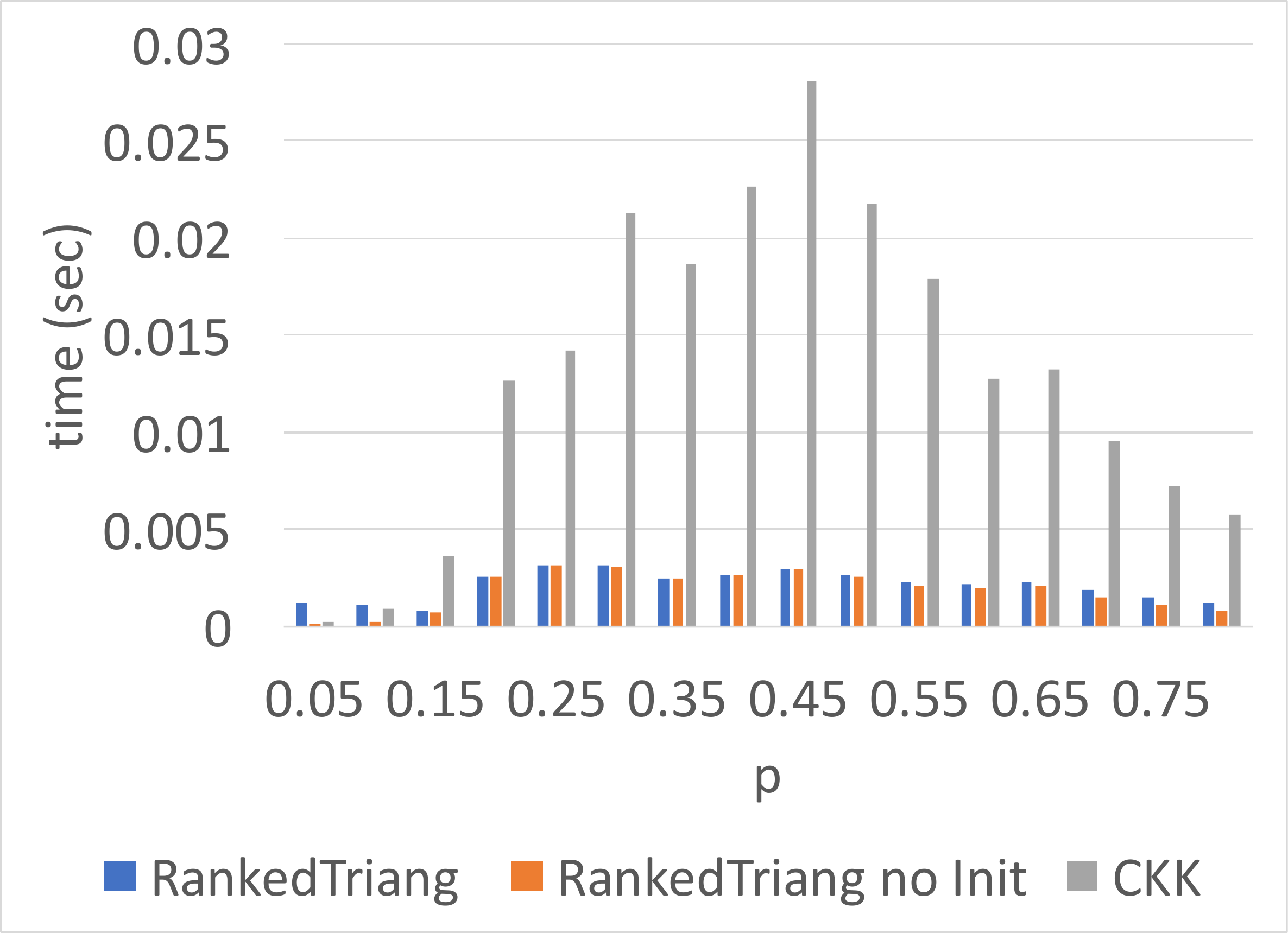}}
\subfigure[$n=50$\label{fig:50d}]{\includegraphics[width=0.235\textwidth]{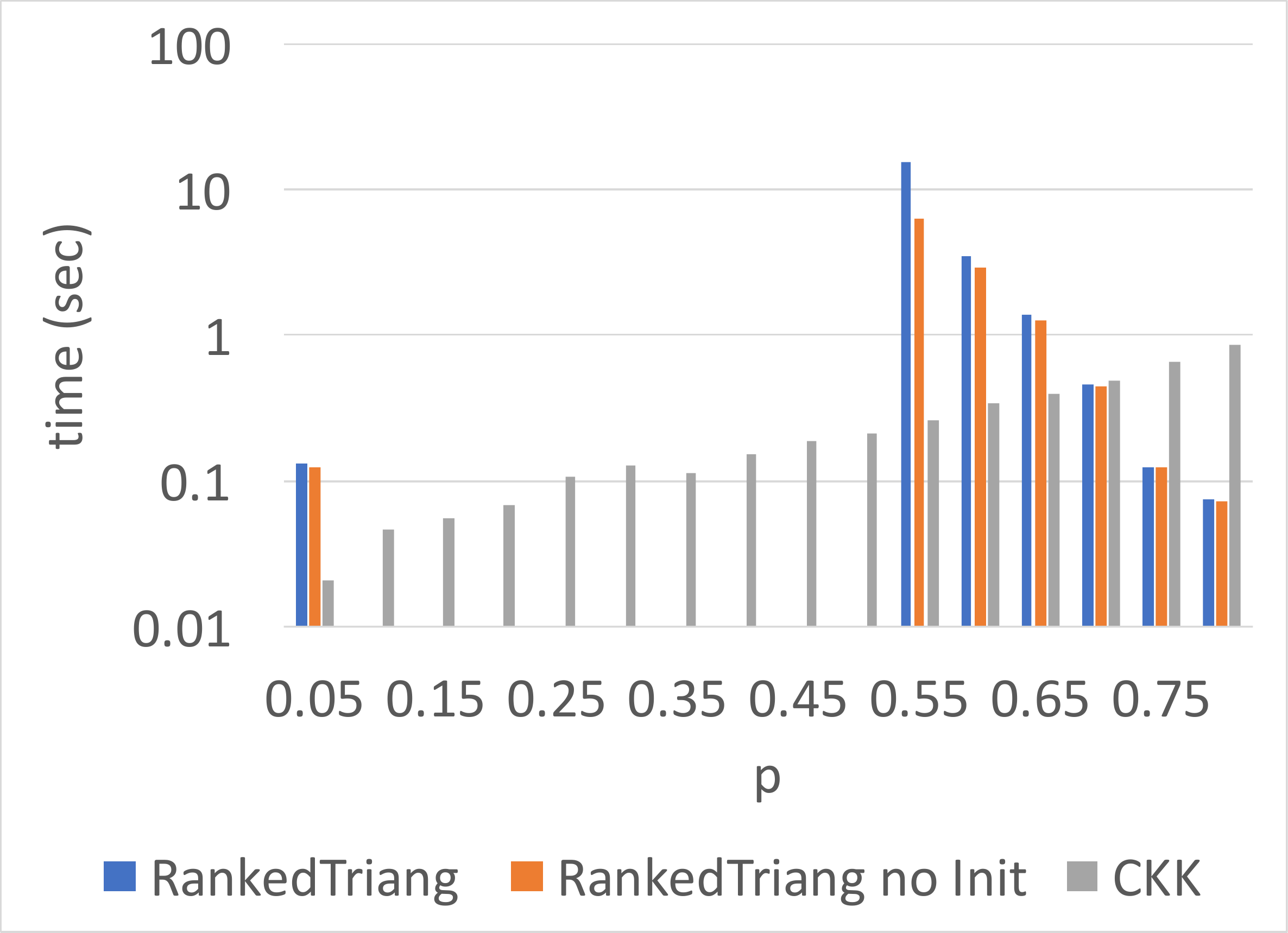}}\\
\subfigure[$n=20$\label{fig:20w}]{\includegraphics[width=0.235\textwidth]{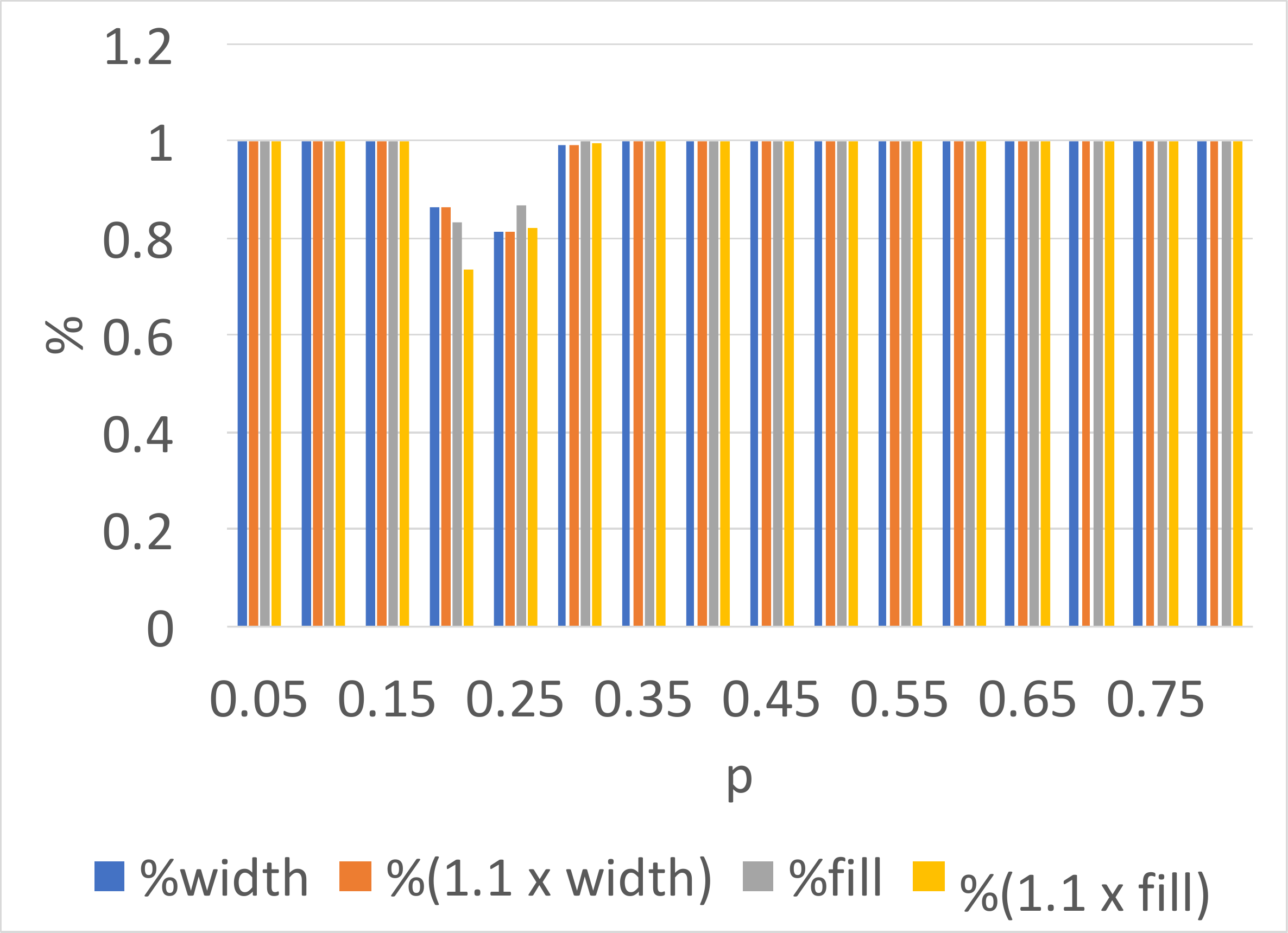}}
\subfigure[$n=50$\label{fig:50w}]{\includegraphics[width=0.235\textwidth]{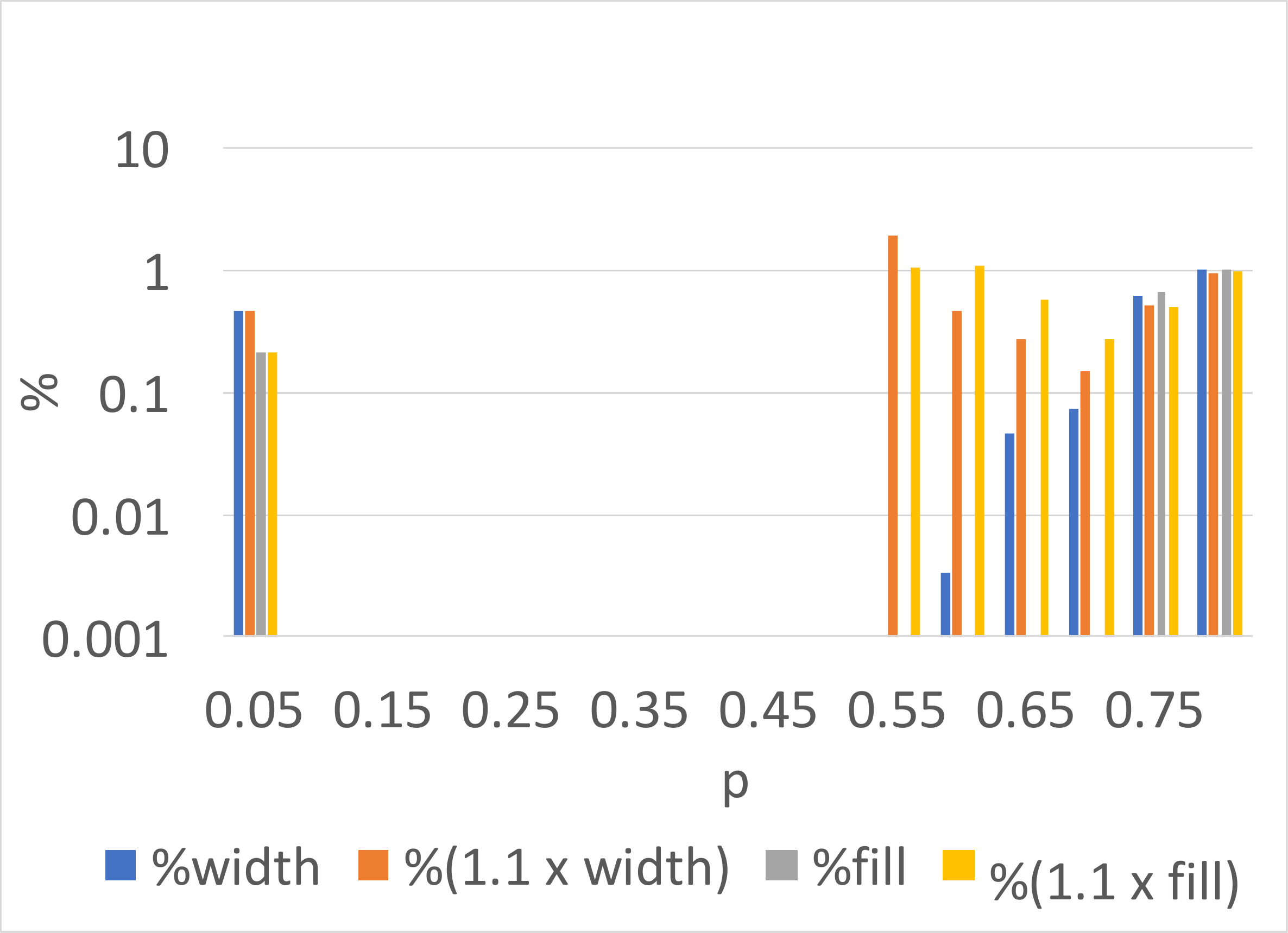}}\\
\caption{\label{fig:dw-random}Delay over random graphs $G(n,p)$,
and ratio of results of \algname{CKK} compared to
\algname{RankedTriang}.
 }
\vskip-1em
\end{figure}

We can see that, with the exception of Promedas, the execution cost of
\algname{RankedTriang} is comparable to, and even lower than,
\algname{CKK}. Moreover, the cost of its answers are consistently
lower than \algname{CKK}, which returns only a fraction of the optimal
triangulations. On Promedas, \algname{RankedTriang} is too slow due to
a high number of PMCs. For the CSP graphs, it appears as if our
algorithm returns more results with a larger average delay.  This is
caused by two graphs where \algname{RankedTriang} returns a higher
number of results than \algname{CKK} by an order of magnitude. This is
not the case for the rest of the dataset, where the delays are longer
and the number of results is smaller in \algname{RankedTriang} than
in \algname{CKK}.

For PACE2016 100s graphs, the table shows the average percentage of
optimal triangulations returned by \algname{CKK} is much larger than
that of \algname{RankedTriang}. This is caused by only three graphs where
the number of results returned by \algname{CKK} is larger by several
orders of magnitude than those returned by \algname{RankedTriang}, due
to a high initialization time and delay. The results returned by the
heuristic $\algname{LB\_TRIANG}$ are all optimal in fill-in for these
three graphs. This is \e{not} the usual case for the graphs in this dataset,
and for many of them \algname{CKK} did not generate \e{any} result
of a minimum fill-in. The reader can observe that under the \e{width}
cost, the advantage of \algname{RankedTriang} over \algname{CKK} is
substantial.


\partitle{Random graphs}
We evaluated both algorithms on random graphs $G(n,p)$, in the same
experimental setup as for the real graphs. We drew graphs with $n \in
\set{20, 50}$ vertices, drawing three graphs for each $p \in
\{0.5,1,1.5,\ldots,0.75,0.8\}$. Figure~\ref{fig:dw-random} reports the
results. Figures~\ref{fig:20d} and~\ref{fig:50d} show the average
delay between returned results, where the delay of
\algname{RankedTriang} is measured with and without the initialization
time.  Figures~\ref{fig:20w} and~\ref{fig:50w} show the relative
percent of optimal cost results returned by \algname{CKK}, relative to
\algname{RankedTriang}. We can see that for graphs where the minimal
separators and PMCs can be computed, the delay of
\algname{RankedTriang} is smaller than that of \algname{CKK}. This
occures in all graphs with 20 vertices, and some of the graphs with 50
vertices.  In the case of graphs with 50 vertices,
\algname{RankedTriang} does not terminate initialization for
probabilities $0.1$ through $0.5$, implying for these graphs the
number of minimal separators and PMCs is high.  For graphs from
probabilities closer to this range, \algname{RankedTriang} has a
higher delay, which is caused by a larger number of minimal
separators.  These results are consistent with the hardness results
observed in Section~\ref{sec:exper-random-polyms}.


\eat{
\subsection{Bounded Width}

In this experiment, we explore the impact of limiting the width of the
generated triangulations on the runtime of \rankt, in accordance to
Theorem~\ref{thm:enum-ranked-btw}. Yet, instead of setting a constant
width limit, we implemented a dynamic one. One triangulation was
computed for each graph without a width limit, in order to compute the
tree-width $w$ of the graph. Then, we set a width limit of $1.1w$, in
order to get only (near) optimal graphs. We did not implement the
naive generation of all vertex sets of size bounded by $b$ as allowed
for Theorem~\ref{thm:enum-ranked-btw}; instead, we apply the
initialization as before. In particular, this experiment was conducted
only on real-life graphs where initialization succeeded (as observed
in Section~\ref{subsec:eval-enumer}). In the Object Detection dataset,
where the delays are short (4ms) to begin with, the restriction of
width did not make any difference; hence, this dataset is excluded.

\eat{In addition, only
  datasets with a large number of minimal triangulations were
  considered in this experiment, as these have the larger chance to
  have PMCs of varying size.} 

Table~\ref{tab:limwidthstats} shows the average delay between answers,
after initialization, for each of the tested datasets, with and
without the width limit. Note that the delays without the width limit
are different than those in the ``delay no init'' column of
Table~\ref{tab:stats}, since the ones in the column average over
executions that include both fill-in optimization and width
optimization; in Table~\ref{tab:limwidthstats}, only width
optimization is considered. The reader can observe that the reduction
in delay between answers is substantial, ranging from 30\% to 65\%
reduction on average.
We conclude that limiting the width of interest in \rankt, as done for
proving Theorem~\ref{thm:enum-ranked-btw}, is practically beneficial.

{
	\begin{table}[b]
	\caption{Average delay (after initialization) between answers if width is
		limited to $1.1\algname{TreeWidth}(G)$.\label{tab:limwidthstats}}
		\label{tab:LimWidth}\centering
\small
		\renewcommand{\arraystretch}{1.2}
			\begin{tabular}{|>{\centering\arraybackslash}m{0.88in}|>{\centering\arraybackslash}m{0.66in}|>{\centering\arraybackslash}m{0.65in}|>{\centering\arraybackslash}m{0.5in}|}
				\hline
				\textbf{Dataset} & \textbf{Delay unbounded} & 
				\textbf{Delay bounded} & \textbf{$\%$~reduction}
				\\
				\hline\hline
				CSP & 0.33 & 0.11 & 65$\%$ \\ \hline
				Img. Align. & 0.08 & 0.04 & 42$\%$ \\ \hline
				PACE2016-100s & 6.7 & 4.9 & 27.7$\%$ \\ \hline
				PACE2016-1000s & 0.02 & 0.007 & 60.6$\%$ \\ \hline
				Promedas & 30.9 & 15.8 & 48.9$\%$ \\ \hline	
										
			\end{tabular}
	\end{table}
}

}

In Appendix~\ref{app:case}, we describe case studies on two specific
graphs.


\section{Concluding Remarks}~\label{sec:conclusions} We presented an
algorithm for enumerating minimal triangulations, and by implication
proper tree decompositions, by increasing cost with polynomial delay,
for arbitrary split-monotone (efficiently computable) bag costs. One
variant of the algorithm enumerates \e{all} minimal triangulations,
but requires pre-computing all minimal separators, and hence, makes
the poly-MS assumption. The other variant computes all minimal
triangulations of a bounded width, and does not require the poly-MS
assumption. Our implementation and experimental study shows that the
algorithm can lend itself to practical realization, and that the
poly-MS assumption is quite often valid.

Various directions are left open for future research, beyond the ones
mentioned throughout the paper. For one, it is known that computing a
tree decomposition of width $w$ is \e{Fixed-Parameter Tractable}
(FPT)~\cite{DBLP:journals/siamcomp/Bodlaender96} when taking $w$ as
the parameter; this means that $w$ affects only the constant (and not
the degree) of the polynomial~\cite{DBLP:series/mcs/DowneyF99}. Can we
extend this result to the enumeration of all tree decompositions of
width at most $w$ with FPT delay? Also, can we strengthen our
algorithms with further \e{diversity} of results to maximize the
potential value to the application? How should diversification be
defined? In addition to investigating the open questions, we plan to
incorporate our algorithm in the framework of Abseher et
al.~\cite{DBLP:journals/jair/AbseherMW17} for supervised learning of
effective tree decompositions.


\eat{
\section*{Acknowledgments}
The authors are very grateful to Nofar Carmeli and Batya Kenig for
insightful comments and discussions.
}

{
\scriptsize
\bibliographystyle{ACM-Reference-Format}
\bibliography{ms}


\begin{thebibliography}{42}


\ifx \showCODEN    \undefined \def \showCODEN     #1{\unskip}     \fi
\ifx \showDOI      \undefined \def \showDOI       #1{#1}\fi
\ifx \showISBNx    \undefined \def \showISBNx     #1{\unskip}     \fi
\ifx \showISBNxiii \undefined \def \showISBNxiii  #1{\unskip}     \fi
\ifx \showISSN     \undefined \def \showISSN      #1{\unskip}     \fi
\ifx \showLCCN     \undefined \def \showLCCN      #1{\unskip}     \fi
\ifx \shownote     \undefined \def \shownote      #1{#1}          \fi
\ifx \showarticletitle \undefined \def \showarticletitle #1{#1}   \fi
\ifx \showURL      \undefined \def \showURL       {\relax}        \fi
\providecommand\bibfield[2]{#2}
\providecommand\bibinfo[2]{#2}
\providecommand\natexlab[1]{#1}
\providecommand\showeprint[2][]{arXiv:#2}

\bibitem[\protect\citeauthoryear{Abseher, Musliu, and Woltran}{Abseher
  et~al\mbox{.}}{2017}]%
        {DBLP:journals/jair/AbseherMW17}
\bibfield{author}{\bibinfo{person}{Michael Abseher}, \bibinfo{person}{Nysret
  Musliu}, {and} \bibinfo{person}{Stefan Woltran}.}
  \bibinfo{year}{2017}\natexlab{}.
\newblock \showarticletitle{Improving the Efficiency of Dynamic Programming on
  Tree Decompositions via Machine Learning}.
\newblock \bibinfo{journal}{\emph{J. Artif. Intell. Res.}}
  \bibinfo{volume}{58} (\bibinfo{year}{2017}), \bibinfo{pages}{829--858}.
\newblock


\bibitem[\protect\citeauthoryear{Berry, Blair, and Heggernes}{Berry
  et~al\mbox{.}}{2002}]%
        {Berry:2002:MCS:647683.732496}
\bibfield{author}{\bibinfo{person}{Anne Berry}, \bibinfo{person}{Jean R.~S.
  Blair}, {and} \bibinfo{person}{Pinar Heggernes}.}
  \bibinfo{year}{2002}\natexlab{}.
\newblock \showarticletitle{Maximum Cardinality Search for Computing Minimal
  Triangulations}. In \bibinfo{booktitle}{\emph{WG}} \emph{(\bibinfo{series}{WG
  '02})}. \bibinfo{publisher}{Springer-Verlag}, \bibinfo{address}{London, UK,
  UK}, \bibinfo{pages}{1--12}.
\newblock
\showISBNx{3-540-00331-2}


\bibitem[\protect\citeauthoryear{Berry, Bordat, and Cogis}{Berry
  et~al\mbox{.}}{1999}]%
        {conf/wg/BerryBC99}
\bibfield{author}{\bibinfo{person}{Anne Berry}, \bibinfo{person}{Jean~Paul
  Bordat}, {and} \bibinfo{person}{Olivier Cogis}.}
  \bibinfo{year}{1999}\natexlab{}.
\newblock \showarticletitle{Generating All the Minimal Separators of a Graph.}.
  In \bibinfo{booktitle}{\emph{WG}} \emph{(\bibinfo{series}{lncs})},
  Vol.~\bibinfo{volume}{1665}. \bibinfo{publisher}{Springer},
  \bibinfo{pages}{167--172}.
\newblock


\bibitem[\protect\citeauthoryear{Berry, Bordat, Heggernes, Simonet, and
  Villanger}{Berry et~al\mbox{.}}{2006a}]%
        {BERRY200633}
\bibfield{author}{\bibinfo{person}{Anne Berry}, \bibinfo{person}{Jean-Paul
  Bordat}, \bibinfo{person}{Pinar Heggernes}, \bibinfo{person}{Geneviève
  Simonet}, {and} \bibinfo{person}{Yngve Villanger}.}
  \bibinfo{year}{2006}\natexlab{a}.
\newblock \showarticletitle{A wide-range algorithm for minimal triangulation
  from an arbitrary ordering}.
\newblock \bibinfo{journal}{\emph{Journal of Algorithms}} \bibinfo{volume}{58},
  \bibinfo{number}{1} (\bibinfo{year}{2006}), \bibinfo{pages}{33 -- 66}.
\newblock


\bibitem[\protect\citeauthoryear{Berry, Bordat, Heggernes, Simonet, and
  Villanger}{Berry et~al\mbox{.}}{2006b}]%
        {berry2006wide}
\bibfield{author}{\bibinfo{person}{Anne Berry}, \bibinfo{person}{Jean-Paul
  Bordat}, \bibinfo{person}{Pinar Heggernes}, \bibinfo{person}{Genevi{\`e}ve
  Simonet}, {and} \bibinfo{person}{Yngve Villanger}.}
  \bibinfo{year}{2006}\natexlab{b}.
\newblock \showarticletitle{A wide-range algorithm for minimal triangulation
  from an arbitrary ordering}.
\newblock \bibinfo{journal}{\emph{Journal of Algorithms}} \bibinfo{volume}{58},
  \bibinfo{number}{1} (\bibinfo{year}{2006}), \bibinfo{pages}{33--66}.
\newblock


\bibitem[\protect\citeauthoryear{Blair and Peyton}{Blair and Peyton}{1993}]%
        {ChordalIntroduction}
\bibfield{author}{\bibinfo{person}{Jean~RS Blair} {and} \bibinfo{person}{Barry
  Peyton}.} \bibinfo{year}{1993}\natexlab{}.
\newblock \showarticletitle{An introduction to chordal graphs and clique
  trees}.
\newblock In \bibinfo{booktitle}{\emph{Graph theory and sparse matrix
  computation}}. \bibinfo{publisher}{Springer}, \bibinfo{pages}{1--29}.
\newblock


\bibitem[\protect\citeauthoryear{Bodlaender}{Bodlaender}{1996}]%
        {DBLP:journals/siamcomp/Bodlaender96}
\bibfield{author}{\bibinfo{person}{Hans~L. Bodlaender}.}
  \bibinfo{year}{1996}\natexlab{}.
\newblock \showarticletitle{A Linear-Time Algorithm for Finding
  Tree-Decompositions of Small Treewidth}.
\newblock \bibinfo{journal}{\emph{{SIAM} J. Comput.}} \bibinfo{volume}{25},
  \bibinfo{number}{6} (\bibinfo{year}{1996}), \bibinfo{pages}{1305--1317}.
\newblock


\bibitem[\protect\citeauthoryear{Bouchitt{\'e} and Todinca}{Bouchitt{\'e} and
  Todinca}{2001}]%
        {mintriangalg}
\bibfield{author}{\bibinfo{person}{Vincent Bouchitt{\'e}} {and}
  \bibinfo{person}{Ioan Todinca}.} \bibinfo{year}{2001}\natexlab{}.
\newblock \showarticletitle{Treewidth and minimum fill-in: Grouping the minimal
  separators}.
\newblock \bibinfo{journal}{\emph{SIAM J. Comput.}} \bibinfo{volume}{31},
  \bibinfo{number}{1} (\bibinfo{year}{2001}), \bibinfo{pages}{212--232}.
\newblock


\bibitem[\protect\citeauthoryear{Bouchitt{\'{e}} and Todinca}{Bouchitt{\'{e}}
  and Todinca}{2002}]%
        {DBLP:journals/tcs/BouchitteT02}
\bibfield{author}{\bibinfo{person}{Vincent Bouchitt{\'{e}}} {and}
  \bibinfo{person}{Ioan Todinca}.} \bibinfo{year}{2002}\natexlab{}.
\newblock \showarticletitle{Listing all potential maximal cliques of a graph}.
\newblock \bibinfo{journal}{\emph{Theor. Comput. Sci.}} \bibinfo{volume}{276},
  \bibinfo{number}{1-2} (\bibinfo{year}{2002}), \bibinfo{pages}{17--32}.
\newblock


\bibitem[\protect\citeauthoryear{Carmeli, Kenig, and Kimelfeld}{Carmeli
  et~al\mbox{.}}{2017}]%
        {DBLP:conf/pods/CarmeliKK17}
\bibfield{author}{\bibinfo{person}{Nofar Carmeli}, \bibinfo{person}{Batya
  Kenig}, {and} \bibinfo{person}{Benny Kimelfeld}.}
  \bibinfo{year}{2017}\natexlab{}.
\newblock \showarticletitle{Efficiently Enumerating Minimal Triangulations}. In
  \bibinfo{booktitle}{\emph{PODS}}. \bibinfo{publisher}{{ACM}},
  \bibinfo{pages}{273--287}.
\newblock


\bibitem[\protect\citeauthoryear{Cohen, Kimelfeld, and Sagiv}{Cohen
  et~al\mbox{.}}{2008}]%
        {DBLP:journals/jcss/CohenKS08}
\bibfield{author}{\bibinfo{person}{Sara Cohen}, \bibinfo{person}{Benny
  Kimelfeld}, {and} \bibinfo{person}{Yehoshua Sagiv}.}
  \bibinfo{year}{2008}\natexlab{}.
\newblock \showarticletitle{Generating all maximal induced subgraphs for
  hereditary and connected-hereditary graph properties}.
\newblock \bibinfo{journal}{\emph{J. Comput. Syst. Sci.}} \bibinfo{volume}{74},
  \bibinfo{number}{7} (\bibinfo{year}{2008}), \bibinfo{pages}{1147--1159}.
\newblock


\bibitem[\protect\citeauthoryear{Dell, Husfeldt, Jansen, Kaski, Komusiewicz,
  and Rosamond}{Dell et~al\mbox{.}}{2017}]%
        {pace2016}
\bibfield{author}{\bibinfo{person}{Holger Dell}, \bibinfo{person}{Thore
  Husfeldt}, \bibinfo{person}{Bart M.~P. Jansen}, \bibinfo{person}{Petteri
  Kaski}, \bibinfo{person}{Christian Komusiewicz}, {and}
  \bibinfo{person}{Frances~A. Rosamond}.} \bibinfo{year}{2017}\natexlab{}.
\newblock \showarticletitle{{The First Parameterized Algorithms and
  Computational Experiments Challenge}}. In \bibinfo{booktitle}{\emph{IPEC}}
  \emph{(\bibinfo{series}{Leibniz International Proceedings in Informatics
  (LIPIcs)})}, Vol.~\bibinfo{volume}{63}. \bibinfo{pages}{30:1--30:9}.
\newblock
\showISBNx{978-3-95977-023-1}
\showISSN{1868-8969}


\bibitem[\protect\citeauthoryear{Downey and Fellows}{Downey and
  Fellows}{1999}]%
        {DBLP:series/mcs/DowneyF99}
\bibfield{author}{\bibinfo{person}{Rodney~G. Downey} {and}
  \bibinfo{person}{Michael~R. Fellows}.} \bibinfo{year}{1999}\natexlab{}.
\newblock \bibinfo{booktitle}{\emph{Parameterized Complexity}}.
\newblock \bibinfo{publisher}{Springer}.
\newblock


\bibitem[\protect\citeauthoryear{Fomin, Todinca, and Villanger}{Fomin
  et~al\mbox{.}}{2015}]%
        {DBLP:journals/siamcomp/FominTV15}
\bibfield{author}{\bibinfo{person}{Fedor~V. Fomin}, \bibinfo{person}{Ioan
  Todinca}, {and} \bibinfo{person}{Yngve Villanger}.}
  \bibinfo{year}{2015}\natexlab{}.
\newblock \showarticletitle{Large Induced Subgraphs via Triangulations and
  {CMSO}}.
\newblock \bibinfo{journal}{\emph{{SIAM} J. Comput.}} \bibinfo{volume}{44},
  \bibinfo{number}{1} (\bibinfo{year}{2015}), \bibinfo{pages}{54--87}.
\newblock


\bibitem[\protect\citeauthoryear{Furuse and Yamazaki}{Furuse and
  Yamazaki}{2014}]%
        {DBLP:journals/tcs/FuruseY14}
\bibfield{author}{\bibinfo{person}{Masanobu Furuse} {and}
  \bibinfo{person}{Koichi Yamazaki}.} \bibinfo{year}{2014}\natexlab{}.
\newblock \showarticletitle{A revisit of the scheme for computing treewidth and
  minimum fill-in}.
\newblock \bibinfo{journal}{\emph{Theor. Comput. Sci.}}  \bibinfo{volume}{531}
  (\bibinfo{year}{2014}), \bibinfo{pages}{66--76}.
\newblock


\bibitem[\protect\citeauthoryear{Golenberg, Kimelfeld, and Sagiv}{Golenberg
  et~al\mbox{.}}{2011}]%
        {DBLP:journals/pvldb/GolenbergKS11}
\bibfield{author}{\bibinfo{person}{Konstantin Golenberg},
  \bibinfo{person}{Benny Kimelfeld}, {and} \bibinfo{person}{Yehoshua Sagiv}.}
  \bibinfo{year}{2011}\natexlab{}.
\newblock \showarticletitle{Optimizing and Parallelizing Ranked Enumeration}.
\newblock \bibinfo{journal}{\emph{{PVLDB}}} \bibinfo{volume}{4},
  \bibinfo{number}{11} (\bibinfo{year}{2011}), \bibinfo{pages}{1028--1039}.
\newblock


\bibitem[\protect\citeauthoryear{Gottlob, Greco, and Scarcello}{Gottlob
  et~al\mbox{.}}{2005a}]%
        {DBLP:journals/jair/GottlobGS05}
\bibfield{author}{\bibinfo{person}{Georg Gottlob}, \bibinfo{person}{Gianluigi
  Greco}, {and} \bibinfo{person}{Francesco Scarcello}.}
  \bibinfo{year}{2005}\natexlab{a}.
\newblock \showarticletitle{Pure Nash Equilibria: Hard and Easy Games}.
\newblock \bibinfo{journal}{\emph{J. Artif. Intell. Res. {(JAIR)}}}
  \bibinfo{volume}{24} (\bibinfo{year}{2005}), \bibinfo{pages}{357--406}.
\newblock


\bibitem[\protect\citeauthoryear{Gottlob, Grohe, Musliu, Samer, and
  Scarcello}{Gottlob et~al\mbox{.}}{2005b}]%
        {DBLP:conf/wg/GottlobGMSS05}
\bibfield{author}{\bibinfo{person}{Georg Gottlob}, \bibinfo{person}{Martin
  Grohe}, \bibinfo{person}{Nysret Musliu}, \bibinfo{person}{Marko Samer}, {and}
  \bibinfo{person}{Francesco Scarcello}.} \bibinfo{year}{2005}\natexlab{b}.
\newblock \showarticletitle{Hypertree Decompositions: Structure, Algorithms,
  and Applications}. In \bibinfo{booktitle}{\emph{{WG}}}
  \emph{(\bibinfo{series}{lncs})}, Vol.~\bibinfo{volume}{3787}.
  \bibinfo{publisher}{Springer}, \bibinfo{pages}{1--15}.
\newblock


\bibitem[\protect\citeauthoryear{Gottlob, Leone, and Scarcello}{Gottlob
  et~al\mbox{.}}{1999}]%
        {DBLP:conf/pods/GottlobLS99}
\bibfield{author}{\bibinfo{person}{Georg Gottlob}, \bibinfo{person}{Nicola
  Leone}, {and} \bibinfo{person}{Francesco Scarcello}.}
  \bibinfo{year}{1999}\natexlab{}.
\newblock \showarticletitle{Hypertree Decompositions and Tractable Queries}. In
  \bibinfo{booktitle}{\emph{PODS}}. \bibinfo{publisher}{{ACM} Press},
  \bibinfo{pages}{21--32}.
\newblock


\bibitem[\protect\citeauthoryear{Gottlob, Leone, and Scarcello}{Gottlob
  et~al\mbox{.}}{2002}]%
        {GOTTLOB2002579}
\bibfield{author}{\bibinfo{person}{Georg Gottlob}, \bibinfo{person}{Nicola
  Leone}, {and} \bibinfo{person}{Francesco Scarcello}.}
  \bibinfo{year}{2002}\natexlab{}.
\newblock \showarticletitle{Hypertree Decompositions and Tractable Queries}.
\newblock \bibinfo{journal}{\emph{J. Comput. System Sci.}}
  \bibinfo{volume}{64}, \bibinfo{number}{3} (\bibinfo{year}{2002}),
  \bibinfo{pages}{579 -- 627}.
\newblock
\showISSN{0022-0000}


\bibitem[\protect\citeauthoryear{Gottlob, Mikl{\'{o}}s, and Schwentick}{Gottlob
  et~al\mbox{.}}{2009}]%
        {DBLP:journals/jacm/GottlobMS09}
\bibfield{author}{\bibinfo{person}{Georg Gottlob},
  \bibinfo{person}{Zolt{\'{a}}n Mikl{\'{o}}s}, {and} \bibinfo{person}{Thomas
  Schwentick}.} \bibinfo{year}{2009}\natexlab{}.
\newblock \showarticletitle{Generalized hypertree decompositions: {NP}-hardness
  and tractable variants}.
\newblock \bibinfo{journal}{\emph{J. {ACM}}} \bibinfo{volume}{56},
  \bibinfo{number}{6} (\bibinfo{year}{2009}).
\newblock


\bibitem[\protect\citeauthoryear{Grohe and Marx}{Grohe and Marx}{2014}]%
        {DBLP:journals/talg/GroheM14}
\bibfield{author}{\bibinfo{person}{Martin Grohe} {and}
  \bibinfo{person}{D{\'{a}}niel Marx}.} \bibinfo{year}{2014}\natexlab{}.
\newblock \showarticletitle{Constraint Solving via Fractional Edge Covers}.
\newblock \bibinfo{journal}{\emph{{ACM} Trans. Algorithms}}
  \bibinfo{volume}{11}, \bibinfo{number}{1} (\bibinfo{year}{2014}),
  \bibinfo{pages}{4:1--4:20}.
\newblock


\bibitem[\protect\citeauthoryear{Johnson, Papadimitriou, and
  Yannakakis}{Johnson et~al\mbox{.}}{1988}]%
        {DBLP:journals/ipl/JohnsonP88}
\bibfield{author}{\bibinfo{person}{David~S. Johnson},
  \bibinfo{person}{Christos~H. Papadimitriou}, {and} \bibinfo{person}{Mihalis
  Yannakakis}.} \bibinfo{year}{1988}\natexlab{}.
\newblock \showarticletitle{On Generating All Maximal Independent Sets}.
\newblock \bibinfo{journal}{\emph{Inf. Process. Lett.}} \bibinfo{volume}{27},
  \bibinfo{number}{3} (\bibinfo{year}{1988}), \bibinfo{pages}{119--123}.
\newblock


\bibitem[\protect\citeauthoryear{Jordan}{Jordan}{2002}]%
        {Jordan}
\bibfield{author}{\bibinfo{person}{M. Jordan}.}
  \bibinfo{year}{2002}\natexlab{}.
\newblock \bibinfo{booktitle}{\emph{An Introduction to Probabilistic Graphical
  Models}}.
\newblock \bibinfo{publisher}{University of California, Berkeley}, Chapter~17.
\newblock


\bibitem[\protect\citeauthoryear{Kalinsky, Etsion, and Kimelfeld}{Kalinsky
  et~al\mbox{.}}{2017}]%
        {DBLP:conf/edbt/KalinskyEK17}
\bibfield{author}{\bibinfo{person}{Oren Kalinsky}, \bibinfo{person}{Yoav
  Etsion}, {and} \bibinfo{person}{Benny Kimelfeld}.}
  \bibinfo{year}{2017}\natexlab{}.
\newblock \showarticletitle{Flexible Caching in Trie Joins}. In
  \bibinfo{booktitle}{\emph{EDBT}}. \bibinfo{publisher}{OpenProceedings.org},
  \bibinfo{pages}{282--293}.
\newblock


\bibitem[\protect\citeauthoryear{Kenig and Gal}{Kenig and Gal}{2015}]%
        {DBLP:conf/sum/KenigG15}
\bibfield{author}{\bibinfo{person}{Batya Kenig} {and} \bibinfo{person}{Avigdor
  Gal}.} \bibinfo{year}{2015}\natexlab{}.
\newblock \showarticletitle{On the Impact of Junction-Tree Topology on Weighted
  Model Counting}. In \bibinfo{booktitle}{\emph{{SUM}}}
  \emph{(\bibinfo{series}{lncs})}, Vol.~\bibinfo{volume}{9310}.
  \bibinfo{publisher}{Springer}, \bibinfo{pages}{83--98}.
\newblock


\bibitem[\protect\citeauthoryear{Kloks, Kratsch, and Spinrad}{Kloks
  et~al\mbox{.}}{1997}]%
        {DBLP:journals/tcs/KloksKS97}
\bibfield{author}{\bibinfo{person}{Ton Kloks}, \bibinfo{person}{Dieter
  Kratsch}, {and} \bibinfo{person}{Jeremy~P. Spinrad}.}
  \bibinfo{year}{1997}\natexlab{}.
\newblock \showarticletitle{On Treewidth and Minimum Fill-In of Asteroidal
  Triple-Free Graphs}.
\newblock \bibinfo{journal}{\emph{Theor. Comput. Sci.}} \bibinfo{volume}{175},
  \bibinfo{number}{2} (\bibinfo{year}{1997}), \bibinfo{pages}{309--335}.
\newblock


\bibitem[\protect\citeauthoryear{Kolaitis and Vardi}{Kolaitis and
  Vardi}{2000}]%
        {DBLP:journals/jcss/KolaitisV00}
\bibfield{author}{\bibinfo{person}{Phokion~G. Kolaitis} {and}
  \bibinfo{person}{Moshe~Y. Vardi}.} \bibinfo{year}{2000}\natexlab{}.
\newblock \showarticletitle{Conjunctive-Query Containment and Constraint
  Satisfaction}.
\newblock \bibinfo{journal}{\emph{J. Comput. Syst. Sci.}} \bibinfo{volume}{61},
  \bibinfo{number}{2} (\bibinfo{year}{2000}), \bibinfo{pages}{302--332}.
\newblock


\bibitem[\protect\citeauthoryear{Lauritzen and Spiegelhalter}{Lauritzen and
  Spiegelhalter}{1988}]%
        {LauSpi-JRS88}
\bibfield{author}{\bibinfo{person}{S. Lauritzen} {and} \bibinfo{person}{D.~J.
  Spiegelhalter}.} \bibinfo{year}{1988}\natexlab{}.
\newblock \showarticletitle{Local Computations with Probabilities on Graphical
  Structures and Their Application to Expert Systems}.
\newblock \bibinfo{journal}{\emph{Journal of the Royal Statistical Society}}
  \bibinfo{volume}{B, 50}, \bibinfo{number}{2} (\bibinfo{year}{1988}),
  \bibinfo{pages}{157--224}.
\newblock


\bibitem[\protect\citeauthoryear{Lawler}{Lawler}{1972}]%
        {LAWLER}
\bibfield{author}{\bibinfo{person}{E.~L. Lawler}.}
  \bibinfo{year}{1972}\natexlab{}.
\newblock \showarticletitle{A procedure for computing the $K$ best solutions to
  discrete optimization problems and its application to the shortest path
  problem}.
\newblock \bibinfo{journal}{\emph{Management Science}} \bibinfo{volume}{18},
  \bibinfo{number}{7} (\bibinfo{year}{1972}), \bibinfo{pages}{401--405}.
\newblock


\bibitem[\protect\citeauthoryear{Liedloff, Montealegre, and Todinca}{Liedloff
  et~al\mbox{.}}{2015}]%
        {DBLP:conf/wg/LiedloffMT15}
\bibfield{author}{\bibinfo{person}{Mathieu Liedloff}, \bibinfo{person}{Pedro
  Montealegre}, {and} \bibinfo{person}{Ioan Todinca}.}
  \bibinfo{year}{2015}\natexlab{}.
\newblock \showarticletitle{Beyond Classes of Graphs with "Few" Minimal
  Separators: {FPT} Results Through Potential Maximal Cliques}. In
  \bibinfo{booktitle}{\emph{WG}} \emph{(\bibinfo{series}{lncs})},
  Vol.~\bibinfo{volume}{9224}. \bibinfo{publisher}{Springer},
  \bibinfo{pages}{499--512}.
\newblock


\bibitem[\protect\citeauthoryear{Marx}{Marx}{2010}]%
        {DBLP:journals/talg/Marx10}
\bibfield{author}{\bibinfo{person}{D{\'{a}}niel Marx}.}
  \bibinfo{year}{2010}\natexlab{}.
\newblock \showarticletitle{Approximating fractional hypertree width}.
\newblock \bibinfo{journal}{\emph{{ACM} Trans. Algorithms}}
  \bibinfo{volume}{6}, \bibinfo{number}{2} (\bibinfo{year}{2010}).
\newblock


\bibitem[\protect\citeauthoryear{Mediero}{Mediero}{2017}]%
        {mediero2017search}
\bibfield{author}{\bibinfo{person}{H{\'e}ctor~Otero Mediero}.}
  \bibinfo{year}{2017}\natexlab{}.
\newblock \bibinfo{booktitle}{\emph{Search Algorithms for Solving Queries on
  Graphical Models and the Importance of Pseudo-trees in their Complexity}}.
\newblock \bibinfo{type}{{T}echnical {R}eport}.
  \bibinfo{institution}{University of California Irvine}.
\newblock
\newblock
\shownote{UCI ICS Technical Report.}


\bibitem[\protect\citeauthoryear{Montealegre and Todinca}{Montealegre and
  Todinca}{2016}]%
        {DBLP:conf/wg/MontealegreT16}
\bibfield{author}{\bibinfo{person}{Pedro Montealegre} {and}
  \bibinfo{person}{Ioan Todinca}.} \bibinfo{year}{2016}\natexlab{}.
\newblock \showarticletitle{On Distance-d Independent Set and Other Problems in
  Graphs with "few" Minimal Separators}. In \bibinfo{booktitle}{\emph{WG}}
  \emph{(\bibinfo{series}{lncs})}, Vol.~\bibinfo{volume}{9941}.
  \bibinfo{pages}{183--194}.
\newblock


\bibitem[\protect\citeauthoryear{Murty}{Murty}{1968}]%
        {MURTY}
\bibfield{author}{\bibinfo{person}{K.~G. Murty}.}
  \bibinfo{year}{1968}\natexlab{}.
\newblock \showarticletitle{An algorithm for ranking all the assignments in
  order of increasing cost}.
\newblock \bibinfo{journal}{\emph{Operations Research}} \bibinfo{volume}{16},
  \bibinfo{number}{3} (\bibinfo{year}{1968}), \bibinfo{pages}{682--687}.
\newblock


\bibitem[\protect\citeauthoryear{Otachi and Schweitzer}{Otachi and
  Schweitzer}{2014}]%
        {DBLP:conf/swat/OtachiS14}
\bibfield{author}{\bibinfo{person}{Yota Otachi} {and} \bibinfo{person}{Pascal
  Schweitzer}.} \bibinfo{year}{2014}\natexlab{}.
\newblock \showarticletitle{Reduction Techniques for Graph Isomorphism in the
  Context of Width Parameters}. In \bibinfo{booktitle}{\emph{SWAT}}
  \emph{(\bibinfo{series}{lncs})}, Vol.~\bibinfo{volume}{8503}.
  \bibinfo{publisher}{Springer}, \bibinfo{pages}{368--379}.
\newblock


\bibitem[\protect\citeauthoryear{Parra and Scheffler}{Parra and
  Scheffler}{1997}]%
        {DBLP:journals/dam/ParraS97}
\bibfield{author}{\bibinfo{person}{Andreas Parra} {and} \bibinfo{person}{Petra
  Scheffler}.} \bibinfo{year}{1997}\natexlab{}.
\newblock \showarticletitle{Characterizations and Algorithmic Applications of
  Chordal Graph Embeddings}.
\newblock \bibinfo{journal}{\emph{Discrete Applied Mathematics}}
  \bibinfo{volume}{79}, \bibinfo{number}{1-3} (\bibinfo{year}{1997}),
  \bibinfo{pages}{171--188}.
\newblock


\bibitem[\protect\citeauthoryear{Rose}{Rose}{1970}]%
        {rose1970triangulated}
\bibfield{author}{\bibinfo{person}{Donald~J Rose}.}
  \bibinfo{year}{1970}\natexlab{}.
\newblock \showarticletitle{Triangulated graphs and the elimination process}.
\newblock \bibinfo{journal}{\emph{J. Math. Anal. Appl.}} \bibinfo{volume}{32},
  \bibinfo{number}{3} (\bibinfo{year}{1970}), \bibinfo{pages}{597--609}.
\newblock


\bibitem[\protect\citeauthoryear{Tarjan and Yannakakis}{Tarjan and
  Yannakakis}{1984}]%
        {Tarjan:1984:SLA:1169.1179}
\bibfield{author}{\bibinfo{person}{Robert~E. Tarjan} {and}
  \bibinfo{person}{Mihalis Yannakakis}.} \bibinfo{year}{1984}\natexlab{}.
\newblock \showarticletitle{Simple Linear-time Algorithms to Test Chordality of
  Graphs, Test Acyclicity of Hypergraphs, and Selectively Reduce Acyclic
  Hypergraphs}.
\newblock \bibinfo{journal}{\emph{SIAM J. Comput.}} \bibinfo{volume}{13},
  \bibinfo{number}{3} (\bibinfo{date}{July} \bibinfo{year}{1984}),
  \bibinfo{pages}{566--579}.
\newblock


\bibitem[\protect\citeauthoryear{Tu and R{\'{e}}}{Tu and R{\'{e}}}{2015}]%
        {DBLP:conf/sigmod/TuR15}
\bibfield{author}{\bibinfo{person}{Susan Tu} {and} \bibinfo{person}{Christopher
  R{\'{e}}}.} \bibinfo{year}{2015}\natexlab{}.
\newblock \showarticletitle{{DunceCap}: Query Plans Using Generalized Hypertree
  Decompositions}. In \bibinfo{booktitle}{\emph{SIGMOD}}.
  \bibinfo{publisher}{{ACM}}, \bibinfo{pages}{2077--2078}.
\newblock


\bibitem[\protect\citeauthoryear{Yamada, Kataoka, and Watanabe}{Yamada
  et~al\mbox{.}}{2010}]%
        {DBLP:journals/ijcm/YamadaKW10}
\bibfield{author}{\bibinfo{person}{Takeo Yamada}, \bibinfo{person}{Seiji
  Kataoka}, {and} \bibinfo{person}{Kohtaro Watanabe}.}
  \bibinfo{year}{2010}\natexlab{}.
\newblock \showarticletitle{Listing all the minimum spanning trees in an
  undirected graph}.
\newblock \bibinfo{journal}{\emph{Int. J. Comput. Math.}} \bibinfo{volume}{87},
  \bibinfo{number}{14} (\bibinfo{year}{2010}), \bibinfo{pages}{3175--3185}.
\newblock


\bibitem[\protect\citeauthoryear{Zhao, Malmberg, and Cai}{Zhao
  et~al\mbox{.}}{2006}]%
        {DBLP:conf/wabi/ZhaoMC06}
\bibfield{author}{\bibinfo{person}{Jizhen Zhao}, \bibinfo{person}{Russell~L.
  Malmberg}, {and} \bibinfo{person}{Liming Cai}.}
  \bibinfo{year}{2006}\natexlab{}.
\newblock \showarticletitle{Rapid \emph{ab initio} {RNA} Folding Including
  Pseudoknots Via Graph Tree Decomposition}. In
  \bibinfo{booktitle}{\emph{WABI}}. \bibinfo{pages}{262--273}.
\newblock


\end{thebibliography}
}

\newpage
\appendix

\section{Additional Proofs}

\subsection{Proof of Theorem~\ref{thm:alg is correct}}

In this section, we will prove our algorithm returns a minimal triangulation of
optimal cost when $\kappa$ is a split monotone bag cost. Since $\kappa$ is a
bag cost, its value is equal for all clique trees of a triangulation, and we
can prove our lemma by focusing on one clique tree of the minimal triangulation.

Let $H$ be a minimal triangulation of a graph $G$ and $\Omega \in
\maxcliques(H)$. We identify a certain tree decomposition of $H$,
and show it is a clique tree. Our definition is recursive over the blocks of
$\Omega$, in an approach similar to the triangulation algorithm. For each block
$(S_i,C_i) \in \fblocks_G(\Omega)$ we note $\T_i = (T_i,\beta_i)$ a clique tree
of $H_i = \induced{H}{S_i \cup C_i}$. Note $\Omega_i \in \maxcliques(H_i)$ such
that $S_i \subset \Omega_i$ (at least one exists by Theorem
\ref{theorem:realization triangulation}) and $v_i \in \nodes(T_i)$ such that
$\beta_i(v_i) = \Omega_i$.

We denote $\T_H = (T_H, \beta_H)$ a tree decomposition that is
a union of all $\T_i$, connected by a node $v_\Omega$
representing $\Omega$. $\T_H$ is defined as follows:
\begin{align*}
\nodes(T_H) =& \bigcup_{i=1}^p \nodes(T_i) \cup \{v_\Omega\}\\
\edges(T_H) =& \bigcup_{i=1}^p \Big( \edges(T_i) \cup
\{\{v_\Omega,v_i\}\} \Big) \\
\beta_H(u) =& \begin{cases}
  \Omega & u = v_\Omega \\
  \beta_{H_i}(u) & u \in \nodes(T_i)
\end{cases}
\end{align*}

To prove $\T_H$ is a clique tree, we first characterize the maximal cliques of a
minimal triangulation $H$, in the same recursive manner.

\begin{lemma}
\label{lem:cliques of bestH}
Let $H$ be a minimal triangulation of the graph $G$. Let $\Omega$ be a maximal
clique of $H$ and note $H_i = \induced{H}{S_i \cup C_i}$ for each block
$(S_i,C_i) \in \fblocks_G(\Omega)$ where $1 \leq i \leq p$ for $p=|\fblocks_G(\Omega)|$.
Then:
\[
\maxcliques(H) = \bigcup_{i = 1}^p \maxcliques(H_i) \cup \big\{\Omega\big\}
\]
\end{lemma}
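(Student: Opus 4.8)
The plan is to prove the set equality by double inclusion, with the workhorse being a \emph{separation property}: no edge of $H$ joins a vertex of $C_i$ to a vertex outside $S_i\cup C_i$; equivalently $N_H(C_i)=S_i$, so the $C_i$ are exactly the connected components of $H\setminus\Omega$. I would record this first. The inclusion $N_H(C_i)\supseteq N_G(C_i)=S_i$ is free, and for the reverse I would invoke Theorem~\ref{thm:ParraS97}: $H$ is obtained from $G$ by saturating the members of $\minseps(H)$, so every fill edge $\{x,w\}$ lies inside some $T\in\minseps(H)$. Since $\minseps(H)$ is pairwise parallel, such $T$ is non-crossing with $S_i$; and a separator non-crossing with $S_i$ that meets the $S_i$-component $C_i$ is contained in $S_i\cup C_i$. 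Hence a fill edge incident to $x\in C_i$ cannot leave $S_i\cup C_i$. (This is exactly the Bouchitt\'e--Todinca block structure $\edges(H)=\bigcup_i\edges(H_i)\cup\edges(\K_\Omega)$ underlying Theorem~\ref{theorem:pmc-block-triangulation}.) Consequently every $H$-edge incident to $C_i$ has both endpoints in $S_i\cup C_i$.

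For the inclusion $\maxcliques(H)\subseteq\bigcup_{i}\maxcliques(H_i)\cup\{\Omega\}$, take $K\in\maxcliques(H)$. If $K\subseteq\Omega$, then since $\Omega$ is a clique and $K$ is maximal we get $K=\Omega$. Otherwise $K$ contains some $v\in C_i$, and for every other $w\in K$ the $H$-edge $\{v,w\}$ forces $w\in S_i\cup C_i$ by the separation property; thus $K\subseteq S_i\cup C_i=\nodes(H_i)$, so $K$ is a clique of the induced subgraph $H_i$. Maximality transfers, since a vertex of $H_i$ extending $K$ would extend it in $H$ as well; hence $K\in\maxcliques(H_i)$.

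For the reverse inclusion, $\Omega\in\maxcliques(H)$ holds by hypothesis, so it remains to show that each $K\in\maxcliques(H_i)$ is maximal in $H$ (it is already an $H$-clique). Suppose some $u\notin K$ is $H$-adjacent to all of $K$. Maximality of $K$ in $H_i$ excludes $u\in S_i\cup C_i$, so $u\in\Omega\setminus S_i$ or $u\in C_j$ with $j\neq i$. If $K$ contained a vertex $v\in C_i$, the edge $\{u,v\}$ would contradict the separation property; therefore $K\cap C_i=\emptyset$, i.e.\ $K\subseteq S_i$. As $S_i$ is a clique of $H_i$ and $K$ is maximal there, this yields $K=S_i$. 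But, as noted just before the lemma, Theorem~\ref{theorem:realization triangulation} provides a maximal clique $\Omega_i$ of $H_i$ with $S_i\subsetneq\Omega_i$, so $S_i\notin\maxcliques(H_i)$ — a contradiction. Hence no such $u$ exists and $K\in\maxcliques(H)$.

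I expect the separation property to be the main obstacle: the two inclusions are then essentially bookkeeping, whereas justifying $N_H(C_i)=S_i$ rigorously relies on the crossing/parallel structure of minimal separators (Theorem~\ref{thm:ParraS97}) or a direct appeal to the Bouchitt\'e--Todinca block decomposition. The second subtle point is the use of fullness of the block in the $\supseteq$ direction, where the strictly larger PMC $\Omega_i\supsetneq S_i$ from Theorem~\ref{theorem:realization triangulation} is exactly what prevents $S_i$ itself from being a spurious maximal clique of $H_i$.
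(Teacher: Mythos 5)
Your proof is correct, but it reaches the key structural fact by a different road than the paper. The paper's proof gets the edge decomposition $H=\bigcup_{i=1}^p H_i\cup\K_\Omega$ in one stroke: by Theorem~\ref{theorem:pmc-block-triangulation}(2) the right-hand side is itself a minimal triangulation of $G$, and since $\K_\Omega$ and each $H_i$ are subgraphs of $H$, minimality of $H$ forces equality. Your separation property $N_H(C_i)=S_i$ is exactly this decomposition read edge-by-edge, but you derive it instead from Theorem~\ref{thm:ParraS97}, via fill edges lying inside members of the pairwise-parallel family $\minseps(H)$. One step there is elliptical: pairwise parallelism of $\minseps(H)$ gives parallelism of $T$ with $S_i$ only if $S_i\in\minseps(H)$, which is true but is itself a fact needing justification; alternatively a short direct argument closes it --- if $T$ crossed $S_i$ it would separate two vertices $u,v$ of the clique $\Omega$ of $H$, and the $H$-edge $\set{u,v}$ is either a $G$-edge (which $T$ cannot separate) or a fill edge inside some $T'\in\minseps(H)$, whence $T$ crosses $T'$, contradicting parallelism. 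Past that point, your bookkeeping is actually more complete than the paper's: the paper argues only (and only sketchily, ``intuitively true'') that maximal cliques of each $H_i$ remain maximal in $H$, leaving the $\subseteq$ inclusion implicit, whereas you verify both directions, handle the $K\subseteq\Omega$ case, and rule out the spurious candidate $K=S_i$ using the same consequence of Theorem~\ref{theorem:realization triangulation} (a maximal clique $\Omega_i\supsetneq S_i$ of $H_i$ exists) that the paper invokes. What each approach buys: the paper's minimality argument is shorter given Theorem~\ref{theorem:pmc-block-triangulation}, while yours is more self-contained at the edge level and makes the useful neighborhood characterization $N_H(C_i)=S_i$ explicit.
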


\begin{proof}
Theorem \ref{theorem:pmc-block-triangulation} directly implies:
\begin{equation}
\label{eq:H of maximal clique}
H = \bigcup_{i=1}^p H_i \cup \Omega^*
\end{equation} 
$\K_\Omega$ and each $H_i$ are all subgraphs of $H$, and by Theorem
\ref{theorem:pmc-block-triangulation} the graph $\bigcup_{i=1}^p H_i \cup
\K_\Omega$ is a minimal triangulation of $G$. Therefore, if $H$ is a minimal
triangulation it must hold that $H = \bigcup_{i=1}^p H_i \cup \K_\Omega$.

From equation \ref{eq:H of maximal clique} we can conclude the maximal cliques
of $H$ by observing that for any block $(S_i,C_i) \in \fblocks(\Omega)$ the
maximal cliques of $H_i$ are maximal cliques in $H$. This is intuitively true,
as the vertices of $C_i$ are not connected to new nodes, and by Theorem
\ref{theorem:realization triangulation} $S_i$ can not be a maximal clique of
$H_i$, implying all maximal cliques of $H_i$ are maximal cliques in $H$.
\end{proof}

Due to Lemma \ref{lem:cliques of bestH}, we can now deduce that $\T_H$ is a
clique tree of $H$.

\begin{lemma}
\label{lem:clique tree of bestH}
$\T_h$ is a clique tree of $H$.
\end{lemma}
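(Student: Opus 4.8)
The plan is to verify directly that $\T_H$ meets the definition of a clique tree of $H$: that it is a tree decomposition of $H$ \emph{and} that $\beta_H$ is a bijection between $\nodes(T_H)$ and $\maxcliques(H)$. I would treat the bijection and the three tree-decomposition axioms separately, since the former is almost immediate from Lemma~\ref{lem:cliques of bestH} while the latter carries the real work.

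For the bijection, note that each $\T_i$ is a clique tree of $H_i$, so $\bags(\T_i)=\maxcliques(H_i)$ with no repeated bag. By construction $\bags(\T_H)=\set{\Omega}\cup\bigcup_{i=1}^p\bags(\T_i)=\set{\Omega}\cup\bigcup_{i=1}^p\maxcliques(H_i)$, which equals $\maxcliques(H)$ by Lemma~\ref{lem:cliques of bestH}. To upgrade this to injectivity of $\beta_H$, I would argue that every maximal clique of $H_i$ meets $C_i$: since $S_i\subseteq\Omega$ is a clique of $H$ and hence of $H_i$, and $S_i$ is \emph{not} a maximal clique of $H_i$ by Theorem~\ref{theorem:realization triangulation}, any maximal clique of $H_i$ must contain a vertex of $C_i$. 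Because the components $C_i$ are pairwise disjoint and disjoint from $\Omega$, no maximal clique of $H_i$ can coincide with one of $H_j$ for $j\neq i$, nor with $\Omega$; together with the distinctness of bags inside each $\T_i$, this shows $\beta_H$ is injective.

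That $\T_H$ is a tree decomposition I would establish axiom by axiom. First, $T_H$ is a tree, being the disjoint union of the trees $T_i$ with a fresh node $v_\Omega$ attached to exactly one node $v_i$ of each $T_i$. Vertex and edge coverage follow from $\nodes(G)=\Omega\cup\bigcup_i C_i$ and from $H=\bigcup_{i=1}^p H_i\cup\K_\Omega$ (Equation~\eqref{eq:H of maximal clique}): every vertex lies in $\Omega$ or in some $\nodes(H_i)$, and every edge lies in $\K_\Omega$ or in some $H_i$, so it is covered by $v_\Omega$ or by the tree decomposition $\T_i$.

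The main obstacle is the junction-tree (running-intersection) property, which I would prove by a case analysis on the location of two nodes $u,v$ of $T_H$. If $u,v$ lie in the same $T_i$, their path stays inside $T_i$ and the property is inherited from $\T_i$. Otherwise the path runs through $v_\Omega$. For $u\in\nodes(T_i)$ and $v\in\nodes(T_j)$ with $i\neq j$, the disjointness facts ($C_i\cap C_j=\emptyset$, $S_k\subseteq\Omega$, $C_k\cap\Omega=\emptyset$) give $\beta_H(u)\cap\beta_H(v)\subseteq S_i\cap S_j\subseteq\Omega=\beta_H(v_\Omega)$; moreover, since $\beta_H(v_i)=\Omega_i\supseteq S_i$ and $\beta_H(v_j)=\Omega_j\supseteq S_j$, this intersection is contained in both $\beta_H(u)\cap\beta_H(v_i)$ and $\beta_H(v)\cap\beta_H(v_j)$. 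Applying the running-intersection property of $\T_i$ and of $\T_j$ then propagates the set along the two sub-paths (to $v_i$ and from $v_j$), while it sits trivially inside $\Omega$ at $v_\Omega$; the remaining case $u=v_\Omega$ is analogous, with the intersection contained in $S_j$. This closes the verification and establishes that $\T_H$ is a clique tree of $H$.
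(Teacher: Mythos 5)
Your proof is correct and follows essentially the same route as the paper's: verify the tree-decomposition axioms and the bijection with $\maxcliques(H)$, with the junction-tree property handled by the same case analysis through $v_\Omega$ using $\beta_H(u)\cap\beta_H(v)\subseteq S_i\subseteq\beta_H(v_i)$ and the running-intersection property of each $\T_i$. If anything, you are more thorough than the paper, which dismisses coverage and the bijection as ``directly implied'' while you supply the injectivity argument (every maximal clique of $H_i$ meets $C_i$, and the $C_i$ are pairwise disjoint and disjoint from $\Omega$).
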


\begin{proof}
To prove $\T_H$ is a clique tree of $H$, we must prove:
\begin{enumerate}
  \item The vertices of $H$ are covered
  \item The edges of $H$ are covered
  \item The junction-tree property holds
  \item $\beta_H$ is a bijection between $\nodes(T_H)$ and $\maxcliques(H)$
\end{enumerate}

Properties $(1),(2)$ and $(4)$ are directly implied from the definition of
$\T_H$ and lemma \ref{lem:cliques of bestH}. We will now show that $\T_H$
upholds the junction-tree property - for each two nodes $u,v,w \in
\nodes(T_H)$ where $w$ is a vertex on the route from $u$ to $v$ in $T_H$, it
holds that $\beta(u) \cap \beta(v) \subseteq \beta(w)$.

We can assume there exists a block $(S_i,C_i) \in \fblocks_G(\Omega)$ such that
$u \in \nodes(T_i)$. If $v \in \nodes(T_i)$, the path between $u$ and $v$
exists in the clique tree $\T_i$, $w \in \nodes(T_i)$, and the junction-tree
property must hold. 

Otherwise, note $v_i \in \nodes(T_i)$ the node connected to $v_\Omega$ in the
construction of $\T_H$. Since $\beta(v)$ is not contained in the block
$(S_i,C_i)$, and by definition of $v_i$ the following must hold:
\[
\beta(u) \cap \beta(v) \subseteq S_i = \beta(v_i) \cap \Omega
\]
This implies if $\beta(w) = \Omega$, then $\beta(u) \cap \beta(v) \subseteq
\beta(w)$ and the junction-tree property holds. Furthermore, if $w \in
\nodes(T_i)$, it must be on the path from $v_i$ to $u$. According to our
assumption, since the clique intersection property holds in $\T_i$ it
also holds in $\T_H$:
\[
\beta(u) \cap \beta(v) \subseteq S_i \subseteq \beta(v_i) \cap \beta(u)
\subseteq \beta(w)
\]

Finally, if there exists another block $(S_j,C_j) \in \fblocks_G(\Omega)$ such
that $w \in \nodes(T_j)$, due to the structure of $\T_H$ we know $v
\in \nodes(T_j)$. In this case, we can switch the roles of $u$ and $v$,
and we have already seen when $w$ belongs to the same block as $u$ the clique
intersection property holds.
\end{proof}

Next, we would like to clarify the implications of a split monotone bag cost,
when used as a cost function over triangulations. $\T_H$ is used to
prove these implications.

\begin{lemma}
\label{lem:split monotone cost implication}
Let $H$ and $H'$ be two minimal triangulations of a graph $G$, $\Omega$
be a maximal clique in $\maxcliques(H) \cap \maxcliques(H')$ and $\kappa$ 
a split monotone bag cost over tree decompositions. Suppose there is a block
$(S,C) \in \fblocks_G(\Omega)$ such that:
\begin{align*}
\induced{H}{\nodes(G) \setminus C} &= \induced{H'}{\nodes(G) \setminus C} \\ 
\kappa(\induced{G}{(S,C)}, \induced{H}{(S,C)}) &\leq
\kappa(\induced{G}{(S,C)}, \induced{H'}{(S,C)})
\end{align*}
then $\kappa(G, H) \leq \kappa(G, H')$.
\end{lemma}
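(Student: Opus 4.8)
The plan is to reduce the claim to a single application of split monotonicity, using the recursive clique tree $\T_H$ constructed just before Lemma~\ref{lem:clique tree of bestH}. Since $\kappa$ is a bag cost, $\kappa(G,H)$ equals $\kappa(G,\T_H)$ for any clique tree of $H$ (viewed as a tree decomposition of $G$, which is legitimate because $H$ is a triangulation of $G$), and likewise $\kappa(G,H')=\kappa(G,\T_{H'})$. First I would observe that $\fblocks_G(\Omega)$ depends only on $G$ and $\Omega$, hence is the same set whether we start from $H$ or from $H'$. For every block $(S_j,C_j)\in\fblocks_G(\Omega)$ other than $(S,C)$ we have $S_j\cup C_j\subseteq\nodes(G)\setminus C$, so the first hypothesis $\induced{H}{\nodes(G)\setminus C}=\induced{H'}{\nodes(G)\setminus C}$ yields $\induced{H}{S_j\cup C_j}=\induced{H'}{S_j\cup C_j}$. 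Because $\Omega$ is a maximal clique of both $H$ and $H'$, the construction of Lemma~\ref{lem:clique tree of bestH} applies to both with the same central node $v_\Omega$ carrying the bag $\Omega$; I would therefore build $\T_H$ and $\T_{H'}$ using the \emph{same} clique subtree for each such block $j$. The two clique trees then differ only in the subtree attached for the block $(S,C)$---a clique tree $\T_{(S,C)}$ of $\induced{H}{(S,C)}$ in one case and $\T'_{(S,C)}$ of $\induced{H'}{(S,C)}$ in the other.

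Next I would split both clique trees, viewed as tree decompositions of $G$, along the edge $e=\{v_\Omega,v_{(S,C)}\}$ joining the central node to the block-$(S,C)$ subtree. The side containing $\T_{(S,C)}$ covers exactly $S\cup C$, so its induced graph is $\induced{G}{(S,C)}$; the other side covers $\Omega\cup\bigcup_{j\neq(S,C)}C_j=\nodes(G)\setminus C$, so its induced graph is $\induced{G}{\nodes(G)\setminus C}$. The key point is that the graphs $G_1\eqdef\induced{G}{(S,C)}$ and $G_2\eqdef\induced{G}{\nodes(G)\setminus C}$ are induced subgraphs of $G$ and are therefore \emph{identical} for $H$ and $H'$, even though the triangulations themselves differ. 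Thus $\T_H$ splits as $\angs{G_1,\T_{(S,C)},G_2,\T_2}$ and $\T_{H'}$ splits as $\angs{G_1,\T'_{(S,C)},G_2,\T_2}$, with the same $G_1$, the same $G_2$, and---by the shared construction---the literally identical $\T_2$.

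Finally, I would verify the two premises of split monotonicity. On the $G_2$ side the tree decompositions coincide, so $\kappa(G_2,\T_2)\le\kappa(G_2,\T_2)$ holds trivially with equality (one could alternatively invoke invariance under bag equivalence). On the $G_1$ side, $\kappa(G_1,\T_{(S,C)})=\kappa(\induced{G}{(S,C)},\induced{H}{(S,C)})\le\kappa(\induced{G}{(S,C)},\induced{H'}{(S,C)})=\kappa(G_1,\T'_{(S,C)})$ is exactly the second hypothesis. Split monotonicity then gives $\kappa(G,\T_H)\le\kappa(G,\T_{H'})$, i.e.\ $\kappa(G,H)\le\kappa(G,H')$, as required. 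I expect the main obstacle to be the bookkeeping in the splitting step: checking that the block-$(S,C)$ subtree covers precisely $S\cup C$ while the remainder covers precisely $\nodes(G)\setminus C$ (this relies on $S\subseteq\Omega$ and the junction-tree property, exactly as in the proof of Lemma~\ref{lem:clique tree of bestH}), and confirming that the block-level cost term appearing in the hypothesis coincides with $\kappa(G_1,\T_{(S,C)})$ under the triangulation-to-clique-tree identification.
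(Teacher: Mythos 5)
Your proposal is correct and follows essentially the same route as the paper's proof: construct the recursive clique trees $\T_H$ and $\T_{H'}$, split both (viewed as tree decompositions of $G$) along the edge joining $v_\Omega$ to the $(S,C)$-subtree into $\angs{\induced{G}{\nodes(G)\setminus C},\T_1,\induced{G}{(S,C)},\T_2}$, use the hypothesis on the $(S,C)$ side, and conclude by split monotonicity. Your one deviation is minor but slightly cleaner: you reuse the \emph{same} block subtrees so that the two decompositions are literally identical on the $\nodes(G)\setminus C$ side, whereas the paper builds independent clique trees $\T_1,\T'_1$ there and invokes bag-cost invariance after arguing they have the same bags.
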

\begin{proof}
Note $H_1 = \induced{H}{(S,C)}$ and $H_2 = \induced{H'}{(S,C)}$. 

Observe the clique tree $\T_H = (T_H,\beta_H)$. It must have a pair of nodes
$u, v_\Omega \in \nodes(T_H)$ such that $\beta_H(v_\Omega) = \Omega$,
$\beta_H(u) \in \maxcliques(H_1)$, the edge $e=\{u,v_\Omega\} \in \edges(T_H)$
and $\beta_H(u) \cap \beta_H(v_\Omega) = S$.
$\T_H$ splits by $e$ as $\angs{\induced{G}{\nodes(G) \setminus
C},\T_1,\induced{G}{(S,C)},\T_2}$ where $\T_1$ and $\T_2$ are clique trees of
$\induced{H}{\nodes(G) \setminus C}$ and $H_1$, respectively. Similarly,
$\T_{H'}$ can be split by an edge $e'$ as $\angs{\induced{G}{\nodes(G)\setminus
C}, \T'_1, \induced{G}{(S,C)},\T'_2}$ where $\T'_1$ and $\T'_2$ are clique trees
of $\induced{H'}{\nodes(G) \setminus C}$ and $H_2$, respectively.

Since $\induced{H'}{\nodes(G)\setminus C} = \induced{H}{\nodes(G)\setminus C}$
the clique trees $\T_1$ and $\T'_1$ have the same bags. As $\kappa$ is a bag
cost it holds that:
\[
\kappa(\induced{G}{\nodes(G)\setminus C}, \T_1) =
\kappa(\induced{G}{\nodes(G)\setminus C}, \T'_1)
\]
By assumption:
\begin{align*}
&\kappa(\induced{G}{(S,C)}, T_2) = \kappa(\induced{G}{(S,C)}, H_1) <
\\
& 
\kappa(\induced{G}{(S,C)}, H_2) = \kappa(\induced{G}{(S,C)}, T'_2)
\end{align*}
$\kappa$ is a split monotone function and therefore:
\[
\kappa(G, H)=\kappa(G,\T_H) \leq \kappa(G,\T_{H'})=\kappa(G, H')
\]
as claimed.
\end{proof}

Finally, we can prove the correctness of our algorithm.

\begin{replemma}{\ref{thm:alg is correct}}
\ThmAlgIsCorrect
\end{replemma}



\begin{proof}
First, we will prove that for each minimal separator $S \in \minseps(G)$ and
full $S$-component $C$, $\H(S,C)$ (as calculated in line \ref{line:triang block}
of $\algname{MinTriang}$) is an optimal minimal triangulation of the realization
$\R(S,C)$, with the cost $\kappa(\induced{G}{(S,C)}, \H(S,C))$. We assume in
contradiction this is not the case, and let $(S,C)$ be the smallest block in $G$
such that there exists a minimal triangulation $H'$ of the realization $\R(S,C)$
where $H' \neq \H(S,C)$ and $\kappa(\induced{G}{(S,C)},H') <
\kappa(\induced{G}{(S,C)},\H(S,C))$.

Notice, since $H'$ is a minimal triangulation different from $\H(S,C)$, it
can not be a single clique. According to theorem \ref{theorem:realization
triangulation}, there exists a maximal clique $\Omega'$ in $H'$ such that $S
\subset \Omega' \subset (S, C)$. For each full block $(S_i,C_i) \in
\fblocks_{\R(S,C)}(\Omega')$, we note $H'_i = \induced{H'}{S_i \cup C_i}$, and
by assumption the minimal triangulation $\H(S_i,C_i)$ is optimal (as it is of
smaller cardinality than $(S,C)$), and the following holds:
\begin{align*}
\kappa(\induced{G}{(S_i,C_i)},H'_i) &\geq
\kappa(\induced{G}{(S_i,C_i)},\H(S_i,C_i)) \\
&= \kappa(\induced{G}{(S_i,C_i)},\induced{\HByPMC{\R(S,C)}{\Omega'}}{S_i \cup
C_i})
\end{align*}
Since $\kappa$ is a split monotone bag cost we can use lemma \ref{lem:split
monotone cost implication} and by selection of $\Omega(S,C)$ (line
\ref{line:saturated pmc block} in $\algname{MinTriang}$) this implies:
\begin{align*}
\kappa(\induced{G}{(S,C)},H')  
&\geq \kappa(\induced{G}{(S,C)},\HByPMC{\R(S,C)}{\Omega'}) \\
&\geq \kappa(\induced{G}{(S,C)},\HByPMC{\R(S,C)}{\Omega(S,C)})\\
&= \kappa(\induced{G}{(S,C)},\H{(S,C)})
\end{align*}
in contradiction to our assumption.

We can conclude the optimality of $\H(G)$ in a similar manner, based on
theorem \ref{theorem:pmc-block-triangulation}. Let $\Omega'$ be any
maximal clique of $H'$. As previously, the following inequality holds:
\begin{align*}
\kappa(G,H')  
&\geq \kappa(G,\HByPMC{G}{\Omega'}) \\
&\geq \kappa(G, \HByPMC{G}{\Omega(G)})\\
&= \kappa(G, \H(G))
\end{align*}
This concludes $\H(G)$ is an optimal minimal triangulation of $G$.
\end{proof}

\subsection{Proof of Lemma \ref{lem:kappa const split monotone}}
\begin{replemma}{\ref{lem:kappa const split monotone}}
\KappaConstraintsIsOK
\end{replemma}

\begin{proof}
Let $H,H'$ be two minimal triangulations of a graph $G$ and $\Omega$ be a
maximal clique in $\maxcliques(H) \cap \maxcliques(H')$. Let $\T$ and $\T'$ be
tree decompositions of $H$ and $H'$, respectively, such that $\T$ and
$\T'$ can be split as $\angs{G_1,\T_1,G_2,\T_2}$ and
$\angs{G_1,\T_1',G_2,\T_2'}$, respectively. We assume (for $i=1,2$):
\begin{equation*}
\kappa[I,X](G_i,\T_i) \leq \kappa[I,X](G_i,\T_i')
\end{equation*}
We would like to prove:
\begin{equation}
\label{eq:monotony-of-constaint-kappa}
\kappa[I,X](G,\T) \leq \kappa[I,X](G,\T')
\end{equation}

By definition of $\kappa[I,X]$, if both $H$ and $H'$ do not violate
any constraints, the function $\kappa[I,X]$ if equivalent to $\kappa$,
and equation \ref{eq:monotony-of-constaint-kappa} holds. Furthermore, if $H$
does not violate any constraint but $H'$ does equation
\ref{eq:monotony-of-constaint-kappa} holds trivially:
\begin{align*}
\kappa[I,X](G,\T)=&\kappa[I,X](G,H) \\
\leq& \kappa[I,X](G,H') = \kappa[I,X](G,\T') = \infty
\end{align*}
To complete the proof, it suffices to show that if $H$ violates some constraint,
$H'$ must violate a constraint as well, implying both their costs are $\infty$,
and equation \ref{eq:monotony-of-constaint-kappa} holds.

Notice $\kappa[I,X](G_i,\T_i) = \kappa[I,X](G_i,\induced{H}{\nodes(G_i)})$ and
$\kappa[I,X](G_i,\T'_i) = \kappa[I,X](G_i,\induced{H'}{\nodes(G_i)})$ for
$i=1,2$. We note $H_i = \induced{H}{\nodes(G_i)}$ and $H'_i =
\induced{H'}{\nodes(G_i)}$. $T_i$ and $T'_i$ are clique trees, and from the
junction-tree property $S = \nodes(G_1) \cap \nodes(G_2) \in \minseps(H) \ cap
\minseps(H')$.

Assume $H$ violates some constraint $U \in I \cap X$. If $U
\subseteq \nodes(G_i)$ (for i=1,2), $H_i$ must violate $U$ and:
\[
\kappa[I,X](G_i,H'_i) \geq \kappa[I,X](G_i,H_i) = \infty
\]

We can see some constraint is violated in $H'_i$, and therefore in $H'$. We
deduce $\kappa[I,X](G,H')=\infty$, and equation
\ref{eq:monotony-of-constaint-kappa} holds.

Otherwise, $U$ is not contained in any node of $\T$ or $\T'$ and can not be a
clique in $H$ or $H'$. If $U$ is violated in $H$ then $U \in I$, and $U$ must be
violated in $H'$ as well. 

We can conclude if $H$ violates a constraint, so does $H'$, implying the
lemma holds and the function $\kappa[I,X]$ is split monotone.
\end{proof}

\subsection{Proof of Proposition~\ref{prop:mintraings-propertds}}
\begin{repproposition}{\ref{prop:mintraings-propertds}}
\propmintraingspropertds
\end{repproposition}
\begin{proof}
  Recall from Theorem~\ref{thm:clique-tree-essentials} that the proper
  tree decompositions are precisely the cliques trees of the minimal
  triangulations. It is an easy observation that two minimal
  triangulations have disjoint sets of clique trees (since the two
  have different sets of edges). Since $\kappa$ is a bag cost, the
  cost of each of its clique trees is the same. Hence, we can
  enumerate the proper tree decompositions by increasing cost by
  enumerating the minimal triangulations by increasing cost, and for
  each minimal triangulations we enumerate its clique trees. So, we
  establish the lemma if we can enumerate with polynomial delay all of
  the clique trees of a minimal triangulation. As observed by Carmeli
  et al.~\cite{DBLP:conf/pods/CarmeliKK17}, this can be done by a
  straightforward combination of past results, as explained next.

  Jordan~\cite{Jordan} shows that a tree over the maximal cliques of a
  chordal graph $H$ is a clique tree if and only if it is a maximal
  spanning tree, where the weight of an edge between two maximal
  cliques is the cardinality of their intersection. As the number of
  maximal cliques of a chordal graph is linear in the number of nodes
  (Theorem~\ref{thm:clique-tree-essentials}), this enumeration problem
  is reduced to enumerating all maximal spanning trees, which can be
  solved in polynomial delay~\cite{DBLP:journals/ijcm/YamadaKW10}.
\end{proof}


\section{Case study}\label{app:case}
The experiments in Section~\ref{sec:exper} show that in most cases,
\algname{CKK} has a shorter delay than $\algname{RankedTriang}$,
though opposite situations exist. In this experiment, we focus on two
graphs, one from the Constraint Satisfaction Problem dataset
(\texttt{myciel5g\_3}) and one from the Object Detection dataset
(\texttt{deer\_rescaled\_3020.K15.F1.75}).
Figure~\ref{fig:case-study} shows the runtime and width of the
returned triangulations throughout the execution of the
algorithms. The horizontal axis represents time since the beginning of
the execution, in intervals of $10$ seconds. At each interval, the
number of results, along with the median and minimum widths of these
results, are reported.

Comparing Figures~\ref{fig:csp-ckk} and~\ref{fig:csp-rnk}, we 
observe that \algname{CKK} returns substantially more results for the
CSP graph. On the other hand, the results of \ckk are of higher
width. Our algorithm, \rankt, returns a small amount of results, but they are
all of optimal width. In the case of Figures~\ref{fig:obj-ckk}
and~\ref{fig:obj-rnk}, for the object-detection graph, we can see that
\algname{CKK} has a longer delay. During many intervals, \algname{CKK}
returned a single result. Here too, our algorithm returns only results
of a minimal width. Finally, we observe that in both graphs, the delay
of \rankt is far more stable than that of \algname{CKK}.

{
\begin{figure}[t]
  \centering
  \subfigure[\algname{CKK}\label{fig:csp-ckk}]{\includegraphics[width=0.235\textwidth]{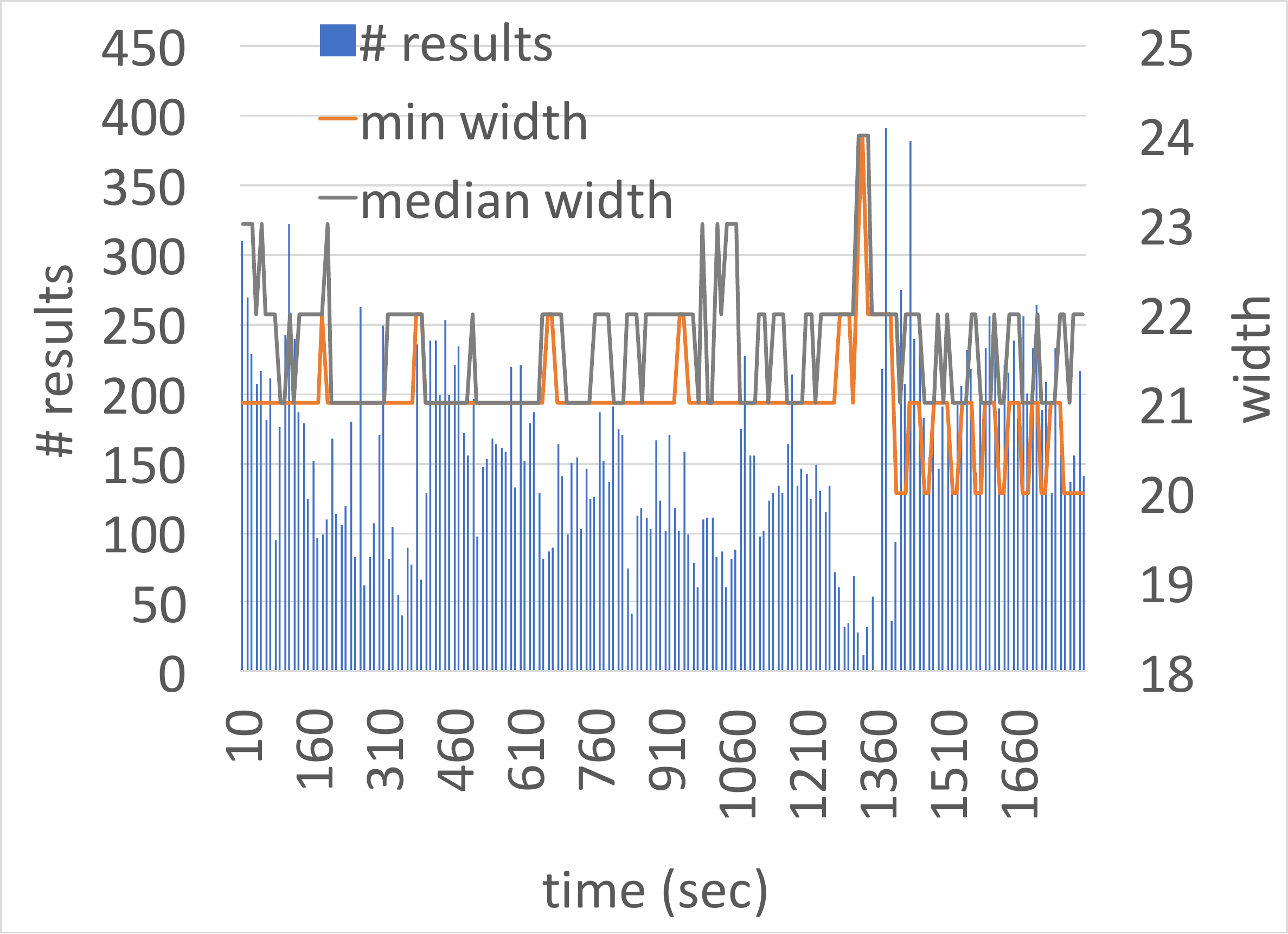}}
  \subfigure[\algname{RankedTriang}\label{fig:csp-rnk}]{\includegraphics[width=0.235\textwidth]{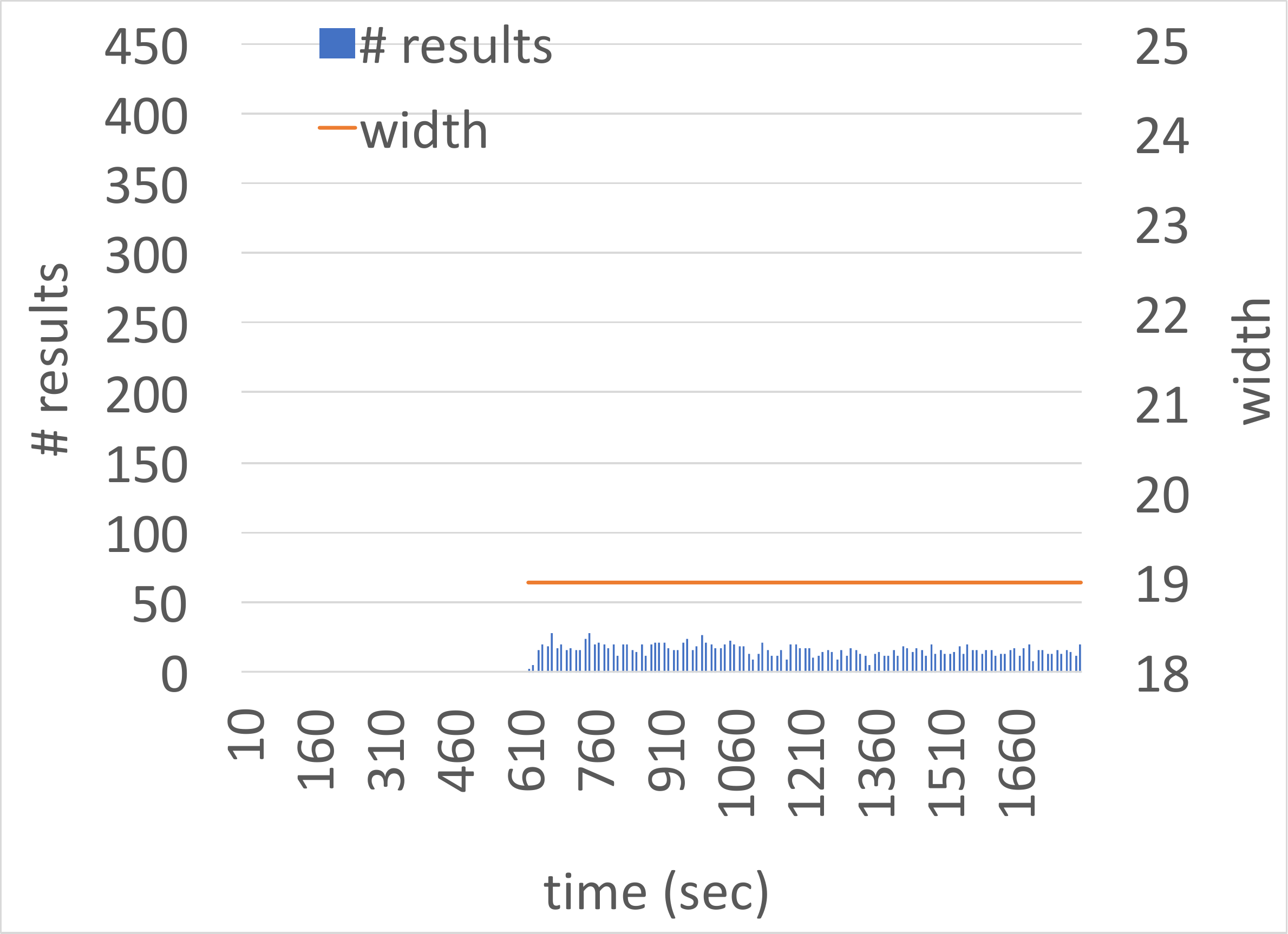}}\\
  \subfigure[\algname{CKK}\label{fig:obj-ckk}]{\includegraphics[width=0.235\textwidth]{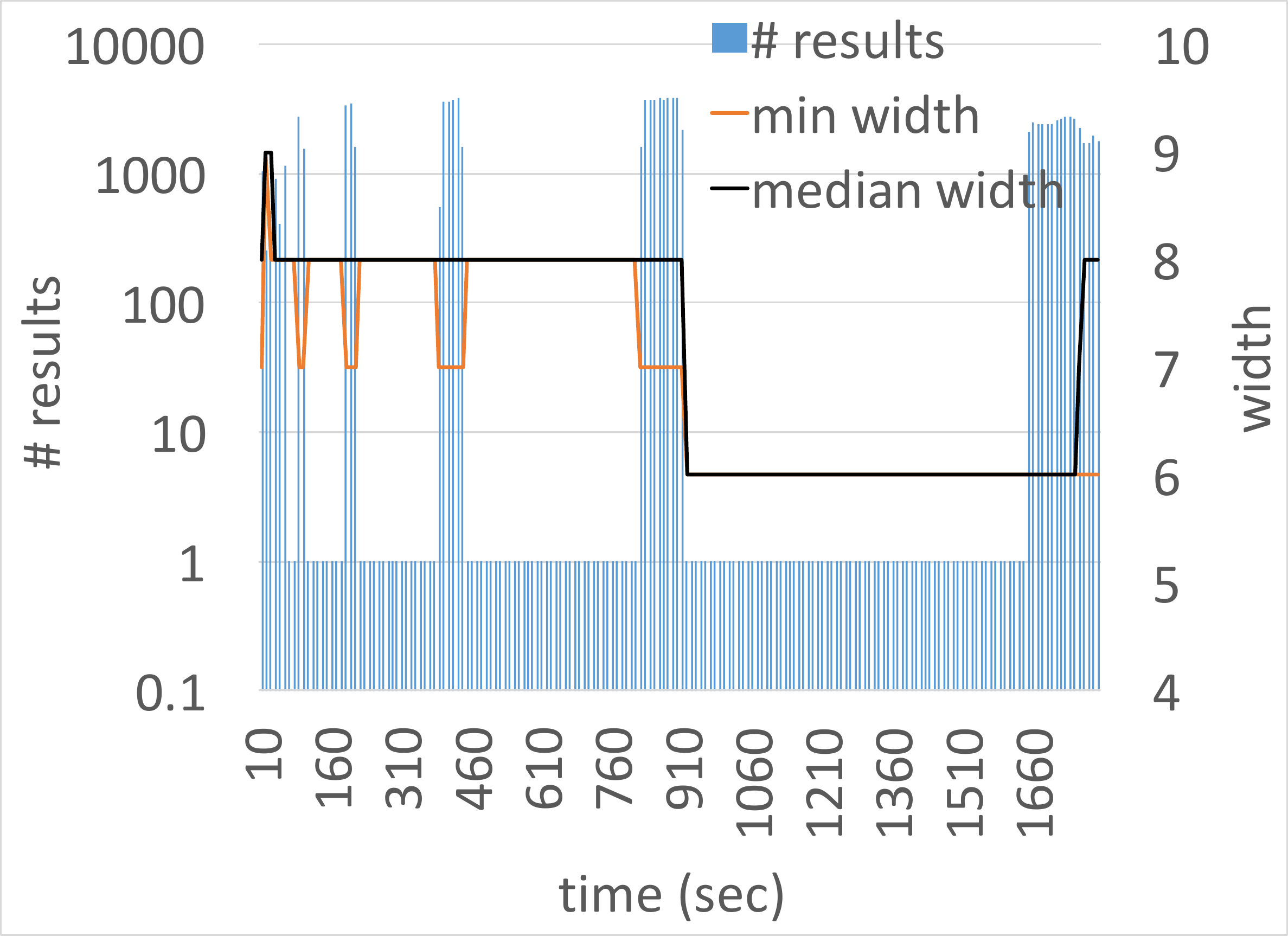}}
  \subfigure[\algname{RankedTriang}\label{fig:obj-rnk}]{\includegraphics[width=0.235\textwidth]{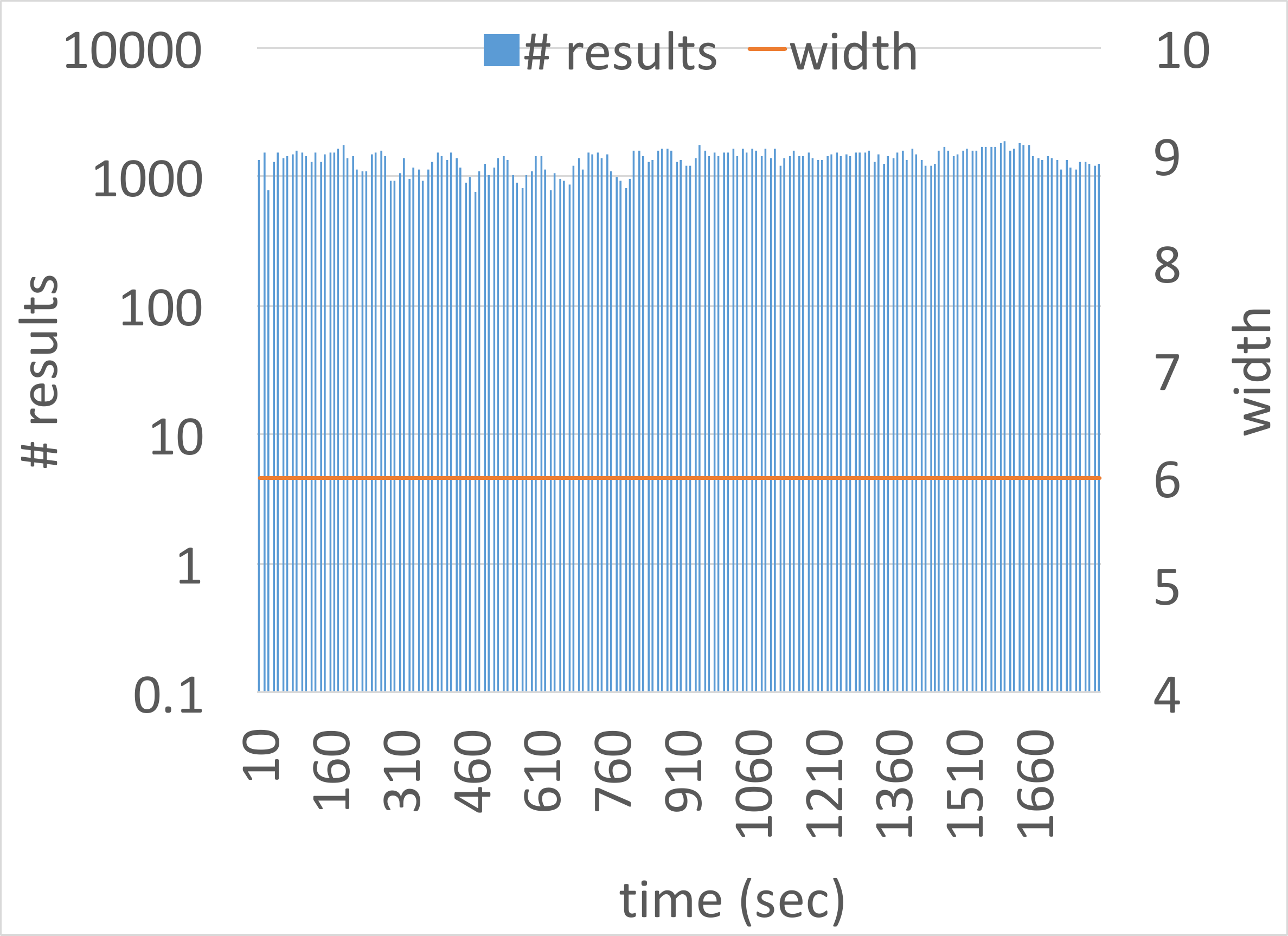}}
  \caption{\label{fig:case-study} Number of triangulations returned
    and their widths on a CSP (top) and object-detection 
    (bottom) graphs.}
\end{figure}
}

\end{document}